\def\showlabelsetlabel#1{\raise2ex\hbox{{\fontsize{2.5}{4}\selectfont #1}}}
\definecolor{notefontcolor}{rgb}{0.800781, 0.800781, 0.800781}
\definecolor{grey30}{rgb}{0.7,0.7,0.7}
\numberwithin{equation}{section}
\theoremstyle{plain}
\newcommand{\beq}{ \begin{equation}} 
\newcommand{\eeq}{\end{equation}} 
\newcommand{\bea}{\begin{aligned}}
\newcommand{\eea}{\end{aligned}}
\newcommand{\bdm}{\begin{displaymath}}
\newcommand{\edm}{\end{displaymath}}
\newcommand{\barr}{\begin{array}}
\newcommand{\earr}{\end{array}}
\newcommand{\ben}{\begin{enumerate}}
\newcommand{\een}{\end{enumerate}}
\newcommand{\bde}{\begin{description}}
\newcommand{\ede}{\end{description}}
\newtheorem{teor}{Theorem}
\newtheorem{prop}[teor]{Proposition} 
\newtheorem{lem}[teor]{Lemma}  
\newtheorem{cor}[teor]{Corollary}
\newtheorem{rem}[teor]{Remark}
\numberwithin{equation}{section}  
\numberwithin{teor}{section}
\newcommand{\R}{\mathbb{R}}
\newcommand{\E}{{\mathbb{E}}}
\DeclareMathOperator{\TAP}{TAP}
\newcommand\TP{relevant TAP}
\DeclareMathOperator{\argmax}{argmax}
\DeclareMathOperator{\On}{On}
\DeclareMathOperator{\Ann}{Ann}
\newcommand\mhattext{\hat{m}}
\newcommand\mhatdisplay{\mhattext}
\newcommand\qstar{q_*}
\newcommand\ITAP{I_{\TAP}}
\newcommand\ETAP{U}
\newcommand{\Eq}{E_{q}}
\newcommand{\qE}{q_{E}}
\newcommand{\EU}{E_{U}}
\newcommand{\qU}{q_{U}}
\newcommand{\UE}{U_{E}}
\newcommand{\Emin}{E_{\min}}
\newcommand\ETAPmax{\ETAP_{\max}}
\newcommand\ETAPmin{\ETAP_{\min}}
\newcommand\ETAPstar{\ETAP_*}
\newcommand\Estar{E_*}
\newcommand\NU{\mathcal{N}_N}
\newcommand\fTAP{f_{\TAP}}
\newcommand{\betatwosup}{\beta/\tilde{\beta}_c}
\newcommand\Uminbeta{U_{\min}}
\newcommand\HTAPg{H_{\TAP}^g}
\newcommand\HNg{g}
\newcommand\gTAP{$g$-TAP}
        \setlist[enumerate, 1]{1)}
\begin{document}

\title[Phase diagram for $p$-spin TAP energy]{Phase diagram for the TAP energy of the $p$-spin spherical mean field spin glass model}

\author{David Belius, Marius A. Schmidt}
\email{david.belius@cantab.net}
\email{M.Schmidt@mathematik.uni-frankfurt.de}
\address{Department of Mathematics and Computer Science, University of Basel, Switzerland}
\thanks{Both authors supported by SNSF grant 176918.}

\maketitle

\begin{abstract} 
We solve the Thouless-Anderson-Palmer (TAP) variational principle associated to the spherical pure $p$-spin mean field spin glass Hamiltonian and present a detailed phase diagram.

In the high temperature phase the maximum of variational principle is the annealed free energy of the model. In the low temperature phase the maximum, for which we give a formula, is strictly smaller.

The high temperature phase consists of three subphases. (1) In the first phase $m=0$ is the unique relevant TAP maximizer.  (2) In the second phase there are exponentially many TAP maximizers, but $m=0$ remains dominant. (3) In the third phase, after the so called dynamic phase transition, $m=0$ is no longer a relevant TAP maximizer, and exponentially many non-zero relevant TAP solutions add up to give the annealed free energy.

Finally in the low temperature phase a subexponential number of TAP maximizers of near-maximal TAP energy dominate.
\end{abstract}

\section{Introduction}
In the physics literature on mean field spin glass models such as the Sherrington-Kirkpatrick model the Thouless-Anderson-Palmer (TAP) equations and TAP energy play an important role \cite{TAP1, mezard1987spin,brayMetastableStatesSpin1980,dedominicisWeightedAveragesOrder1983,grossSimplestSpinGlass1984,origin1,origin2,kurchanBarriersMetastableStates1993,cugliandoloAnalyticalSolutionOffequilibrium1993,crisantiThoulessAndersonPalmerApproachSpherical1995,barratDynamicsMetastableStates1996,cavagnaFormalEquivalenceTAP2003}. In mathematics their meaning and implications are an active area of study \cite[Section 1.7]{talagrand2010mean}, \cite{chatterjeeSpinGlassesStein2010,bolthausenIterativeConstructionSolutions2014,subagGeometryGibbsMeasure2017,chenTAPFreeEnergy2018,auffingerSpinDistributionsGeneric2019,DavidNicola,auffingerThoulessAndersonPalmer2019,chenGeneralizedTAPFree2021,chenGeneralizedTAPFree2022,subagFreeEnergyLandscapes2020,arousShatteringMetastabilitySpin2021,benarousGeometryTemperatureChaos2020,zhoufanTAPFreeEnergy2021,subagFreeEnergySpherical2021,brenneckeNoteReplicaSymmetric2021,TAPHTUB}. One basic idea is that the TAP energy encodes important information about the free energy and Gibbs measure of the model. In particular, the free energy should be given by a TAP variational principle. In this article we give a detailed phase diagram for the TAP variational principle of the pure $p-$spin spherical spin glass Hamiltonian.

Let $S_{N-1} \subset \mathbb{R}^N$ be the $N-1$-dimensional unit sphere and $B_N \subset {\mathbb{R}}^N$ the $N$-dimensional unit ball. Next for any power series $\xi(x)=\sum_{p\ge 1}a_p x^p$ with non negative coefficients $a_p\ge0$ satisfying $\xi(1)<\infty$ let the Hamiltonian $H_N$ be a centered Gaussian field on $B_{N}$ with covariance
\begin{equation}\label{eq: Hamiltonian covar}
    \E[H_{N}(\sigma) H_N(\tau)] = N \xi(\sigma \cdot \tau),\quad \sigma,\tau \in B_{N}.
\end{equation}
The free energy of the model is given by
\begin{equation}\label{eq: FE}
f_N \left(\beta \right) = \frac{1}{N} \log \int_{S_{N-1}} \exp\left( \beta H_N(\sigma) \right) d\sigma ,
\end{equation}
for an inverse temperature $\beta\ge0$.
As a step towards computing the free energy
the {\emph{TAP energy}} $H_{\TAP}$ has been introduced (for the standard SK model with $\pm1$ spins by the eponymous authors in \cite{TAP1}, and for the present model with spherical spins in \cite{kurchanBarriersMetastableStates1993}). It is given by
\beq  \label{eq:HTAPdef}
H_{\TAP}(m) = \beta H_N(m) +  \frac{N}{2}\log(1-|m|^2) + N \On(|m|^2), m\in B_N^{o}, \eeq
where $B_N^{o}$ is the open unit ball in $\mathbb{R}^n$
and the third term is the so called Onsager correction
\begin{equation}\label{eq: Onsager Def}
\On(q) = 
\frac{\beta^{2}}{2}\left(\xi(1)-\xi'(q)(1-q)-\xi(q)\right).
\end{equation}
A heuristic derivation of the TAP free energy illustrating the connection with the free energy
is given in Section \ref{sec:heuristic}. Only the $m$ that satisfy certain conditions are ``physical'' and relevant for the free energy.
In the physics literature it is widely accepted that to be a relevant, $m$ must be a local maximum of $H_{\TAP}$ and it must satisfy \emph{Plefka's condition}, which reads $\beta_2(|m|^2) \le \frac{1}{\sqrt 2}$,
where
\begin{equation}\label{eq: beta2 def}.
    \beta_2(q) = \beta \sqrt{ \frac{\xi''(q)}{2}} (1-q) .
\end{equation} 
In this paper we replace Plefka's condition by a slightly stronger condition, namely $|m|^2 \in D_\beta$ where
\beq \label{eq: our plefka} D_\beta = \{q\in [0,1]:  A(q,\beta) \le 0 \} ,\eeq
and
\begin{equation}\label{eq: Adef}
    A\left(q,\beta\right)=\sup_{r\in\left(0,1\right)}\left( \beta^{2}\frac{\xi'\left(q+r\left(1-q\right)\right)(1-q)-\xi'\left(q\right)\left(1-q\right)}{r}-\frac{1}{1-r}\right).
\end{equation}
In Lemma \ref{lem:ourplefkaprofs} we show that $|m|^2 \in D_\beta$ implies that Plefka's condition $\beta_2(|m|^2) \le \frac{1}{\sqrt 2}$ is satisfied. The opposite implication is however not true. Below we further comment on this condition, which is needed to obtain a coherent phase diagram.
A critical point of $H_{\TAP}$ is called a {\emph{TAP solution}}. We refer to $m$'s which are local maxima of $H_{\TAP}$ that satisfy $|m|^2 \in D_\beta$ as \emph{\TP\ solutions} and let the \emph{complexity} of \TP\ solutions of a certain energy and certain magnitude be given by the exponential rate
\beq \label{eq: def TAP comp}
\ITAP(\ETAP)
=
\lim_{\varepsilon\downarrow 0}
\lim_{N\rightarrow\infty}
\frac{1}{N}\log 
\left| \left\{
    m \in B^{o}_{N}: 
    \begin{array}{c}
        m \mbox{ loc. max. of } H_{\TAP},  |m|^2 \in D_\beta\\
        \frac{1}{N} H_{\TAP}(m) \in [U-\varepsilon,U+\varepsilon]
    \end{array}
\right\} \right|
\in
\{-\infty\}\cup[0,\infty),
\eeq
assuming the limits exist. Defining the \emph{total TAP free energy} by
\begin{equation}\label{eq:varprinc}
\fTAP(\beta) = \sup_{\ETAP\in \R} \{ \ETAP+ \ITAP(\ETAP) \}.
\end{equation}
a basic idea of the TAP ansatz is the claim that for all $\beta \ge 0$
\beq \label{eq:conj} \lim_{N\to\infty} f_N(\beta) = \fTAP(\beta).\eeq
A heuristic argument for this claim is given in Section \ref{sec:heuristic}. Combining the present paper with \cite{subagFreeEnergySpherical2021} it can be verified ``aposteriori'' (see below for a more detailed discussion). A direct proof of \eqref{eq:conj} is the subject of active research but is beyond the scope of this paper. However the claim \eqref{eq:conj} motivates the study of the variational principle \eqref{eq:varprinc}, and in this article we do so for the pure $p$-spin models, where 
\begin{equation}\label{eq: xi def}
\xi(x) = x^p\text{ for }p\ge3.
\end{equation}
We compute $\fTAP(\beta)$ and $\ITAP$ for all $\beta$, and give a detailed phase diagram characterizing the maximizers for different $\beta$.

We are able to compute the TAP complexity $\ITAP$ since for pure $p$-spin models the Hamiltonian $H_N$ is $p$-homogenous and therefore each TAP local maximum in $B_N$ corresponds to a local maximum of the Hamiltonian $H_N(\sigma)$ on $S_{N-1}$ (in the spherical metric). The complexity of critical points of $H_N(\sigma)$ has been determined by \cite{purefstmom,subagpure}. Their results imply  that if $\mathcal{M}(\cdot)$ is the number of local maxima of $H_N$ on $S_{N-1}$ with $\frac{H_N(\sigma)}{N} \in \cdot$ we have 
\begin{equation}
\lim_{\varepsilon \downarrow 0 } \lim_{N\to\infty} \frac{1}{N}\log \mathcal{M}([E-\varepsilon,E+\varepsilon]) =  I\left(E\right)\label{eq: loc max quenched}
\end{equation}
in probability, where $I$ is a function satisfying
\begin{equation}
I\left(E\right)=\begin{cases}
>0 & \text{if } E \in [E_\infty,E_0)\\
0 & \text{ if } E=E_0\\
-\infty & \text{ otherwise },
\end{cases}
\end{equation}
for 
\begin{equation}\label{eq: Einf def}
    E_\infty = \frac{2\sqrt{p-1}}{\sqrt{p}},
\end{equation}
and where $E_0>E_\infty$ is the limiting ground state energy of the Hamiltonian $H_N$, i.e.
\begin{equation}\label{eq: ground state}
    \sup_{\sigma \in S_{N-1}} H_N(\sigma) = N E_0 + o(N).
\end{equation}
The precise definitions of $I$ and $E_0$ are given in Section \ref{sec: crit point complexity}.

We now state our results for each phase in the phase diagram. The critical temperatures are given in terms of $E_\infty$ and $E_0$. Our first result shows that the energy surface $H_{\TAP}(m)$ and $\ITAP$ undergoes a phase transition at the {\emph{complexity threshold}} $\beta_c$. For $\beta<\beta_c$ there is only one \TP\ solution (and even only one TAP solution), namely $m=0$. For $\beta>\beta_c$ there are exponentially many \TP\ solutions.  To define $\beta_c$ we let
\begin{equation}\label{eq: beta tilde c def}
\tilde{\beta}_{c}=\frac{1}{2}\sqrt{\frac{p^{p-1}}{(p-1)(p-2)^{p-2}}},
\end{equation}
as well as
\begin{equation}\label{eq: rbardef}
\bar{r}=\frac{E_{0}}{E_{\infty}} - \sqrt{\left(\frac{E_{0}}{E_{\infty}}\right)^{2}-1},
\end{equation}
(note that $\overline{r}<1$) and set
\begin{equation}\label{eq: betacdef}
\beta_c = \bar{r}\tilde{\beta}_c.
\end{equation}
We then have the following.
\begin{teor}[Complexity threshold]\label{thm:comp}
For $\beta < \beta_c$ there are no \TP\ solutions except $m=0$, i.e.
\begin{equation}\label{eq:onlyzerosol}
\lim_{N \to \infty} P\left( m \text{ is a local maximum of }H_{\TAP}\text{ s.t. } |m|^2\in D_\beta \iff m=0 \right) = 1.
\end{equation}
In fact, a fortiori,
\begin{equation}\label{eq:onlyzerosol no plefka}
\lim_{N \to \infty} P\left( m\in B_N \text{ is a critical point of }H_{\TAP} \iff m=0 \right) = 1.
\end{equation}
In particular
\beq \label{eq:Itapbeforebetac}\ITAP(\ETAP) =
\begin{cases}
0 & \text{ if } \ETAP=H_{\TAP}(0)=\frac{\beta^2}{2},\\
-\infty & otherwise,
\end{cases}
\eeq
and
\begin{equation}\label{eq: FE very high temp}
    \fTAP(\beta) = \frac{\beta^2}{2}.
\end{equation}
For $\beta > \beta_c$ there are exponentially many \TP\ solutions, i.e.
\begin{equation}\label{eq: exp many TAP sol}
    \sup_{\ETAP} \ITAP(\ETAP)  > 0.
\end{equation}
\end{teor}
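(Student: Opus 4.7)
The pure $p$-spin covariance $\xi(x)=x^p$ makes $H_N$ a $p$-homogeneous function on $\mathbb{R}^N$, so writing $m = r\sigma$ with $\sigma \in S_{N-1}$, $r \in (0,1)$ one has
\[
H_{\TAP}(r\sigma) = \beta r^p H_N(\sigma) + \tfrac{N}{2}\log(1-r^2) + N\,\On(r^2).
\]
Decomposing $\nabla H_{\TAP}(r\sigma)$ into components tangent to the sphere of radius $r$ and radial to it, and using Euler's identity $\nabla H_N(r\sigma) = r^{p-1}\nabla H_N(\sigma)$, one sees that $m \neq 0$ is a critical point of $H_{\TAP}$ iff (i) $\sigma$ is a critical point of $H_N|_{S_{N-1}}$, and (ii) $r$ solves the scalar radial equation which, after dividing by $Nr > 0$ and rearranging, reads $h(r) = E$ with $E = H_N(\sigma)/N$ and
\[
h(r) \;=\; \frac{1}{p\beta\,r^{p-2}(1-r^2)} + \beta(p-1)\,r^{p-2}(1-r^2).
\]

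Write $h = A/x + Bx$ with $x = r^{p-2}(1-r^2)$ ranging over $(0, x_{\max}]$, where $x_{\max} = (2/p)\bigl((p-2)/p\bigr)^{(p-2)/2}$ is attained at $r^2 = (p-2)/p$, and $A = 1/(p\beta)$, $B = \beta(p-1)$. AM-GM gives $h \ge 2\sqrt{AB} = 2\sqrt{(p-1)/p} = E_\infty$; the interior minimizer $x_\ast = \sqrt{A/B}$ lies in $(0,x_{\max}]$ iff $\beta \ge \tilde{\beta}_c$ (a direct check, using $\tilde{\beta}_c\, x_{\max} = 1/\sqrt{p(p-1)}$). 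For $\beta < \tilde{\beta}_c$ the minimum of $h$ on $r \in (0,1)$ is $A/x_{\max} + B x_{\max}$, and setting this equal to $E_0$ yields, after substituting the identities just mentioned and $E_\infty = 2\sqrt{(p-1)/p}$, the quadratic $u^2 - 2(E_0/E_\infty)u + 1 = 0$ in $u = \beta/\tilde{\beta}_c$, whose smaller root is $\bar{r}$. Consequently $\min_{r \in (0,1)} h(r,\beta) > E_0$ iff $\beta < \beta_c$, and $\min_{r \in (0,1)} h(r,\beta) < E_0$ iff $\beta > \beta_c$.

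For $\beta < \beta_c$: \eqref{eq: ground state} and the symmetry of the Gaussian law of $H_N$ give $\sup_\sigma |H_N(\sigma)/N| \le E_0 + o(1)$ with high probability, so no critical point $\sigma$ of $H_N|_{S_{N-1}}$ can satisfy $h(r) = H_N(\sigma)/N$ for any $r \in (0,1)$, proving \eqref{eq:onlyzerosol no plefka}. A direct Hessian computation, using $p \ge 3$ so that $H_N$ vanishes to order $\ge 3$ at $0$, shows $m=0$ is a strict local maximum of $H_{\TAP}$; a direct bound on $A(0,\beta)$ shows $0 \in D_\beta$ in this range, yielding \eqref{eq:onlyzerosol}; and \eqref{eq:Itapbeforebetac}, \eqref{eq: FE very high temp} follow from $H_{\TAP}(0)/N = \beta^2/2$. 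For $\beta > \beta_c$: writing $h(r) = E$ as a quadratic in $y := \beta r^{p-2}(1-r^2)$ gives two branches $y_\pm(E) = (E \pm \sqrt{E^2 - E_\infty^2})/(2(p-1))$ for $E \ge E_\infty$; on the $y_-$ branch one has $y_- \le 1/\sqrt{p(p-1)}$ with equality only at $E = E_\infty$, which is Plefka's condition, and a second-derivative check identifies the corresponding $r$ as a strict local maximum of the radial function. Since $\min_r h(r,\beta) < E_0$, the set of $E$'s for which the $y_-$ branch yields an $r \in (0,1)$ is a non-empty open interval $(\Estar(\beta), E_0) \subset (E_\infty, E_0)$, on which $I(E) > 0$; by \eqref{eq: loc max quenched} each such $E$ carries exponentially many sphere-local-maxima of $H_N$, producing exponentially many relevant TAP solutions and proving \eqref{eq: exp many TAP sol}.

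The main obstacle is the promotion from Plefka's condition to the strictly stronger requirement $|m|^2 \in D_\beta$ in the low-temperature direction, which is handled by invoking Lemma~\ref{lem:ourplefkaprofs} together with a separate monotonicity check that $A(r^2,\beta) \le 0$ along the $y_-$ branch. A secondary technical point is the quadratic manipulation identifying $\beta_c$ as $\bar{r}\,\tilde{\beta}_c$ via the condition $h_{\min}(\beta_c) = E_0$. The $p$-homogeneity reduction itself, while crucial, is clean once the tangential/radial decomposition is set up, and the high-temperature half of the theorem reduces to controlling the ground-state tail.
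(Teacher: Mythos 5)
Your proposal is correct and follows the same underlying strategy as the paper's proof: use $p$-homogeneity to reduce criticality of $H_{\TAP}$ to (i) criticality of $H_N$ on $S_{N-1}$ plus (ii) a scalar radial equation, observe that the radial equation forces $H_N(\sigma)/N$ to be at least some $\beta$-dependent threshold, and identify $\beta_c$ as the point where that threshold equals $E_0$ (then invoke \eqref{eq: ground state}); for $\beta>\beta_c$, feed \eqref{eq: loc max quenched} through the radial map. The difference is in the algebraic packaging: you work directly with the radial function $h(r)=A/x+Bx$, $x=r^{p-2}(1-r^2)$, using AM--GM and the boundary value at $x_{\max}$, whereas the paper works with $q=|m|^2$, the function $f(E,q)$, and the auxiliary quantities $\beta_2(q)$ and $r_{\pm}(E/E_\infty)$; these are equivalent reparametrizations and your route is arguably more self-contained (it does not rely on the paper's Section~\ref{sec:det} machinery, in particular Theorem~\ref{thm: det char f}). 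Two small things you should make fully explicit in the $\beta>\beta_c$ direction: (a) for a given $E\in(\Emin,E_0)$ the $y_-$ branch generally gives \emph{two} radii $r$ with $\beta r^{p-2}(1-r^2)=y_-(E)$, one on each side of $r^2=\frac{p-2}{p}$, and it is the one with $r^2>\frac{p-2}{p}$ that is a local maximum of the radial function, which you must select before Lemma~\ref{lem:ourplefkaprofs}~1) can be used to upgrade Plefka's inequality to $r^2\in D_\beta$; and (b) the conclusion $\sup_U I_{\TAP}(U)>0$ requires pinning the exponentially many relevant TAP solutions to a common TAP energy window, which follows because the map $E\mapsto f(E,q_E(E))$ is continuous, so taking $E$ in a small interval $[E_0-\delta,E_0)$ concentrates the corresponding TAP energies. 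Neither of these is a gap in the idea, just points to spell out.
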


The next theorem describes $\ITAP$ in terms of $I$ when $\beta \ge \beta_c$.  To formulate the result let
\beq \label{Ndef} \NU(\mathcal{U},\mathcal{V},Q) = \left|\{m \in B_N: m \mbox{ loc. max. of } H_{\TAP}, \frac{1}{N} H_{\TAP}(m) \in \mathcal{U} , \frac{1}{N}H_N(m) \in \mathcal{V}, |m|^2 \in Q\}\right|,
\eeq
be the number of TAP solutions with given energy, given energy of the Hamiltonian $H_N(m)$ and given squared magnitude and the extended complexity
\begin{equation}\label{eq: ITAP extended}
I_{\TAP}(U,V,q) = \lim_{\varepsilon \downarrow 0 } \lim_{N\to\infty} \frac{1}{N} \log \mathcal{N}_N( [U-\varepsilon, U+\varepsilon], [V-\varepsilon,V+\varepsilon],[q-\varepsilon,q+\varepsilon]).
\end{equation}
That this limit exists is part of our results. Note that
$$ I_{\TAP}(U) = \sup_{V\in \mathbb{R},q\in D_\beta} I_{\TAP}(U,V,q).$$
For each $V,q$ there can be at most one $U$ such that $I_{\TAP}(U,V,q)\ne -\infty$, since $H_{\TAP}(m)$ is a function of $H_N(m)$ and $|m|^2$ (see \eqref{eq:HTAPdef}). For each $U$ there can in principle be several $(V,q)$ such that $I_{\TAP}(U,V,q) \ne -\infty$, but our analysis implies that this is never the case.
\begin{teor}\label{thm: TAP comp intro}
    For $\beta \ge \beta_c$ there exist $U_{\min},U_{\max}\in \mathbb{R}$ such that $U_{\min}<U_{\max}\le\frac{\beta^2}{2}$, with equality in the latter inequality only if $\beta=\beta_d$, and functions $q_U: [U_{\min},\infty) \to D_\beta\cap[ \frac{p-2}{p},1]$ and $E_U: [U_{\min},\infty) \to [E_{\infty},\infty)$ such that
    $$
    I_{{\rm TAP}}\left(U,V,q\right)=\begin{cases}
I\left(E\right) & \text{ if }U_{\min}\le U < U_{\max}\text{ and }q=q_U(U), E=E_U(U)\\
0 & \text{ if }U = U_{\max}\text{ and }q=q_U(U_{\max}), E=E_U(U_{\max})\\
0 & \text{\,if }U=\frac{\beta^{2}}{2}, V=0, q=0\text{ and }\beta\le\beta_{d}\\
-\infty & \text{otherwise}.
\end{cases}
$$
The functions $q_U,E_U$ are strictly increasing, and $E_U(U_{\max})=E_0$.
\end{teor}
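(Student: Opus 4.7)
The plan is to exploit the $p$-homogeneity $H_N(tm)=t^p H_N(m)$ of the pure $p$-spin Hamiltonian to reduce the analysis of TAP local maxima to a radial-plus-spherical decomposition resting on the complexity \eqref{eq: loc max quenched} of $H_N$ on $S_{N-1}$. Writing $m=\sqrt{q}\sigma$ with $\sigma\in S_{N-1}$ and $q=|m|^2\in(0,1)$, homogeneity gives
\[
\tfrac{1}{N} H_{\TAP}(\sqrt{q}\sigma) = \beta q^{p/2}\tfrac{H_N(\sigma)}{N} + \tfrac{1}{2}\log(1-q) + \On(q).
\]
I would first show that every TAP critical point $m\ne 0$ has the form $m=\sqrt{q}\sigma$ with $\sigma$ a critical point of $H_N|_{S_{N-1}}$ (forced by the TAP gradient equation, which demands $\nabla H_N(m)\parallel m$) and that the Hessian of $H_{\TAP}$ block-decomposes at such a point into the radial direction along $m$ and the tangent space $T_\sigma S_{N-1}$. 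The deterministic part of the Hessian decomposes automatically since it depends only on $|m|^2$; the random part does so via the Euler identity $\nabla^2 H_N(\sigma)\sigma\cdot v=0$ for $v\in T_\sigma S_{N-1}$, obtained by differentiating $\nabla H_N(x)\cdot x = pH_N(x)$. Substituting the first-order radial equation into the tangential block then reduces the tangential TAP local-max condition exactly to the Lagrangian condition that $\sigma$ be a local max of $H_N|_{S_{N-1}}$ with energy $NE$, so that a nonzero TAP local max corresponds to a pair (local max $\sigma$ of $H_N|_{S_{N-1}}$ with energy $NE$; $q$ a local max of the deterministic radial function $F_E(q'):=\beta(q')^{p/2}E+\tfrac12\log(1-q')+\On(q')$).

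The radial problem is now deterministic. The first-order equation $F_E'(q)=0$, using $\On'(q)=-\tfrac12\beta^2 p(p-1)q^{p-2}(1-q)$, yields
\[
E = E_q(q,\beta) := \frac{1}{\beta p\,q^{p/2-1}(1-q)} + \beta(p-1)q^{p/2-1}(1-q),
\]
and substituting back produces the factorisation
\[
F_E''(q) = \frac{pq-(p-2)}{4q(1-q)^2}\bigl[\beta^2 p(p-1)q^{p-2}(1-q)^2 - 1\bigr].
\]
Thus $F_E''(q)\le 0$ splits into the main branch $\{q\ge(p-2)/p\}\cap\{\text{Plefka}\}$ and an opposite-sign branch $\{q<(p-2)/p\}\cap\{\neg\text{Plefka}\}$; the latter is ruled out by a comparison using the explicit form of $E_q$ together with the constraint $E\in[E_\infty,E_0]$ needed for $\sigma$ to be a spherical local max. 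For pure $p$-spin a direct analysis of the supremum defining $A(q,\beta)$ in \eqref{eq: Adef} shows it is attained as $r\downarrow 0$ on $q\ge(p-2)/p$, so $D_\beta\cap[(p-2)/p,1]$ coincides with the Plefka set on this interval. Setting $y(q):=q^{p/2-1}(1-q)$, which peaks at $q=(p-2)/p$, AM-GM gives $E_q=\frac{1}{\beta p y}+\beta(p-1)y\ge E_\infty$ with equality at $y=y^\ast:=1/(\beta\sqrt{p(p-1)})$; Plefka translates to $y\le y^\ast$. On this admissible sub-interval of $[(p-2)/p,1]$, $E_q(\cdot,\beta)$ is strictly increasing with range $[E_\infty,\infty)$, and $U_q(q):=F_{E_q(q,\beta)}(q)$ is likewise strictly increasing (by direct differentiation in closed form), yielding strictly increasing inverses $q_U=U_q^{-1}$ and $E_U=E_q(q_U(\cdot),\beta)$.

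The complexity transfer follows from the bijection together with the deterministic rigidity $q=q_U(U)$, $V=q_U(U)^{p/2}E_U(U)$, $E=E_U(U)$: at a compatible triple $(U,V,q)$,
\[
\NU([U-\varepsilon,U+\varepsilon],[V-\varepsilon,V+\varepsilon],[q-\varepsilon,q+\varepsilon]) = \mathcal{M}([E_U(U)-\varepsilon',E_U(U)+\varepsilon'])
\]
up to $o(1)$ errors absorbed by the $\varepsilon\downarrow 0$ limit, so \eqref{eq: loc max quenched} yields $I_{\TAP}(U,V,q)=I(E_U(U))$; incompatible triples admit no TAP local maxima, giving $-\infty$. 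The upper endpoint $U_{\max}$ is pinned by $E_U(U_{\max})=E_0$ (beyond which $I=-\infty$), and the bound $U_{\max}\le\beta^2/2$ with equality only at $\beta=\beta_d$ follows by explicit comparison of $U_q$ at this endpoint with $\beta^2/2=H_{\TAP}(0)/N$, which serves as the definition of $\beta_d$. The isolated entry at $(\beta^2/2,0,0)$ for $\beta\le\beta_d$ accounts for the trivial critical point $m=0$, whose radial Hessian flips sign precisely at $\beta_d$.

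The principal obstacle is the factorisation of $F_E''(q)$ at the critical point and the identification of the admissible set with $D_\beta\cap[(p-2)/p,1]$: both the threshold $q\ge(p-2)/p$ (arising from the vertex of $y(q)$) and the reduction of $A(q,\beta)\le 0$ to Plefka on this range (specific to pure $p$-spin) must be checked carefully. A secondary subtlety is ruling out the opposite-sign branch $\{q<(p-2)/p\}\cap\{\neg\text{Plefka}\}$ from contributing to the extended complexity; this requires combining the explicit form of $E_q$ on the lower branch with the pure $p$-spin ground state $E_0$ to show the corresponding $E$-values fall outside $[E_\infty,E_0]$, so that no matching spherical local maxima of $H_N$ exist.
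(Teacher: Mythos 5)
Your decomposition via $p$-homogeneity into a spherical local-max problem for $H_N|_{S_{N-1}}$ plus a deterministic radial problem for $q\mapsto f(E,q)$, the critical-point equation giving $E_q(q)$, the Hessian factorisation $\partial_q^2 f\propto (pq-(p-2))(2\beta_2(q)^2-1)$, and the complexity transfer $\ITAP(U,V,q)=I(E_U(U))$ via the bijection $U\mapsto(E_U,V_U,q_U)$ all match the paper's structure (Lemma~\ref{lem: det TAP sol radial sym}, Theorem~\ref{thm: det char f}, Theorem~\ref{thm:TAP comp}). However, your proposed mechanism for ruling out the local maxima on $\{q<(p-2)/p\}\cap\{\neg\text{Plefka}\}$ is wrong. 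You claim their $E$-values fall outside $[E_\infty,E_0]$. They do not: such a local maximum satisfies $\sqrt 2\beta_2(q)=r_+(E/E_\infty)$ with $E/E_\infty=\tfrac12(w+1/w)$ for $w=\sqrt 2\beta_2(q)>1$; as $q$ sweeps the lower branch, $w$ ranges over $(1,\beta/\tilde\beta_c]$, so for $\beta>\tilde\beta_c$ there are always lower-branch points with $E$ arbitrarily close to (and strictly above) $E_\infty$, hence inside $(E_\infty,E_0)$, where spherical local maxima \emph{do} exist in exponential number. The correct exclusion is not energetic but structural: these points have $\beta_2(q)>1/\sqrt 2$, and by Lemma~\ref{lem:ourplefkaprofs}~2) (the easy $r\downarrow 0$ direction, which you already invoke for the upper branch) they satisfy $q\notin D_\beta$, so they are not \TP\ solutions and are excluded by the definition of the relevant complexity. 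Your Plefka$\Leftrightarrow D_\beta$ discussion needs to cover $q<(p-2)/p$ as well, which is exactly what supplies the missing exclusion.

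Two further inaccuracies. The bound $U_{\max}\le\beta^2/2$ with equality only at $\beta_d$ is not an ``explicit comparison'' and certainly does not ``serve as the definition of $\beta_d$'' ($\beta_d$ is defined by $0\in D_\beta\iff\beta\le\beta_d$, i.e.\ by the replica-stability condition $A(0,\beta)\le0$). The paper explicitly defers this inequality out of Section~\ref{computeI}; it follows from the full solution of the variational problem in Section~\ref{optimize}, namely the identity $\sup_{E\ge E_\infty,\,q\ge q_{\min}}\{f(E,q)+I_{\Ann}(E)\}=\beta^2/2$ (Lemma~\ref{lem:solveopti}) for $\beta>\beta_d$ and the strict inequality of Lemma~\ref{lem:solveopti2} for $\beta\le\beta_d$, combined with the observation that $U_{\max}=f(E_0,q_E(E_0))$ drops $I_{\Ann}(E_0)=0$ from the sum and restricts $E$ to $[E_\infty,E_0]$. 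Finally, your claim that the ``radial Hessian'' of $H_{\TAP}$ at $m=0$ ``flips sign precisely at $\beta_d$'' is false: for $p\ge 3$ the radial derivative at $q=0$ equals $-1/2$ independently of $\beta$, so $m=0$ is always a local maximum (Lemma~\ref{lem: det TAP sol radial sym}~2)); what changes at $\beta_d$ is only whether $0\in D_\beta$, i.e.\ whether this local maximum is relevant.
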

A more complete but lengthier specification of $I_{\TAP}$ when $\beta \ge \beta_c$ is given in Section \ref{computeI}, and formulas for $U_{\min}$ and $U_{\max}$ are given in Section \ref{sec: umax umin}. Note that the theorem implies that when $\beta \ge \beta_c$
\begin{equation}
    \label{eq: def E TAP max}
    \frac{1}{N}\sup_{m\ne0\text{ is \TP\ sol.}}H_{\TAP}\left(m\right) \to U_{\max},
\end{equation}
\begin{equation}
\label{eq: def E TAP min}
    \frac{1}{N}\inf_{m\ne0\text{ is \TP\ sol.}}H_{\TAP}\left(m\right) \to U_{\min}.
\end{equation}
We exhibit two further phase transitions of \eqref{eq:varprinc} at the inverse temperatures
\begin{equation}\label{eq: beta d def}
    \beta_d = \sqrt{\frac{(p-1)^{p-1}}{p(p-2)^{p-2}}}
\end{equation}
corresponding to the dynamic phase transition of the spin glass model and
 \beq \label{betasdef}\beta_s = \sqrt{\frac{(p-1)^{p-1}}{p \bar{r}^2(p-1-\bar{r}^2)^{p-2}}}
 \eeq
which corresponds to the static phase transition. Using $\bar{r}<1$ one can check that indeed 
\begin{equation}\label{eq: order of betas}
    \beta_c < \tilde{\beta}_c < \beta_d < \beta_s \text{ for all }p\ge3.
\end{equation}
To formulate the results we let
\beq \label{eq: def E TAP star}
(\ETAPstar,V_*,q_*) = \underset{ U\in\R, V\in\R, q \in D_\beta}{\argmax}\left\{ \ETAP^{}+\ITAP(\ETAP,V,q)\right\}^{},
\eeq 
when the $\argmax$ is well-defined. When it is well-defined, $U_*$ is the maximizer in \eqref{eq:varprinc}, and $q_*$ is the  squared magnitude and $V_*$ the Hamiltonian energy of TAP solutions with energy $U_*$.  Note that Theorem \ref{thm:comp} implies that $\qstar = 0$, $V_*=0$ and $\ETAPstar = H_{\TAP}(0) = \frac{\beta^2}{2}$  for $\beta < \beta_c$.
The next theorem shows that while there are exponentially many \TP\ solutions for $\beta \in (\beta_c,\beta_d)$, the behavior of $\qstar,V_*$ and $\ETAPstar$ remains the same up to $\beta_d$, i.e. the maximizer in the variational principle \eqref{eq:varprinc} still corresponds to the \TP\ solution $m=0$. 

\begin{teor}[Phase of static and dynamic high temperature]\label{thm:sndthrm}If $\beta \le \beta_d$ then \eqref{eq: def E TAP star} is well defined and
\begin{equation}
\label{eq: qstar Ustar entropy high temp}
\mbox{(a) }\qstar = 0,
\quad \mbox{(b) }V_*=0\quad \mbox{(c) }\ETAPstar = H_{\TAP} (0) = \frac{\beta^2}{2} > \ETAPmax
\quad
\mbox{(d) }\ITAP(\ETAPstar,V_*,q_*) = 0,
\eeq
and therefore
\begin{equation}\label{eq: FE very high temp 2}
\fTAP(\beta) = \frac{\beta^2}{2}>U_{\max},
\end{equation}
(where we set $U_{\max}=-\infty$ for $\beta<\beta_c$).
\end{teor}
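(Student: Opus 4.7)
For $\beta<\beta_c$ the claim is immediate from Theorem \ref{thm:comp}: the only point in the support of $I_{\TAP}(\cdot,\cdot,\cdot)$ is $(U,V,q)=(\frac{\beta^2}{2},0,0)$ with entropy $0$, so the argmax in \eqref{eq: def E TAP star} is trivially this triple, giving (a)--(d); the convention $U_{\max}=-\infty$ makes (c) vacuous.

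For $\beta_c\le\beta\le\beta_d$, Theorem \ref{thm: TAP comp intro} describes the support of $I_{\TAP}(\cdot,\cdot,\cdot)$ as the isolated zero-solution point $(\frac{\beta^2}{2},0,0)$, contributing $\frac{\beta^2}{2}$ to the objective of \eqref{eq: def E TAP star}, together with the one-parameter curve $\{(U,E_U(U)q_U(U)^{p/2},q_U(U)):U\in[U_{\min},U_{\max}]\}$, contributing $U+I(E_U(U))$. Parts (a)--(d) therefore reduce to proving
\[
\sup_{U\in[U_{\min},U_{\max}]}\bigl(U+I(E_U(U))\bigr) < \frac{\beta^2}{2},
\]
strictly for $\beta<\beta_d$ (the boundary case $\beta=\beta_d$ being degenerate, with $U_{\max}=\frac{\beta^2}{2}$ and $I(E_0)=0$ matching from above). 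At the endpoint $U=U_{\max}$ one has $I(E_U(U_{\max}))=I(E_0)=0$ and $U_{\max}<\frac{\beta^2}{2}$ by Theorem \ref{thm: TAP comp intro}, so the endpoint is under control; the substantive case is the interior, where $E_U(U)<E_0$ and $I(E_U(U))>0$.

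The plan for the interior is to parametrize the nonzero TAP curve by the squared magnitude $q$ and reduce to a single-variable calculus problem. By $p$-homogeneity of $H_N$, a nonzero TAP solution has the form $m=\sqrt q\,\sigma$ for a spherical critical point $\sigma$ of $H_N$ at energy $E$, giving $V=q^{p/2}E$. The radial TAP equation $\partial_r H_{\TAP}(r\sigma)\vert_{r=\sqrt q}=0$ then forces $E$ to be an explicit function $E(q)$, and substitution into
\[
U(q)=\beta q^{p/2}E(q)+\tfrac{1}{2}\log(1-q)+\On(q)
\]
together with the closed-form complexity $I$ from \cite{purefstmom,subagpure} (cf.~Section \ref{sec: crit point complexity}) produces $F(q):=U(q)+I(E(q))$ as an explicit function of $q$ on a subinterval of $D_\beta\cap[\frac{p-2}{p},1]$. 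The main obstacle is the resulting algebraic inequality $F(q)<\frac{\beta^2}{2}$ for $\beta<\beta_d$: one computes $F'(q)$ and checks that any interior stationary point $q_0$ with $F(q_0)=\frac{\beta^2}{2}$ forces $\beta$ to equal the value \eqref{eq: beta d def} by direct identification, so that $F<\frac{\beta^2}{2}$ throughout the interior for $\beta<\beta_d$. Given this strict separation, the argmax is pinned to $(\frac{\beta^2}{2},0,0)$, which yields (a)--(d) and the formula \eqref{eq: FE very high temp 2}.
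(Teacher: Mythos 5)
Your opening observation for $\beta<\beta_c$ is correct, and your reformulation for $\beta\in[\beta_c,\beta_d]$ (reduce to showing $\sup_{U\in[U_{\min},U_{\max}]}\bigl(U+I(E_U(U))\bigr)<\frac{\beta^2}{2}$, i.e.\ $\sup_{q}F(q)<\frac{\beta^2}{2}$ with $F(q)=f(E_q(q),q)+I(E_q(q))$) is the right place to start and is consistent with Lemma \ref{lem:reformu}. But the way you propose to close this out has two genuine problems.

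First, your treatment of the endpoint $U=U_{\max}$ is circular. You invoke $U_{\max}<\frac{\beta^2}{2}$ as a given from Theorem \ref{thm: TAP comp intro}, but the paper explicitly flags that precisely this inequality is \emph{not} a consequence of Theorem \ref{thm:TAP comp} and is deferred to the later analysis. In fact the statement \eqref{eq: FE very high temp 2} that you are trying to prove asserts $\frac{\beta^2}{2}>U_{\max}$; since $I_{\TAP}(U_{\max},\cdot,\cdot)=0$, the bound $U_{\max}<\frac{\beta^2}{2}$ in the range $\beta\le\beta_d$ is exactly what the present theorem establishes, so it cannot be assumed.

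Second, the interior argument is not a proof and its logical structure is defective. You propose to show that ``any interior stationary point $q_0$ with $F(q_0)=\frac{\beta^2}{2}$ forces $\beta=\beta_d$'', and then conclude $F<\frac{\beta^2}{2}$ throughout. That inference does not follow: the absence of a stationary point \emph{at level} $\frac{\beta^2}{2}$ does not preclude interior values exceeding $\frac{\beta^2}{2}$, nor does it control the boundary, and you offer no continuity-in-$\beta$ or monotonicity-in-$E$ step to bridge the gap. (For what it's worth, the asserted implication is also delicate as stated: for $\beta>\beta_d$ the interior stationary point $(\tilde E,\tilde q)$ \emph{does} have $F=\frac{\beta^2}{2}$, see Lemma \ref{lem: val at max}, so ``$F(q_0)=\frac{\beta^2}{2}\Rightarrow\beta=\beta_d$'' is not a clean characterization.) The paper's argument is structurally different and actually closes the gap: it first relaxes $I$ to the smooth annealed $I_{\Ann}$, then shows (Lemma \ref{lem: crit eq sol}, Lemma \ref{lem: dyn HT maximizer}) that for $\beta<\beta_d$ the function $E\mapsto f(E,q_E(E))+I_{\Ann}(E)$ has strictly negative derivative on all of $[E_{\min},\infty)$ — so there are \emph{no} interior critical points and the sup is pinned to the left endpoint $E=E_{\min}$ — and then verifies by a direct (if lengthy) computation (Lemma \ref{lem:solveopti2}) that $f(E_{\min},q_{\min})+I_{\Ann}(E_{\min})\le\frac{\beta^2}{2}$ with equality only at $\beta=\beta_d$. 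Your plan contains neither the monotonicity step nor the left-endpoint computation, and those are where all the work lies.
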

For $\beta >\beta_d$, we no longer have $0\in D_\beta$, so $m=0$ is no longer a \TP\ solution according to our definition.
However our next result shows that in $(\beta_d,\beta_s)$, it remains true that $\fTAP(\beta) = \frac{\beta^2}{2}$ but this value is achieved in a different way, i.e. $\qstar$, $V_*$, $\ETAPstar$, $\ITAP(\ETAPstar,V_*,q_*)$ are all given by different formulas.
Let
\begin{equation}\label{eq: hTAP def} 
h_{\TAP}(E,q) = \beta E + \frac{1}{2}\log(1-q) + \On(q),
\end{equation}
so that
\beq \label{eq: HTAP from hTAP}
H_{\TAP}(m) = \frac{1}{N} h_{\TAP}\left(\frac{H_N(m)}{N}, |m|^2\right).
\eeq
\begin{teor}[Phase of dynamic low temperature, static high temperature]\label{thm:dynLTstatHT}
For $\beta \in (\beta_d,\beta_s)$ the quantity \eqref{eq: def E TAP star} is well defined and it holds that $\qstar$ is the unique solution of 
\begin{equation}\label{eq: qstar dyn low temp}  (1 - q) q^{p-2} = \frac{1}{p\beta^2} \text{ in } \left[\frac{p-2}{p-1},1\right).
\end{equation}
Furthermore 
\beq \label{eq: Estar dyn LT stat HT}
V_* = q_*^{p/2}E_* \text{ for } E_*= \frac{E_{\infty}}{2}\left(\frac{1}{\sqrt{2}\beta_2(\qstar)}+\sqrt{2}\beta_2(\qstar)\right),
\eeq
as well as
\beq \label{eq: Ustar dyn LT stat HT}
\ETAPstar = h_{\TAP}(V_{*},\qstar) \eeq
hold. Additionally
\begin{equation}\label{eq: entropy dyn low temp}
\begin{array}{l}
\ITAP(\ETAPstar,V_*,q_*)=I(\Estar)
=-\frac{1}{2}\log(1-\qstar)-\frac{\qstar}{2}-\frac{\qstar^{2}}{2p(1-\qstar)}>0
\end{array}
\end{equation}
and
\begin{equation}\label{eq: FE dyn high temp}
\fTAP(\beta) = \frac{\beta^2}{2}.
\end{equation}
Lastly
\begin{equation}\label{eq:EstarnotGS}
   \ETAPmin < \ETAPstar < \ETAPmax.
\end{equation}
\end{teor}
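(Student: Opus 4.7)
The plan is to collapse the three-dimensional variational problem \eqref{eq: def E TAP star} to a one-dimensional optimization in the squared magnitude $q$, then apply elementary calculus. By Theorem \ref{thm: TAP comp intro}, for $\beta > \beta_d$ we have $0 \notin D_\beta$, so the $m=0$ branch of $\ITAP$ drops out and $\ITAP(U,V,q)$ is supported on a one-parameter curve with $V = q^{p/2} E_U(U)$, $q = q_U(U)$ and value $I(E_U(U))$. The factor $q^{p/2}$ comes from $p$-homogeneity: any \TP\ solution may be written $m = \sqrt{q}\,\sigma$ with $\sigma \in S_{N-1}$ a critical point of $H_N|_{S_{N-1}}$, and then $H_N(m) = q^{p/2} H_N(\sigma)$. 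The radial first-order condition at $\sqrt{q}\,\sigma$ further determines $E$ as a function of $q$, reducing to $E = E(q) := \tfrac{E_\infty}{2}(\tfrac{1}{\sqrt{2}\beta_2(q)} + \sqrt{2}\beta_2(q))$, which is \eqref{eq: Estar dyn LT stat HT}.

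Under this parametrization, $\fTAP(\beta) = \sup_q F(q)$ where $F(q) = \beta q^{p/2} E(q) + \tfrac{1}{2}\log(1-q) + \On(q) + I(E(q))$. The radial FOC says precisely that the partial $q$-derivative of $\beta q^{p/2} E + \tfrac{1}{2}\log(1-q) + \On(q)$ at fixed $E$ vanishes at $E = E(q)$, so by the chain rule $F'(q) = (\beta q^{p/2} + I'(E(q)))\,E'(q)$ and the interior FOC simplifies to $I'(E(q_*)) = -\beta q_*^{p/2}$. Substituting the explicit formula for $I'$ from Section \ref{sec: crit point complexity} together with \eqref{eq: Estar dyn LT stat HT} should reduce this algebraically to the polynomial equation $(1-q) q^{p-2} = 1/(p\beta^2)$ of \eqref{eq: qstar dyn low temp}. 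Since the LHS is unimodal on $[0,1]$ with maximum $1/(p\beta_d^2)$ at $q = (p-2)/(p-1)$, for $\beta > \beta_d$ there are exactly two roots; the root $q_*$ in $[(p-2)/(p-1), 1)$ is selected as the local-maximum branch of $H_{\TAP}$ (the other root corresponds to saddles). Membership $q_* \in D_\beta$ for $\beta < \beta_s$ is then checked directly from the definitions \eqref{eq: our plefka} and \eqref{betasdef}.

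Plugging \eqref{eq: qstar dyn low temp} and \eqref{eq: Estar dyn LT stat HT} into the complexity formula of Section \ref{sec: crit point complexity} yields the closed form \eqref{eq: entropy dyn low temp} for $I(E_*)$, and a direct computation of $F(q_*)$, using \eqref{eq: qstar dyn low temp} to combine $\On(q_*)$ with $\beta q_*^{p/2} E_*$ and $I(E_*)$, simplifies to $\beta^2/2$, giving \eqref{eq: FE dyn high temp}. Positivity in \eqref{eq: entropy dyn low temp} is immediate from $q_* \in (0,1)$, and the strict bounds \eqref{eq:EstarnotGS} follow by verifying that $q_*$ is interior to the admissible range $[q_{\min}, q_{\max}]$ for $\beta \in (\beta_d, \beta_s)$, using monotonicity of $q \mapsto (1-q) q^{p-2}$ on $[(p-2)/(p-1), 1)$ together with the definitions of $\beta_d$ and $\beta_s$.

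The main obstacle is promoting the critical point $q_*$ to a \emph{global} maximizer of $F$ on the one-dimensional admissible domain rather than just a local one: this likely requires either a concavity argument for $F$ along the local-maximum branch, or a direct comparison with the endpoint values $F(q_{\max}) = \ETAPmax$ and $F(q_{\min}) = \ETAPmin + I(E_U(\ETAPmin))$, leveraging the bound $\ETAPmax < \beta^2/2$ from Theorem \ref{thm: TAP comp intro}. A secondary technicality is systematically staying on the local-maximum branch of the radial critical-point equation $E = E(q)$ throughout, and excluding the saddle branch.
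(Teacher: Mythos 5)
Your overall scheme mirrors the paper's: collapse the three-dimensional optimization to a one-dimensional one, extract the polynomial critical-point equation from a first-order condition, compute the optimal value $\beta^2/2$ by an algebraic identity, and then place $q_*$ relative to the admissible range. The paper parametrizes by $E$ (with $q = q_E(E)$), while you parametrize by $q$ (with $E = E(q)$); since $q_E$ and $E_q$ are mutually inverse this is a cosmetic choice, and your envelope-theorem computation $F'(q)=(\beta q^{p/2}+I'(E(q)))E'(q)$ is exactly the paper's \eqref{eq: useful ident} in the other variable.

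Two gaps remain, one of which you flag and one of which you do not.

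First, the one you do not flag: your FOC involves $I'$, but $I$ is not differentiable at $E_0$ (it jumps to $-\infty$), so one cannot differentiate the objective directly. The paper circumvents this by replacing $I$ with the \emph{annealed} complexity $I_{\Ann}$, which is smooth on $[E_\infty,\infty)$ and agrees with $I$ on $[E_\infty,E_0]$, solving the unconstrained problem with $I_{\Ann}$ (Lemma \ref{lem:solveopti}), and then invoking the characterization \eqref{eq: E0 betas equiv} of $\beta_s$ to show that the unconstrained maximizer $\tilde E$ lies strictly in $(E_\infty, E_0)$ for $\beta<\beta_s$, hence in the region where $I=I_{\Ann}$. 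Without this step your critical-point reasoning is not justified, and you also have no way to read off the threshold $\beta_s$.

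Second, the gap you do flag (local vs. global optimality) is genuine and is where the real work lies. The paper resolves it in Lemma \ref{lem:solveopti}: $f(E,q)+I_{\Ann}(E)\to-\infty$ as $q\to1$ or $E\to\infty$ (so a maximizer exists in the compact closure), the $E$-derivative at $E=E_\infty$ is positive (ruling out the lower boundary, \eqref{eq: deriv at Einf}), and for every $E>E_\infty$ the map $q\mapsto f(E,q)$ has exactly one critical point in $(q_{\min},1)$, which is a local maximum (Lemma \ref{lem: E g Emin loc max}); since there is a unique interior critical point it must be the global maximizer. A concavity argument for $F$ of the kind you float would also work in principle but is not what the paper does and would need to be established. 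As written, your proposal correctly identifies the shape of the argument and the key identities, but the two issues above must be closed before it constitutes a proof.
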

The inequality \eqref{eq:EstarnotGS} shows that $\ETAPstar$ is not the maximum TAP energy in this phase. Furthermore \eqref{eq: entropy dyn low temp} shows that the value of $\fTAP(\beta)$ comes from the contribution of exponentially many \TP\ solutions $m$ of TAP energy $N\ETAPstar + o(N)$. This is in contrast to the static low temperature phase we describe next. Indeed after $\beta_s$, it is no longer true that $\fTAP(\beta) = \beta^2/2$. However once again $\ITAP(\ETAPmax) = 0$, signifying that the maximizer of \eqref{eq:varprinc} now corresponds to subexponentially many $m$ such that $H_{\TAP}(m) = \ETAPmax N + o(N)$  and $H_N(m) = E_0 N + o(N)$. It also gives a formula for $\fTAP(\beta)$, i.e. for the free energy in low temperature. 

\begin{teor}[Static and dynamic low temperature phase]\label{thm:fstthrm} For {$\beta > \beta_s$} we have that \eqref{eq: def E TAP star} is well-defined and $\qstar$ is the unique solution of 
\begin{equation}\label{eq: qstar low temp} \beta (1-q)q^{\frac{p-2}{2}}   = \frac{\bar{r}}{\sqrt{p(p-1)}} \text{ in } \left[\frac{p-2}{p},1\right).\end{equation}
Also 
\begin{equation}\label{eq: Vstar Ustar ITAPstar low temp}
\begin{array}{lll}
\text{(a) }V_* = q_*^{p/2}E_0  & \text{(b) }\ETAPstar = h_{\TAP}( V_*,\qstar) = \ETAPmax &
 \text{(c) }\ITAP(\ETAPstar,V_*,q_*) = 0.
\end{array}
\end{equation}
Finally $\fTAP(\beta)$ can be expressed in the various ways
\begin{equation}\label{eq: fTAP static low temp}
\begin{array}{ccl}
\fTAP(\beta) & = & \ETAPmax\\
&=&h_{\TAP}\left(V_*,\qstar\right)\\
& = & \displaystyle{\sup_{q\ge\frac{p-2}{p}:\sqrt{2}\beta_{2}(q)\le1}}h_{\TAP}(q^{p/2}E_{0},q)\\
 & = & \frac{\beta^{2}}{2}+\frac{1}{2}\log\left(1-q_{*}\right)+\frac{2}{p}\frac{q_{*}}{1-q_{*}}\frac{E_{0}}{E_{\infty}}\bar{r}-\frac{1}{2\left(p-1\right)}\left(1+\frac{1}{p}\frac{q_{*}}{1-q_{*}}\right)\frac{q_{*}}{1-q_{*}}\bar{r}^{2}
\end{array}
\end{equation}
and
\begin{equation}\label{eq: FE low temp}
\fTAP(\beta) < \frac{\beta^2}{2}.
\end{equation}
\end{teor}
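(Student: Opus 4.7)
The plan is to apply Theorem \ref{thm: TAP comp intro} to reduce \eqref{eq:varprinc} to a one-dimensional optimization along the TAP critical curve, show that for $\beta>\beta_s$ the optimum sits at the boundary $E=E_0$ (so $I(E)=0$ and $U=\ETAPmax$), and then solve a scalar FOC in $q$ whose simplification rests on the algebraic identity $2(E_0/E_\infty)\bar{r}=1+\bar{r}^2$ built into \eqref{eq: rbardef}.

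Since $\beta>\beta_s>\beta_d$ gives $0\notin D_\beta$, Theorem \ref{thm: TAP comp intro} yields
$$\fTAP(\beta)=\sup_{U\in[\ETAPmin,\ETAPmax]}\{U+I(E_U(U))\}.$$
Reparameterize by $q$ along the strictly increasing critical curve $E=\mathcal{E}(q)$; this curve is characterized by the radial TAP critical point equation $\partial_q h_{\TAP}(q^{p/2}E,q)|_E=0$, so
$$\frac{d}{dq}\{h_{\TAP}(q^{p/2}\mathcal{E}(q),q)+I(\mathcal{E}(q))\}=\mathcal{E}'(q)\bigl(\beta q^{p/2}+I'(\mathcal{E}(q))\bigr).$$
Hence interior critical points satisfy $I'(E)=-\beta q^{p/2}$ (the branch producing Theorem \ref{thm:dynLTstatHT}), while boundary optima have $\mathcal{E}(q)=E_0$. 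The two candidate values coincide at $\beta=\beta_s$ and the boundary dominates for $\beta>\beta_s$; the threshold $\beta_s$ emerges algebraically because substituting $q=(p-1-\bar{r}^2)/(p-1)$ into the squared form $\beta^2 p(p-1)(1-q)^2 q^{p-2}=\bar{r}^2$ of \eqref{eq: qstar low temp} returns exactly \eqref{betasdef}.

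With $E=E_0$ fixed, the problem becomes $\sup_q h_{\TAP}(q^{p/2}E_0,q)$ over $q\in[(p-2)/p,1)\cap D_\beta$. Writing $x:=\beta(1-q)q^{(p-2)/2}$, the FOC collapses to $pE_0\, x=1+p(p-1)x^2$, and using $E_\infty=2\sqrt{(p-1)/p}$ the value $x=\bar{r}/\sqrt{p(p-1)}$ solves this precisely when $2(E_0/E_\infty)\bar{r}=1+\bar{r}^2$, which is the defining identity of $\bar{r}$. This yields \eqref{eq: qstar low temp}; the other root of the quadratic corresponds to $\sqrt{2}\beta_2(q)=1/\bar{r}>1$ and is excluded by Plefka. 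Assertions (a), (b), (c) of \eqref{eq: Vstar Ustar ITAPstar low temp} follow respectively from $V_*=\qstar^{p/2}E_0$, from boundary optimality ($\ETAPstar=h_{\TAP}(V_*,\qstar)=\ETAPmax$), and from $I(E_0)=0$. Uniqueness of $\qstar$ on $[(p-2)/p,1)$ is immediate from strict monotonicity of $q\mapsto(1-q)q^{(p-2)/2}$ there, and $\sqrt{2}\beta_2(\qstar)=\bar{r}<1$ verifies Plefka strictly. Substituting $\qstar$ back into $h_{\TAP}(\qstar^{p/2}E_0,\qstar)$, expanding $\On$ via \eqref{eq: Onsager Def}, and using the FOC to replace $\beta^2\qstar^{p-2}$ by $\bar{r}^2/(p(p-1)(1-\qstar)^2)$ produces the four equivalent forms in \eqref{eq: fTAP static low temp}.

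For the strict inequality \eqref{eq: FE low temp}, set $\phi(\beta):=\fTAP(\beta)-\beta^2/2$. Continuity together with Theorem \ref{thm:dynLTstatHT} gives $\phi(\beta_s)=0$. The envelope theorem applied at the boundary optimum yields
$$\phi'(\beta)=\qstar^{p/2}\left(E_0-\frac{\bar{r}E_\infty(p-(p-1)\qstar)}{2(p-1)(1-\qstar)}\right),$$
and using $2(E_0/E_\infty)\bar{r}=1+\bar{r}^2$ one checks this is negative iff $\qstar>(p-1-\bar{r}^2)/(p-1)$, which holds for $\beta>\beta_s$ by monotonicity of $\qstar(\beta)$ (implicit function theorem on \eqref{eq: qstar low temp}). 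Integrating gives $\phi(\beta)<0$. The main obstacle is the boundary-vs-interior comparison asserted in the second paragraph: establishing cleanly that $E=E_0$ strictly dominates the Theorem \ref{thm:dynLTstatHT} branch precisely for $\beta>\beta_s$ requires quantitative control on $I(E)$ near $E_0$ and its alignment with $\beta_s$ via the $\bar{r}$-identity, whereas the remaining manipulations are routine once the substitution $x=\beta(1-q)q^{(p-2)/2}$ is in hand.
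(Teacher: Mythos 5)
Your overall architecture matches the paper's: reduce the TAP variational principle to a one-dimensional optimization (over $E$ or over $q$ along the critical curve), show the constrained maximizer sits at the boundary $E=E_0$ for $\beta>\beta_s$, solve the FOC in $q$, and read off the other claims. Your FOC algebra is correct: with $x=\beta(1-q)q^{(p-2)/2}$ the critical point equation $\partial_q f(E_0,q)=0$ (Lemma \ref{lem: critequiv}) does become $pE_0x=1+p(p-1)x^2$, and $x=\bar r/\sqrt{p(p-1)}$ solves it via the defining identity $\frac{E_0}{E_\infty}=\tfrac12(\bar r+1/\bar r)$; the other root has $\sqrt2\beta_2(q)=1/\bar r>1$ and is excluded.

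However, the boundary-vs-interior comparison that you flag as ``the main obstacle'' does not in fact require ``quantitative control on $I(E)$ near $E_0$.'' The paper's argument is a clean monotonicity argument you should be able to reconstruct from the preceding lemmas: the function $E\mapsto f(E,q_E(E))+I_{\Ann}(E)$ has, on $[E_\infty,\infty)$, a \emph{unique} critical point $\tilde E$ (Lemma \ref{lem: crit eq sol} together with \eqref{eq: useful ident} and \eqref{eq: qE glob max}); its derivative at $E=E_\infty$ is strictly positive (\eqref{eq: deriv at Einf}, already proved in Lemma \ref{lem:solveopti}); and \eqref{eq: E0 betas equiv} gives $\tilde E>E_0$ precisely when $\beta>\beta_s$. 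Since the derivative is continuous, positive at $E_\infty$, and vanishes only at $\tilde E>E_0$, the objective is strictly increasing on all of $[E_\infty,E_0]$, so the supremum over $[E_\infty,E_0]$ is at $E=E_0$. No value comparison between ``branches'' is needed. This same monotonicity also gives \eqref{eq: FE low temp} in one line: strict monotonicity up to $\tilde E$ combined with $E_0<\tilde E$ yields $f(E_0,q_E(E_0))=f(E_0,q_E(E_0))+I_{\Ann}(E_0)<f(\tilde E,q_E(\tilde E))+I_{\Ann}(\tilde E)=\beta^2/2$ (the latter equality by Lemma \ref{lem:solveopti}). Your envelope-theorem-plus-integration route to the strict inequality is a valid alternative in principle, but it is considerably more work and introduces an unverified derivative formula, whereas the direct comparison is immediate once the monotonicity is established. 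So the gap is real as you left it, but it closes without the extra machinery you anticipated; you should replace the heuristic ``the two candidate values coincide at $\beta=\beta_s$'' with the uniqueness-of-critical-point-plus-sign-at-endpoint argument above.
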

The main points of the above results are summarized in the phase diagram Figure \ref{fig:phase diagram}. The critical temperatures appearing are summarized in Table \ref{tab:crit temps}.
All theorems follow from elementary but non-trivial calculations involving the complexity $I$ and the condition \eqref{eq: our plefka}.

\newpage

\begin{figure}[H]
 \centering
    \usetikzlibrary{automata,positioning}

\begin{tikzpicture}[yscale=0.5, xscale=0.8,every text node part/.style={align=center}]

\def\ymult{1.6}
\def\ytop{5*\ymult}
\def\yftap{\ytop}
\def\yzerotapsol{3*\ymult}
\def\yqstar{2*\ymult}
\def\yitapstar{1*\ymult}
\def\yitapnonzero{4*\ymult}
\def\ycurlybraces{-1*\ymult}

\input{phase_diagram/phase_diagram_axis}


\node (itapnonzerostart) at ({\xstart},\yitapnonzero)  {b)};
\node (itapnonzero1) at ({\xstart + (\xstart+\xbetac)/2},\yitapnonzero)  {$m=0$ only\\ \TP\ sol.};
\draw[-|, thick] (itapnonzero1.east)--(\xbetac,\yitapnonzero);
\draw[-, thick] (itapnonzero1.west)--(itapnonzerostart);
\node (itapnonzero2) at ({(\xbetac+\xend)/2},\yitapnonzero)  {Exponentially many \TP\ sol's};
\draw[-|, thick] (itapnonzero2.west)--(\xbetac,\yitapnonzero);
\draw[-, thick] (itapnonzero2.east)--(\xend,\yitapnonzero);


\node (zerotapsol1start) at ({\xstart},\yzerotapsol)  {c)};
\node (zerotapsol1) at ({\xstart + (\xstart+\xbetad)/2},\yzerotapsol)  {$m=0$ is \TP\ sol.};
\draw[-|, thick] (zerotapsol1.east)--(\xbetad,\yzerotapsol);
\draw[-, thick] (zerotapsol1.west)--(zerotapsol1start);
\node (zerotapsol2) at ({(\xbetad+\xend)/2},\yzerotapsol)  {$m=0$ not \TP\ sol.};
\draw[-|, thick] (zerotapsol2.west)--(\xbetad,\yzerotapsol);
\draw[-, thick] (zerotapsol2.east)--(\xend,\yzerotapsol);



\node (fTAP1start) at ({\xstart},\yftap)  {a)};
\node (fTAP1) at ({\xstart + (\xstart+\xbetas)/2},\yftap)  {$\fTAP=\frac{\beta^2}{2}$};
\draw[-|, thick] (fTAP1.east)--(\xbetas,\yftap);
\draw[-, thick] (fTAP1.west)--(fTAP1start);
\node (fTAP2) at ({(\xbetas+\xend)/2},\yftap)  {$\fTAP<\frac{\beta^2}{2}$};
\draw[-|, thick] (fTAP2.west)--(\xbetas,\yftap);
\draw[-, thick] (fTAP2.east)--(\xend,\yftap);


\node (qstar1start) at ({\xstart},\yqstar)  {d)};
\node (qstar1) at ({\xstart + (\xstart+\xbetas)/2},\yqstar)  {$q_{*}=0$};
\draw[-|, thick] (qstar1.east)--(\xbetad,\yqstar);
\draw[-, thick] (qstar1.west)--(qstar1start);
\node (qstar2) at ({(\xbetad+\xend)/2},\yqstar)  {$q_{*}>0$};
\draw[-|, thick] (qstar2.west)--(\xbetad,\yqstar);
\draw[-, thick] (qstar2.east)--(\xend,\yqstar);


\node (itapstar1start) at ({\xstart},\yitapstar)  {e)};
\node (itapstar1) at ({\xstart + (\xstart+\xbetad)/2},\yitapstar)  {$I_{\rm{TAP},*}=0$};
\draw[-|, thick] (itapstar1.east)--(\xbetad,\yitapstar);
\draw[-, thick] (itapstar1.west)--(itapstar1start);
            \node (itapstar2) at ({(\xbetad+\xbetas)/2},\yitapstar)  {$I_{\rm{TAP},*}>0$};
\draw[-|, thick] (itapstar2.west)--(\xbetad,\yitapstar);
\draw[-|, thick] (itapstar2.east)--(\xbetas,\yitapstar);
\node (itapstar3) at ({(\xbetas+\xend)/2},\yitapstar)  {$I_{\rm{TAP},*}=0$};
\draw[-|, thick] (itapstar3.west)--(\xbetas,\yitapstar);
\draw[-, thick] (itapstar3.east)--(\xend,\yitapstar);


\draw [decorate,decoration={mirror,brace,amplitude=10pt}]
(\xstart,\ycurlybraces) -- (\xbetad,\ycurlybraces) node [black,midway,below=10pt,align=left] {Stat. \& Dyn. HT};

\draw [decorate,decoration={mirror,brace,amplitude=10pt}]
(\xbetad,\ycurlybraces)--(\xbetas,\ycurlybraces) node  [black,midway,below=10pt,align=left]  {Stat. HT,\\ Dyn. LT};

\draw [decorate,decoration={mirror,brace,amplitude=10pt}]
(\xbetas,\ycurlybraces)--(\xend,\ycurlybraces) node  [black,midway,below=10pt,align=left]  {Stat. \& Dyn. LT};

\end{tikzpicture}
 \caption{
 Phase diagram of TAP variational principle \eqref{eq:varprinc}, \eqref{eq: def E TAP star}.\newline 
 a) TAP free energy $\fTAP(\beta)$ at high and low temperature, see \eqref{eq: FE very high temp}, \eqref{eq: FE very high temp 2}, \eqref{eq: FE dyn high temp}, \eqref{eq: FE low temp}.\newline
 b) Complexity transition, see Theorem \ref{thm:comp}.\newline
 c) Whether $m=0$ is \TP\ solution, see \eqref{eq: our plefka}. \newline
 d) Magnitude squared of \TP\ solutions maximizing TAP variational principle, see \eqref{eq: qstar Ustar entropy high temp} (a), \eqref{eq: qstar dyn low temp}, \eqref{eq: qstar low temp}.\newline
 e) Entropy $I_{TAP,*}=\ITAP(\ETAPstar,V_*,\qstar)$ of \TP\ solutions maximizing the TAP variational principle, see \eqref{eq: qstar Ustar entropy high temp} (c), \eqref{eq: entropy dyn low temp}, \eqref{eq: Vstar Ustar ITAPstar low temp} (c).
 }
 \label{fig:phase diagram}
\end{figure}

\begin{table}[H]
  \def\arraystretch{1.9}
    \begin{center}
    \begin{tabular}{| c | c | c | c |}
    \hline
    Notation & Formula  & Value ($p=3$) & Description\\
    \hline
    \hline
    $\beta_c$ & $\tilde{\beta}_c \bar{r} $ & 0.89372 & Complexity transition\\ 
    \hline
    $\tilde{\beta}_c$ & $ \frac{1}{2}\sqrt{\frac{p^{p-1}}{(p-1)(p-2)^{p-2}}} $ & 1.06066 & TAP loc. max. start existing $\forall$ $ E\in[E_\infty,E_0]$ \\
    \hline
    $\beta_d$  & $\sqrt{\frac{(p-1)^{p-1}}{p(p-2)^{p-2}}}$ & 1.15470 & Dynamical phase transition \\
    \hline
    $\beta_s$  & $\sqrt{\frac{(p-1)^{p-1}}{p \bar{r}^2(p-1-\bar{r}^2)^{p-2}}}$ & 1.20656 & Static phase transition\\
    \hline
    \end{tabular}
\end{center}        
  \captionof{table}{Critical temperatures; $\bar{r}$ is defined in \eqref{eq: rbardef}. The description for $\tilde{\beta}_c$ is explained in Remark \ref{rem: ITAP thm rems}. \label{tab:crit temps}}
\end{table}

\subsection{The condition \texorpdfstring{$|m|^2 \in D_\beta$}{m in D beta}}
The condition is motivated by the heuristic argument behind \eqref{eq:conj} and a replica computation from \cite{CrisantiLeuzzi}. It is further explained in Section \ref{sec:heuristic}. As mentioned above the condition $A(q,\beta) \le 0$ is stronger than Plefka's condition. However, at least for pure $p$-spin models it is only slightly stronger, in that it ultimately only additionally determines when the local maximum $m=0$ should be considered relevant; with high probability there are no other local maxima where $A(q,\beta) \le 0$ is satisfied and Plefka's condition is not, see Theorem \ref{thm:condequiv} for more details. The stronger condition \emph{is} necessary to get a coherent phase diagram. For instance if only Plefka's condition is required for a TAP local maximum to be $m$ considered relevant, the claim \eqref{eq:conj} can not be true, since $m=0$ always satisfies Plefka's condition and would always be a relevant TAP solution and we would have $f_{\TAP}(\beta)\ge\frac{\beta^2}{2}$ for all $\beta \ge 0$.

\subsection{Relation to \cite{subagFreeEnergySpherical2021,arousShatteringMetastabilitySpin2021}}

While preparing the current article the two related works \cite{subagFreeEnergySpherical2021,arousShatteringMetastabilitySpin2021} appeared.

The article \cite{subagFreeEnergySpherical2021} computes the limit $N\to\infty$ of the free energy \eqref{eq: FE} of the spherical models considered in the paper for all $\beta$ using a TAP approach, though one involving limiting properties of the Gibbs measure via the concept of ``multisamplable overlap'' which is therefore different from TAP approach envisioned in the heuristic in Section \ref{sec:heuristic}. In the framework of \cite{subagFreeEnergySpherical2021} only the ground state energy $E_0$ of $H_N$ plays a role, rather than the full TAP complexity. Therefore the phase transitions $\beta_c$ and $\beta_d$ can not be detected in that framework. The phase transition $\beta_s$ can however be detected, and \cite{subagFreeEnergySpherical2021} presents a formula for it which is different but can be shown to be equivalent to the the formula \eqref{betasdef} in the paper (see \cite[(1.9)-(1.10)]{subagFreeEnergySpherical2021}). It also presents the formula $h_{\rm{TAP}}(V_*,q_*)$ which we also derive in this paper (see \eqref{eq: fTAP static low temp} and \cite[(1.11)-(12)]{subagFreeEnergySpherical2021}). In contrast to the present paper, which only deals with the variational principle \eqref{eq:varprinc}, the paper \cite{subagFreeEnergySpherical2021} computes the limiting free energy. It proves that for $\beta\le \beta_s$ it holds that $\lim_{N\to\infty}f_N(\beta) = \frac{\beta^2}{2}$ and for $\beta \ge \beta_s$ one has $\lim_{N\to\infty}f_N(\beta) = h_{\rm{TAP}}(V_*,q_*)$, where $V_*,q_*$ are as in Theorem \ref{thm:fstthrm}. Since Theorems \ref{thm:comp}-\ref{thm:dynLTstatHT} show that $\fTAP(\beta)=\frac{\beta^2}{2}$ for $\beta \le \beta_s$ and Theorem \ref{thm:fstthrm} shows that $\fTAP(\beta)=h_{\rm{TAP}}(V_*,q_*)$ we can ``aposteriori'' conclude that \eqref{eq:conj} is indeed true. However, in the TAP approach envisioned by \cite{DavidNicola,TAPHTUB} and the present work one wishes to prove this rather by obtaining a direct microcanonical proof of \eqref{eq:conj}, which would then yield an alternative proof of the results for the limiting free energy of \cite{subagFreeEnergyLandscapes2020} when combined with the present paper.

The ``TAP decomposition'' of \cite{arousShatteringMetastabilitySpin2021} is more similar to the TAP approach envisioned here, and here the analysis is sensitive to the threshold $\beta_d$. Indeed \cite[Theorems 2.1, 2.4]{arousShatteringMetastabilitySpin2021} proves that there is a $\delta>0$ such that for $\beta \in (\beta_d-\delta,\beta_d)$ the free energy can be lower bounded by the contribution of exponentially many ``slices'' around TAP solutions, giving a total contribution of $\beta^2/2$ (cf. Section \ref{sec:heuristic} and Theorem \ref{thm:dynLTstatHT}). A similar computation of the free energy for large enough $\beta$ was carried out in \cite{subagGeometryGibbsMeasure2017}. This is the kind of analysis that the authors hope to in the future extend to all $\beta$, whereby the aim is to separate the analysis neatly into a proof of \eqref{eq:conj} for all $\beta$ (a first step has been taken in \cite{TAPHTUB}) and an analysis of variational principle \eqref{eq:varprinc} for all $\beta$, which is provided by the present paper.

\subsection{Further results and structure of paper}
Sections \ref{sec:det}, \ref{computeI} and \ref{sec: umax umin} contain further results that are of independent interest beyond their role as intermediate steps in the proofs of Theorem \ref{thm:comp}-\ref{thm:fstthrm}. Theorem \ref{thm: det char f} of Section \ref{sec:det} gives various results that deterministically relate TAP energy $H_{\TAP}(m)$, Hamiltonian energy $H_N\left(\frac{m}{|m|}\right)$ and magnitude $|m|^2$ for any \TP\ solution, and that follow purely from the conditions $|m|^2 \in D_\beta$ and that a \TP\ solution must be a local maximum.
Theorem \ref{thm:TAP comp} of Section \ref{computeI} gives a more detailed version of Theorem \ref{thm: TAP comp intro}. Proposition \ref{prop: umax umin formulas} gives formulas for $U_{\min}$ and $U_{\max}$. Theorems \ref{thm:sndthrm}-\ref{thm:fstthrm} are proved in Section \ref{optimize}.
In Section \ref{sec:heuristic} we give a heuristic behind \eqref{eq:conj} and the condition \eqref{eq: our plefka}. Section \ref{sec: prel} recalls some known facts about $H_N$, including the full definition of the complexity $I$ and $E_0$, that we use in this paper.

Table \ref{tab:notation} contains a list of notation used in this paper.

\begin{table}[H]
  \centering
    \def\arraystretch{1.9}
    \begin{tabular}{| p{0.15\textwidth} | p{0.6\textwidth} | p{0.20\textwidth}|}
    \hline
    Notation & Description & Definition\\
    \hline
    \hline
    $H_N(m)$ & Hamiltonian energy & \eqref{eq: Hamiltonian covar}\\
    
    $H_{\TAP}(m)$ & TAP energy as function of $m$ & \eqref{eq:HTAPdef}\\
    
    $h_{\TAP}(E,|m|^2)$ & TAP energy as function of Hamiltonian energy $E=H_N(m)$ and $|m|^2$ & \eqref{eq: hTAP def}\\
    
    $E_0, E_\infty$ & Largest and smallest energies of $p$-spin Hamiltonian local maxima & \eqref{eq: Einf def} and \eqref{eq: E0 zero of I}\\
    $\beta_2(q)$ & Quantity appearing in Plefka's condition& \eqref{eq: beta2 def}\\
    $\xi(x)=x^p$ & Covariance function of $p$-spin Hamiltonian & \eqref{eq: xi def}\\ 
    $\On(q)$ & Onsager correction & \eqref{eq: Onsager Def}\\  

    $f(E,q)$  & $h_{\TAP}(q^{p/2}E,q)$ & \eqref{eq: f def}\\
    $D_\beta, A(q,\beta)$ & $D_\beta=\{q:A(q,\beta)\leq 0\}$ set of possible squared radii of relevant TAP solutions & \eqref{eq: our plefka}, \eqref{eq: Adef}\\
    $\ETAPmax, \ETAPmin$
    & Largest and smallest TAP energies of nonzero \TP\  solutions & \eqref{eq: Uminbeta}, \eqref{eq: Umax Umin def}\\
    $V_{*}, \ETAPstar, \qstar $ & Energy, $\TAP$ energy and squared radius with largest contribution to \TP\ free energy & \eqref{eq: def E TAP star}\\
    $\Eq, \qE$ & Energy of nonzero \TP\  solution of given squared radius, and squared radius of nonzero \TP\ solution of given energy & \eqref{eq: Eq def}, \eqref{eq: qE def}\\
    $\EU,\qU$ & Energy, squared radius of nonzero \TP\  solution for given $\TAP$ energy & \eqref{eq: final def EU}, \eqref{eq: qU def}\\
    $\ITAP(U,V,q)$ & Entropy of \TP\ local maxima of TAP energy & \eqref{eq: ITAP extended}\\
    $E_{\min },q_{\min}$ & Minimal energy on unit sphere of nonzero \TP\ solutions and corresponding squared radius& \eqref{eq: Emin def}, \eqref{eq: qmin def}\\
    $r_{\pm}\left(\frac{E}{E_\infty}\right)$  & Solutions of $\frac{E}{E_{\infty} }= \frac{1}{2}(\frac{1}{x}+x)$ with $r_{-}\left(\frac{E}{E_\infty}\right)\le1\le r_{+}\left(\frac{E}{E_\infty}\right)$ & \eqref{eq: rpmdef} \\
    $\bar{r}$  & $r_{-}(\frac{E_0}{E_\infty})$  & \eqref{eq: rbardef} \\
    $\NU$  & Number of \TP\ solutions with TAP energy, Hamiltonian energy and squared radius in given sets  & \eqref{Ndef} \\
    \hline    
    \end{tabular}        
  \caption{Index of notation\label{tab:notation}}
\end{table}
\section{Heuristic derivation of the \TP\ variational principle}\label{sec:heuristic}

In this section we give a heuristic derivation of the TAP energy \eqref{eq:HTAPdef} and \eqref{eq:conj}. It is an adaptation of the heuristic that has been turned into a proof of \eqref{eq:conj} in the special case $p=2$ in \cite{DavidNicola}, and an upper bound for the free energy in terms of the TAP energy in \cite{TAPHTUB}. The heuristic also motivates the condition \eqref{eq: our plefka}.

The starting point is that in high temperature and without external
field the free energy of a Hamiltonian $H_N$ whose covariance is given by $\xi$ takes a simple form
\begin{equation}\label{eq: paramag FE}
Z_{N}=e^{N\frac{\beta^{2}}{2}\xi(1)+o\left(N\right)}.
\end{equation}
The estimate \eqref{eq: paramag FE} of course does not hold in low temperature. In this heuristic
we make the ansatz that \eqref{eq: paramag FE} is true at least in the region reported
in \cite{CrisantiLeuzzi} as featuring stability in the replica computation. Their condition can be written as
\begin{equation}\label{eq: internal stability}
\sup_{r \in [0,1]} \left( \beta^2 \frac{\xi'(q)}{r} - \frac{1}{1-r} \right) \ge 0.
\end{equation}
To argue heuristically that in low temperature the free
energy can be written in terms of the TAP energy, we first lower
bound the partition function by the integral of the Gibbs factor
over a ``slice'' $A_\varepsilon =\left\{ \sigma\in S:\left|\left(\sigma-m\right)\cdot m\right|\le\varepsilon\right\} $
for some $m\in B$:
\begin{equation}\label{eq:slidelb}
Z_{N}\ge\int_{A_\varepsilon}e^{\beta H_{N}\left(\sigma\right)}E(d\sigma).
\end{equation}
The set $A_\varepsilon$ is an $\varepsilon$-thickened version of the intersection $A_0$
of the hyperplane perpendicular to $m$ passing through $m$, and
the sphere. The set $A_0$ is precisely a $N-2$-dimensional sphere of radius $\sqrt{1-|m|^2}$ which has surface area $(1-|m|)^{\frac{N}{2}+o(N)}$. For 
 $\varepsilon \downarrow 0$ slow enough with $N$. The measure of $A_\varepsilon$ under $E$ is also $(1-|m|)^{\frac{N}{2}+o(N)}$.

After normalization the integral in \eqref{eq:slidelb} it can be approximated by an uniform integral on $A_0$, giving that
\begin{equation}\label{eq: lb after norm}
Z_{N} \ge \exp\left( \frac{N}{2}\log(1-|m|^2) + o(N) \right) \int_{A_0}e^{\beta H_{N}\left(\sigma\right)}d\sigma,
\end{equation}
where the integral is now against the uniform measure on $A_0$.

Inside the slice $A_0$, it is natural to expand the Hamiltonian 
in $\hat{\sigma}=\sigma-m$ giving
\begin{equation}\label{eq: expansion}
H_{N}\left(m+\hat{\sigma}\right)=H_{N}\left(m\right)+\nabla H_{N}\left(m\right)\cdot\hat{\sigma}+H_{N}^{m}\left(\hat{\sigma}\right),
\end{equation}
where $H_{N}^{m}\left(\hat{\sigma}\right)$ collects all the terms
of order $2$ or higher in the $\hat{\sigma}_{i}$. One can show
that for fixed $m$
\[
H_{N}\left(m\right),\left(\nabla H_{N}\left(m\right)\cdot\hat{\sigma}\right)_{\hat{\sigma}:\hat{\sigma}\cdot m=0},\left(H_{N}^{m}\left(\hat{\sigma}\right)\right)_{\hat{\sigma}:\hat{\sigma}\cdot m=0},
\]
are independent, and that $H_{N}^{m}\left(\hat{\sigma}\right)$ is
a mixed $p$-spin Hamiltonian on the $N-2$-dimensional sphere $\left\{ \hat{\sigma}:\hat{\sigma}\cdot m=0\right\} $
with covariance
\[
\mathbb{E}\left[H_{N}^{m}\left(\hat{\sigma}\right)H_{N}^{m}\left(\hat{\sigma}'\right)\right]=\xi^{|m|^2}( \hat{\sigma} \cdot \hat{\sigma}'),
\]
where
\[
\xi^{q}\left(x\right)=\xi\left(q+x\left(1-q\right)\right)-\xi'\left(q\right)x\left(1-q\right)-\xi\left(q\right),
\]
(see Lemma 3.2 \cite{TAPHTUB}). With \eqref{eq: expansion} the integral in \eqref{eq: lb after norm} can be written as
\begin{equation}\label{eq: LB slice}
    e^{\beta H_N(m)} \int_{A_0}e^{\beta\nabla H_{N}\left(m\right)\cdot\hat{\sigma}+\beta H_{N}\left(\hat{\sigma}\right)}d\sigma.
\end{equation}
This integral reveals itself as the partition function of a spherical
model on $A_0$ with external field $\beta \nabla H_{N}\left(m\right)$ and Hamiltonian
$H_{N}^{m}\left(\hat{\sigma}\right)$. Rewriting in this way is useful
if we can expect this integral to be estimated in a simpler way than the original one, by a simpler expression. If the external field field vanishes, and if $\beta$ is small enough depending on the covariance
$\xi^{m}$ then the expression \eqref{eq: paramag FE} gives such a simple expression. We therefore
restrict our attention only to $m$:s such that $\nabla H_{N}\left(m\right)\propto m$
so that the partition function on the bottom line of \eqref{eq: LB slice} has no external field, and the covariance $\xi^{|m|^2}$ of the slice satisfies \eqref{eq: internal stability}. The  covariance $\xi^{|m|^2}$ satisfies \eqref{eq: internal stability} precisely if \eqref{eq: our plefka} holds; this is the motivation for \eqref{eq: our plefka}.
When this is the case we can heuristically use \eqref{eq: paramag FE} on the partition function in \eqref{eq: LB slice} and we obtain that \eqref{eq: LB slice} should equal
$$e^{\beta H_N(m) + \frac{\beta^2}{2}\xi_q(1)},$$
for any fixed $m$ such that \eqref{eq: our plefka} and $\nabla H_N(m) \propto m$ are satisfied. For such $m$ , it would follow from \eqref{eq: LB slice} that
\begin{equation}
Z_{N} \ge e^{\beta H_N(m) + \frac{N}{2}\log(1-|m|^2) + \frac{\beta^2}{2}\xi_q(1)} = e^{H_{\TAP}(m) + o(N)}.
\end{equation}
Thus we have heuristically arrived at the formula \eqref{eq:HTAPdef} for the TAP energy.

To obtain the best possible lower bound, it is natural to maximize $H_{\TAP}(m)$, leading one to consider $m$ that are maximizers of $H_{\TAP}(m)$. These will be critical points of $H_{\TAP}(m)$, which because of the spherical symmetry of all terms in $H_{\TAP}$ except $H_N(m)$ means that $m$ will be a critical point of $H_N$ in the spherical metric, which incidentally is equivalent to the condition $\nabla {H_N}(m) \propto m$ assumed above to find a heuristic lower bound for the partition function.
Thus, heuristically, we arrive at the lower bound
\begin{equation}
\begin{array}{ccl}
Z_{N} \ge e^{H_{\TAP}(m) + o(N)}\text{ for any local maximum $m$ of $H_{\TAP}$ satisfying \eqref{eq: our plefka}}.
\end{array}
\end{equation}

If there are many local maxima, it is natural that these need to be added up to give the true magnitude of $Z_N$. Assuming that any over-counting arising in this way causes only lower order errors, we heuristically arrive at the estimate
$$Z_N = \sum_{m:\text{ loc max of } H_{\TAP}, |m|^2 \in D_\beta} e^{H_{\TAP}(m) + o(N)}.$$
Since
$$ \lim_{N \to\infty} \frac{1}{N}\log \left( \sum_{m:\text{ loc max of } H_{\TAP}, |m|^2 \in D_\beta} e^{H_{\TAP}(m) + o(N)} \right) = \fTAP(\beta),$$
we arrive heuristically at \eqref{eq:conj}.

One can show that \eqref{eq: internal stability} is equivalent to $\beta < \beta_d$ when $\xi(x)=x^p$, so that if \eqref{eq:conj} is true, then
it follows from Theorem \ref{thm:dynLTstatHT} that the estimate \eqref{eq: paramag FE} in fact remains true
also for a range of $\beta$ that do not satisfy \eqref{eq: internal stability}, but for a very different reason, as comparing Theorems \ref{thm:sndthrm} and \ref{thm:dynLTstatHT} shows.

Proving that \eqref{eq: internal stability} indeed implies \eqref{eq: paramag FE} is the subject of active research and beyond
the scope of the present article. A direct rigorous proof of such an implication will likely
combine with the theorems of this paper to yield a fully rigorous
computation of the free energy via a TAP approach. The condition \eqref{eq: internal stability} arises from a replica calculation. It is possible that a different method would lead to a different, but ultimately equivalent condition. We hope that future work will find a direct proof of \eqref{eq: internal stability} that gives rise to a such a condition. It is also possible that a non-equivalent condition is obtained. However, in the proofs of the present article we only use the following properties of the set $D_\beta$.
\begin{lem}\label{lem:ourplefkaprofs}
The family of sets $(D_\beta)_{\beta \ge 0}$ in \eqref{eq: our plefka} satisfies
 \begin{enumerate}
 \item 
  $D_\beta \cap \left[ \frac{p-2}{p},1\right] = \left\{q \in \left[ \frac{p-2}{p},1\right]: \beta_2(q) \le \frac{1}{\sqrt{2}}\right\}.$
 \item $D_\beta \cap \left[0, \frac{p-2}{p}\right)\subset \left\{q \in \left[ 0,\frac{p-2}{p}\right): \beta_2(q) \le \frac{1}{\sqrt{2}}\right\}$.
 \item $0\in D_\beta \iff \beta \leq \beta_d$. \end{enumerate}
\end{lem}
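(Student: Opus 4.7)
All three claims reduce to analyzing when $A(q,\beta)=\sup_{r\in(0,1)}\phi(r,q)\le 0$, where $\phi(r,q):=\beta^{2}\frac{\xi'(q+r(1-q))(1-q)-\xi'(q)(1-q)}{r}-\frac{1}{1-r}$. The plan is first to extract the easy necessary condition by taking $r\to 0^+$, and then to handle the remaining direction in (1) by a monotonicity analysis of an auxiliary polynomial. Throughout I use $\xi(x)=x^p$, so $\xi'(x)=px^{p-1}$, $\xi''(x)=p(p-1)x^{p-2}$, and $\beta_2(q)^2=\tfrac{1}{2}\beta^2 p(p-1)q^{p-2}(1-q)^2$.

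For the forward inclusion in (1) and for (2): a Taylor expansion of $\xi'$ at $q$ gives $\lim_{r\to 0^+}\phi(r,q)=\beta^{2}\xi''(q)(1-q)^{2}-1=2\beta_2(q)^2-1$. By continuity in $r$, $A(q,\beta)\ge 2\beta_2(q)^2-1$, so $A(q,\beta)\le 0$ immediately forces $\beta_2(q)\le 1/\sqrt 2$. This disposes of (2) and the forward direction in (1).

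For the reverse inclusion in (1), I substitute $s=q+r(1-q)\in(q,1)$ and rewrite the condition $A(q,\beta)\le 0$ as $\beta^{2}p(1-q)\,g(s)\le 1$ for all $s\in[q,1]$, where $g(s):=(1-s)\,\frac{s^{p-1}-q^{p-1}}{s-q}=(1-s)\sum_{k=0}^{p-2}s^{k}q^{p-2-k}$. Since $g(q)=(p-1)q^{p-2}(1-q)$, Plefka's condition $\beta_2(q)\le 1/\sqrt 2$ is exactly $\beta^{2}p(1-q)g(q)\le 1$, so it suffices to prove that $\sup_{s\in[q,1]}g(s)=g(q)$ whenever $q\ge(p-2)/p$. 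Multiplying the defining identity by $s-q$ and differentiating gives $g'(s)(s-q)=\Phi(s)-g(s)$ where $\Phi(s):=-ps^{p-1}+(p-1)s^{p-2}+q^{p-1}$. Evaluating at $s=q$ yields $2g'(q)=\Phi'(q)=(p-1)q^{p-3}((p-2)-pq)\le 0$ for $q\ge (p-2)/p$. Differentiating once more shows that at any critical point $s_{0}\in(q,1)$, $g''(s_{0})(s_{0}-q)=\Phi'(s_{0})=(p-1)s_{0}^{p-3}((p-2)-ps_{0})<0$, so $g''(s_{0})<0$ and $s_{0}$ is a strict local maximum. A smallest-critical-point argument then gives the required contradiction: if $s_{0}\in(q,1)$ is the smallest zero of $g'$, then $g'\le 0$ on $[q,s_{0}]$ (using $g'(q)\le 0$, and noting that at $q=(p-2)/p$ the fact $g''(q)=\Phi''(q)/3<0$ forces $g'<0$ just past $q$), so $g$ is non-increasing on $[q,s_{0}]$, incompatible with $s_{0}$ being a strict local max. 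Hence $g$ is decreasing on $[q,1]$, and (1) follows.

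For (3), I compute at $q=0$: $\phi(r,0)=\beta^{2}pr^{p-2}-\frac{1}{1-r}$, so $A(0,\beta)\le 0$ is equivalent to $\beta^{2}pr^{p-2}(1-r)\le 1$ for all $r\in(0,1)$. The one-variable optimum of $r^{p-2}(1-r)$ is attained at $r=(p-2)/(p-1)$ with value $(p-2)^{p-2}/(p-1)^{p-1}$, so the condition becomes $\beta^{2}\le (p-1)^{p-1}/(p(p-2)^{p-2})=\beta_d^{2}$. The main technical obstacle is the monotonicity claim for $g$ in the middle paragraph, which is what the subtle condition $q\ge(p-2)/p$ is there to ensure via the sign of $\Phi'$.
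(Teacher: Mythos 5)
Your proof is correct, and for the hard direction of (1) you take a genuinely different route from the paper. After the change of variables $s=q+r(1-q)$, you reduce $A(q,\beta)\le 0$ to the monotonicity claim $\sup_{s\in[q,1]}g(s)=g(q)$ for $g(s)=(1-s)\frac{s^{p-1}-q^{p-1}}{s-q}$, and prove it by a smallest-critical-point argument: the differential identity $g'(s)(s-q)=\Phi(s)-g(s)$ and its consequences $2g'(q)=\Phi'(q)\le 0$, $g''(s_0)(s_0-q)=\Phi'(s_0)<0$ at any interior critical point $s_0$ together rule out any critical point in $(q,1)$. The paper's proof of the same inclusion stays in the original variable $r$, writes the fraction as $\frac{(1+r\frac{1-q}{q})^{p-1}-1}{(p-1)r\frac{1-q}{q}}$ and shows directly via the binomial expansion and the estimates $\binom{p-1}{k+1}\le\frac{(p-1)(p-2)^k}{(k+1)!}$, $\frac{2^k}{(k+1)!}\le 1$, $\sum_{k\ge 0}r^k=\frac{1}{1-r}$ that for $q\ge\frac{p-2}{p}$ it is at most $\frac{1}{1-r}$. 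The paper's argument is shorter and fully elementary (one display with binomial coefficients); your argument is more conceptual, identifying the precise geometric mechanism (that $g$ is decreasing past $q$ exactly when $q\ge\frac{p-2}{p}$) and handling the boundary case $q=\frac{p-2}{p}$ via the second derivative $g''(q)=\Phi''(q)/3<0$, which the paper never needs to isolate. Parts (2) (limit $r\downarrow 0$) and (3) (optimizing $r^{p-2}(1-r)$ at $r=\frac{p-2}{p-1}$) match the paper's proof.
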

As these are the only properties of $D_\beta$ used, all our results will remain true if our condition is replaced by any other condition also satisfying these properties:
\begin{teor}\label{thm:condequiv}
 If $(D_\beta)_{\beta \ge 0}$ in \eqref{eq: our plefka} is replaced by any collection of sets indexed by $\beta$ that satisfy Lemma \ref{lem:ourplefkaprofs} 1), 2) and 3) then all the results stated in the introduction remain true. In fact everything except the proof of Lemma 2.1 remains exactly the same. 
\end{teor}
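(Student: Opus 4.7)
The plan is a bookkeeping verification that the set $D_\beta$ enters every proof in the paper only through the three properties listed in Lemma \ref{lem:ourplefkaprofs}; once this is confirmed, the theorem follows because the arguments become verbatim applicable to any admissible family. The natural structure is to separately characterize zero and nonzero \TP\ solutions.

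For nonzero \TP\ solutions, my plan is to combine properties 1) and 2) with the deterministic analysis of Section \ref{sec:det}. Properties 1) and 2) together imply that $|m|^2 \in D_\beta$ forces Plefka's condition $\beta_2(|m|^2) \le 1/\sqrt{2}$ regardless of whether $|m|^2$ lies above or below $\frac{p-2}{p}$. The deterministic Theorem \ref{thm: det char f} then yields the stronger conclusion $|m|^2 \ge \frac{p-2}{p}$ from Plefka plus the critical-point equation for a local maximum. Invoking property 1) once more, the nonzero \TP\ solutions are exactly the nonzero TAP local maxima whose squared magnitude lies in $\{q \in [\frac{p-2}{p},1] : \beta_2(q) \le 1/\sqrt{2}\}$, a set manifestly independent of the particular $D_\beta$ chosen.

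For $m=0$, the point is a trivial critical point of $H_{\TAP}$, and its status as a local maximum follows from a direct Hessian computation that does not involve $D_\beta$. Whether $m=0$ qualifies as a \TP\ solution reduces to membership $0 \in D_\beta$, which by property 3) is equivalent to $\beta \le \beta_d$ for any admissible $D_\beta$. Together the two cases yield a description of the collection of \TP\ solutions purely in terms of properties 1), 2), 3).

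Having reduced everything to these three properties, I would then traverse the arguments behind Theorems \ref{thm:comp}--\ref{thm:fstthrm}, the complexity results of Section \ref{computeI}, the formulas of Section \ref{sec: umax umin}, and the phase-diagram computations of Section \ref{optimize}, checking in each case that $D_\beta$ appears only through the abstract characterization above. The main obstacle is the audit itself: one must verify that no proof step tacitly invokes the explicit formula \eqref{eq: Adef} for $A(q,\beta)$ — for instance through a monotonicity, continuity, or compactness claim about the set $D_\beta$ — rather than through the three abstract properties. The theorem's assertion that the proofs remain exactly the same amounts to the claim that no such step exists, so once the audit is complete no rewriting is required.
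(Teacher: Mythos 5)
Your proposal is correct and matches the paper's reasoning: the paper offers no standalone proof of Theorem~\ref{thm:condequiv}, relying instead on the assertion, stated just before Lemma~\ref{lem:ourplefkaprofs}, that the proofs only ever invoke properties 1), 2), 3). Your intermediate step --- first deriving a $D_\beta$-independent characterization of \TP\ solutions ($m=0$ is relevant iff $\beta\le\beta_d$ by property 3; $m\ne 0$ is relevant iff $\hat m$ is a local maximum of $H_N$, $|m|^2$ is a local maximum of $q\mapsto f(H_N(\hat m),q)$, and $|m|^2\in[\tfrac{p-2}{p},1]$ with $\beta_2(|m|^2)\le 1/\sqrt 2$, by properties 1, 2 together with Theorem~\ref{thm: det char f}) --- is a tidy way to organize the required audit, since everything in Sections~\ref{sec: comp thres}--\ref{optimize} factors through this characterization rather than touching $D_\beta$ directly.
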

\subsection{Relation between conditions}\label{sec:CondRel}
In this section we prove the properties of $D_\beta$ stated in Lemma \ref{lem:ourplefkaprofs}, that are needed for the analysis in this paper. 
Before stating the results we recall \eqref{eq: our plefka} and \eqref{eq: Adef} which state that $D_\beta= \{q\in[0,1]: A(q,\beta)\leq 0\}$, where
\begin{equation}\label{eq: Adef 2}
    A\left(q,\beta\right)=\sup_{r\in\left(0,1\right)}\left( \beta^{2}\frac{\xi'\left(q+r\left(1-q\right)\right)(1-q)-\xi'\left(q\right)\left(1-q\right)}{r}-\frac{1}{1-r}\right).
\end{equation}
\begin{proof}[Prof of Lemma \ref{lem:ourplefkaprofs} 1) 2)]
We first show that for $q\in[0,1]$, $\beta\geq0$ we have
\begin{equation}\label{eq: condrel1}
    A(q,\beta) \leq 0 \Rightarrow \beta_2(q)\leq \frac{1}{\sqrt{2}}.
\end{equation}     
Let $A(q,\beta)\leq 0$. Then by the definition of \eqref{eq: Adef 2} of $A$
\[ 0 \geq \lim_{r\searrow 0} \left( \beta^2(1-q) \frac{\xi'(q+r(1-q))-\xi'(q)}{r}-\frac{1}{1-r} \right) = \beta^2 (1-q)^2 \xi''(q) - 1 \overset{\eqref{eq: beta2 def}}{=} 2\beta_2(q)^2-1,\]
so \eqref{eq: condrel1} follows. This proves Lemma \ref{lem:ourplefkaprofs} 2).

Next we show for $q\geq \frac{p-2}{p}$, $\beta \geq 0$, that we have 
\begin{equation}\label{eq: condrel2}
    \beta_2(q)\leq \frac{1}{\sqrt{2}} \Rightarrow  A(q,\beta) \leq 0,
\end{equation}
which together with \eqref{eq: condrel1} also implies Lemma \ref{lem:ourplefkaprofs} 1).
Since $\xi'\left(x\right)=px^{p-1}$ we have using the definition
\eqref{eq: beta2 def} of $\beta_{2}$ that
\begin{equation}\label{eq: first ineq}
\beta^{2}\left(1-q\right)\frac{\xi'\left(q+r\left(1-q\right)\right)-\xi'\left(q\right)}{r}=2\beta_{2}\left(q\right)^{2}\frac{\left(1+r\frac{1-q}{q}\right)^{p-1}-1}{\left(p-1\right)r\frac{1-q}{q}}.
\end{equation}
Consider
\begin{equation}
\frac{\left(1+r\frac{1-q}{q}\right)^{p-1}-1}{\left(p-1\right)r\frac{1-q}{q}}.
\end{equation}
By the binomial theorem the left-hand equals
\[
\frac{1}{p-1}\sum_{k=0}^{p-2}{{p-1}\choose{k+1}}\left(r\frac{1-q}{q}\right)^{k}. 
\]
Now if $q\ge\frac{p-2}{p}$ then $\frac{1-q}{q}\le\frac{2}{p-2}$
so that this is at most
\[
\frac{1}{p-1}\sum_{k=0}^{p-2}{p-1 \choose k+1}\left(\frac{2}{p-2}\right)^{k}r^{k}.
\]
Using the inequalities ${p-1 \choose k+1}\le\frac{\left(p-1\right)\left(p-2\right)^{k}}{\left(k+1\right)!}$
and $\frac{2^{k}}{\left(k+1\right)!}\le1$ we get that this is at
most $\sum_{k=0}^{p-2}r^{k}\le\frac{1}{1-r}$. We thus have for all $q\ge\frac{p-2}{p}$ and $r\in\left[0,1\right]$
that
\begin{equation}
\frac{\left(1+r\frac{1-q}{q}\right)^{p-1}-1}{\left(p-1\right)r\frac{1-q}{q}}\le\frac{1}{1-r}.\label{eq: elem ineq}
\end{equation}
Combining this with \eqref{eq: first ineq} we obtain that $A\left(q,\beta\right)\le0$
when $2\beta_{2}\left(q\right)^{2}\le1$.

\end{proof}
\begin{proof}[Proof of Lemma \ref{lem:ourplefkaprofs} 3)]
We show that
\[ A(0,\beta) \leq 0\Leftrightarrow \beta\leq \beta_d.\]
By definition \eqref{eq: Adef 2} of $A$  the expression $A(0,\beta)\leq 0$ reads 
\[\sup_{r\in\left(0,1\right)}\left( \beta^{2}\frac{\xi'\left(r\right)}{r}-\frac{1}{1-r}\right)\leq 0. \]
Since $\xi'(x)=p x^{p-1}$ this is equivalent to 
\[\sup_{r\in(0,1)} \left( r^{p-2}(1-r) \right) \leq \frac{1}{p\beta^{2}}.  \]
The left hand side is easily checked to be maximized at $r=\frac{p-2}{p-1}$. Thus $A(0,\beta)\le0$ is equivalent to
\[ \frac{(p-2)^{p-2}}{(p-1)^{p-1}} \leq \frac{1}{p\beta^{2}},  \]
which is equivalent to $\beta \le \beta_d$ by the definition \eqref{eq: beta d def} of $\beta_d$.

\end{proof}

\section{Preliminaries}\label{sec: prel}
\subsection{Hamiltonian as random homogeneous polynomial}
We record the standard fact that the Hamiltonian in \eqref{eq: Hamiltonian covar} with $\xi(x)=x^p,p \ge 1,$ can be explicitly constructed by letting 
\begin{equation}\label{eq:hamiltonianassum}
H_{N}(\sigma) = \sqrt{N} \sum_{i_1,\ldots,i_p = 1}^{N} J_{i_1,\ldots,i_p} \sigma_{i_1}\ldots\sigma_{i_p}, \sigma \in \R^N,
\end{equation}
where $J_{i_1,\ldots,i_p}$ are independent standard Gaussians. This implies that $H_N(\sigma)$ is almost surely $p$-homogenous, which will be crucial.
\subsection{Critical point complexity of Hamiltonian}\label{sec: crit point complexity}
In this subsection we recall the precise form of the critical point complexity for pure $p$-spin models.
Let
$$\mathcal{C}_N( A ) = \frac{1}{N} \log | \{ \sigma \in S_{N-1}: \nabla_{\rm{sp}} H_N(\sigma)=0, \frac{1}{N} H_N(\sigma)\in A \} |,$$
where $\nabla_{\rm{sp}}$ denotes the spherical gradient, be the number of critical points of $H_N$ in the spherical metric with scaled energy in the set $A$, and let
$$\mathcal{M}_N( A ) = \frac{1}{N} \log | \{ \sigma \in S_{N-1}: \nabla_{\rm{sp}} H_N(\sigma)=0, m \text{ is loc. max.}, \frac{1}{N} H_N(\sigma)\in A \} |,$$
be the same for local maxima.
Let the log potential of the semi-circle law  $\mu_{sc}$ (with support on $[-1,1]$) be denoted by

\begin{equation}\label{eq: omega}
    \Omega(\eta) = \int \log | \eta - x | \mu_{sc}(dx) = \eta^2 - \frac{1}{2} - \eta \sqrt{\eta^2-1} + \log( \eta + \sqrt{\eta^2-1} ),
\end{equation}
for $\eta \ge 1$. Also let
\begin{equation}\label{eq:g}
g\left(\eta\right)=\begin{cases}
-\infty & \text{ if }\eta<1,\\
\frac{1}{2} + \frac{1}{2}\log(p-1) - 2 \frac{p-1}{p} \eta^2 + \Omega(\eta) & \text{ if }\eta\ge1.
\end{cases}
\end{equation}
Then the annealed complexity is given by the function
\begin{equation}\label{eq:Iann def intro}
I_{\rm{Ann}}\left(E\right)=g\left(\frac{E}{E_\infty}\right),
\end{equation}
and the quenched by
\begin{equation}\label{Thetadef}
I\left(E\right)=\begin{cases}
g\left(\frac{E}{E_\infty}\right) & \text{\,if }g\left(\frac{E}{E_\infty}\right)\ge0,\\
-\infty & \text{if }g\left(\frac{E}{E_\infty}\right)<0,
\end{cases}
\end{equation}
One can verify that
\begin{equation}\label{eq: I decreasing}
    I_{\Ann}(E)\text{ is strictly decreasing on }[E_\infty,\infty),
\end{equation}
and using the notation of \cite{purefstmom} one
\begin{equation}\label{eq: E0 zero of I}
\text{denotes by } E_0 \text{ the unique zero of } I_{\Ann} \text{ in } [E_\infty,\infty),
\end{equation}
so that
\begin{equation}\label{eq: I props}
I\left(E\right)=\begin{cases}
g\left(\frac{E}{E_{\infty}}\right)>0 & \text{ for }E\in[E_{\infty},E_{0}),\\
0 & \text{ for }E = E_{0},\\
-\infty & \text{\,for }E\notin\left[E_{\infty},E_{0}\right],
\end{cases}
\end{equation}
We have the following.
\begin{teor}[\cite{purefstmom}, \cite{subagpure}]\label{thm: comp crit point}
For all $E$
$$ \lim_{\varepsilon \downarrow 0} \lim_{N\to\infty} \mathcal{C}_N( [E-\delta,E+\delta]) = I(E),$$
where the convergence is in probability.
\end{teor}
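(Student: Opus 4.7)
The plan is to prove the theorem via the Kac-Rice formula for the first moment of the critical point count combined with a second moment / concentration argument, following \cite{purefstmom,subagpure}.

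First, I would compute the expected number of critical points in the spherical metric with scaled energy near $E$, i.e.\ $\E[|\{\sigma\in S_{N-1}:\nabla_{\rm{sp}}H_N(\sigma)=0,\ H_N(\sigma)/N\in[E-\varepsilon,E+\varepsilon]\}|]$, via the Kac-Rice formula. Because the law of $H_N$ on $S_{N-1}$ is $O(N)$-invariant, the Kac-Rice integral over the sphere collapses to the surface area of $S_{N-1}$ times the integrand at a single point, namely
$$
p_{\nabla_{\rm{sp}}H_N(\sigma)}(0)\int_{E-\varepsilon}^{E+\varepsilon}\!\!p_{H_N(\sigma)/N}(E')\,\E\bigl[|\det\nabla^2_{\rm{sp}}H_N(\sigma)|\,\bigl|\,\nabla_{\rm{sp}}H_N(\sigma)=0,\ H_N(\sigma)=NE'\bigr]\,dE'.
$$

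Second, I would use the $p$-homogeneity of $H_N$ implied by \eqref{eq:hamiltonianassum} to relate the spherical Hessian to the Euclidean one. At a spherical critical point the spherical Hessian on $T_\sigma S_{N-1}$ equals $\nabla^2 H_N(\sigma)|_{T_\sigma}-pE'\,\mathrm{Id}$, and the conditional law of $\nabla^2 H_N(\sigma)|_{T_\sigma}$ given $(H_N(\sigma),\nabla H_N(\sigma))$ is a GOE matrix of dimension $N-1$ with variance tuned so that its empirical spectral distribution converges to the semicircle law on $[-E_\infty,E_\infty]$. Consequently $\frac{1}{N}\log\E[|\det\nabla^2_{\rm{sp}}H_N|]\to\Omega(E/E_\infty)$ with $\Omega$ the log-potential from \eqref{eq: omega}, and combining this with the Gaussian densities (which contribute $\frac{1}{2}+\frac{1}{2}\log(p-1)-2\frac{p-1}{p}(E/E_\infty)^2$ after a Laplace-type evaluation of the $E'$ integral as $\varepsilon\downarrow 0$) yields
$$
\lim_{\varepsilon\downarrow 0}\lim_{N\to\infty}\frac{1}{N}\log\E\bigl[\mathcal{N}_N([E-\varepsilon,E+\varepsilon])\bigr]=g(E/E_\infty)=I_{\rm Ann}(E),
$$
where $\mathcal{N}_N$ denotes the critical-point count. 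Markov's inequality immediately gives the upper bound $\limsup_N\mathcal{C}_N([E-\varepsilon,E+\varepsilon])\le g(E/E_\infty)$ in probability, which already settles the case $g(E/E_\infty)<0$ since then $\mathcal{N}_N\to 0$ in probability, i.e.\ $\mathcal{C}_N\to-\infty$.

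For the matching lower bound in the regime $g(E/E_\infty)>0$ (i.e.\ $E\in[E_\infty,E_0)$), I would apply the second moment method of \cite{subagpure}. The point is to compute $\E[\mathcal{N}_N^2]$ as a double Kac-Rice integral parametrized by the overlap $u=\sigma\cdot\tau\in[-1,1]$, which involves the determinant of a conditional $(2N-2)\times(2N-2)$ Hessian block matrix whose spectrum can be analyzed via the covariance structure given by $\xi(x)=x^p$. One then shows the integrand (as a function of $u$) is maximized at $u=\pm 1$, where it equals $2I_{\rm Ann}(E)$, while strictly smaller values occur for $|u|<1$; summing the two symmetric contributions gives $\E[\mathcal{N}_N^2]\le(1+o(1))\E[\mathcal{N}_N]^2 e^{o(N)}$, and Chebyshev's inequality concentrates $\mathcal{N}_N$ around its mean. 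The main obstacle is precisely this off-diagonal analysis: one must rule out strict maxima of the overlap profile on $(-1,1)$ by an explicit comparison, which is the technical heart of \cite{subagpure} and requires careful bookkeeping of the conditional GOE-type spectra along the overlap parameter. Concentration then upgrades the annealed statement to the quenched one, completing the proof.
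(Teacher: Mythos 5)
The paper does not prove Theorem~\ref{thm: comp crit point} itself: it is stated as a recalled result with the bracketed citations to Auffinger--Ben~Arous--\v{C}ern\'y and Subag, and only the subsequent Corollary (the local-maxima version) is actually argued in the text, by invoking specific theorems from those references. So there is no ``paper's own proof'' to compare against. Your sketch is a faithful high-level summary of the strategy in the cited references: Kac--Rice plus the GOE Hessian computation for the annealed first moment, Markov for the upper bound (and the $g<0$ regime), and Subag's second-moment/overlap analysis with Chebyshev for the matching lower bound and concentration when $g>0$.

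Two small points worth tightening if you were to actually write this out. First, Subag's second-moment theorem is stated and most cleanly used for cumulative counts of the form $\mathcal{N}_N([E,\infty))$ rather than the local window $[E-\varepsilon,E+\varepsilon]$; the local statement in Theorem~\ref{thm: comp crit point} then follows by monotonicity and strict monotonicity of $I$ on $[E_\infty,E_0]$, not directly from the second moment. Second, your bound $\E[\mathcal{N}_N^2]\le(1+o(1))\E[\mathcal{N}_N]^2 e^{o(N)}$ with the extra $e^{o(N)}$ slack is \emph{not} enough for Chebyshev to give in-probability concentration of the count; what Subag actually proves is the sharper ratio $\E[\mathcal{N}_N^2]/\E[\mathcal{N}_N]^2\to 1$ (no $e^{o(N)}$ loss), and that sharpness is precisely what makes the Chebyshev step work.
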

Note that Theorem \ref{thm: comp crit point} and \eqref{eq: I props} imply \eqref{eq: ground state}.

From Theorem \ref{thm: comp crit point} one easily derives the equivalent result for local maxima.
\begin{cor}
For all $E$
$$ \lim_{\varepsilon \downarrow 0} \lim_{N\to\infty} \mathcal{M}_N( [E-\delta,E+\delta]) = I(E),$$
where the convergence is in probability.
\end{cor}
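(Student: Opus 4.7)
The plan is to reduce the corollary to Theorem \ref{thm: comp crit point} using the standard fact that for the pure $p$-spin spherical Hamiltonian, critical points whose scaled energy lies strictly above the threshold $E_\infty$ are with overwhelming probability local maxima. The upper bound is immediate: every local maximum is a critical point, so
\[
\mathcal{M}_N([E-\varepsilon,E+\varepsilon]) \le \mathcal{C}_N([E-\varepsilon,E+\varepsilon]),
\]
and passing to the limit via Theorem \ref{thm: comp crit point} yields $\limsup_{\varepsilon\downarrow 0}\limsup_{N\to\infty} \tfrac{1}{N}\log \mathcal{M}_N \le I(E)$ in probability.

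For the matching lower bound at $E\in(E_\infty,E_0]$, I would invoke the Kac--Rice style spectral analysis of the Hessian at a critical point. Conditional on $\sigma\in S_{N-1}$ being a critical point with $H_N(\sigma)/N$ near $E$, the spherical Hessian $\nabla^2_{\rm sp}H_N(\sigma)$ is distributed (up to a rank-one deterministic perturbation and scaling) like a GOE matrix whose limiting empirical spectrum is a semicircle shifted by a multiple of $E$. The shift is calibrated precisely so that the top of the semicircle equals zero at $E=E_\infty$; for $E>E_\infty$ it is strictly negative, and so the Hessian is asymptotically negative definite. Combining a large-deviation bound on the top eigenvalue of the shifted GOE with a union bound over the at most $e^{N(I(E)+o(1))}$ critical points in a small neighborhood of $E$ shows that, with probability tending to one, every such critical point is in fact a local maximum. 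Hence $\mathcal{M}_N([E-\varepsilon,E+\varepsilon]) = \mathcal{C}_N([E-\varepsilon,E+\varepsilon])$ up to subexponential (in $N$) corrections, and the lower bound follows from Theorem \ref{thm: comp crit point}.

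The boundary and exterior cases are then handled separately: for $E<E_\infty$ or $E>E_0$ one has $I(E)=-\infty$ and the upper bound already delivers the claim, while at the endpoint $E=E_\infty$ one approximates from the right and invokes the continuity of $I$ on $[E_\infty,E_0]$ coming from the smoothness of $g$ in \eqref{eq:g}.

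The main technical obstacle is quantifying the spectral statement with sufficient precision to justify the union bound, since a priori there are exponentially many critical points. However, this exact control is already embedded in the first-moment computations of \cite{purefstmom,subagpure}: those proofs establish Theorem \ref{thm: comp crit point} by decomposing the Kac--Rice integral according to the index of the Hessian and showing that the full index (i.e. the local-maximum contribution) saturates the total complexity for $E>E_\infty$. For this reason the authors label the deduction as ``easily'' following from Theorem \ref{thm: comp crit point}, and essentially only the packaging of the referenced spectral estimate into the form \eqref{eq: loc max quenched} needs to be carried out.
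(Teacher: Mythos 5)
Your proposal correctly identifies the easy upper bound and correctly gestures at the spectral picture, but the central step of the lower bound is misstated. The claim that for $E>E_\infty$ ``with probability tending to one, every such critical point is in fact a local maximum'' is false for $E$ near $E_\infty$: Theorem~2.5 of~\cite{purefstmom} shows that the annealed complexity of index-$k$ critical points at energy $E$ is $g(E/E_\infty)-kJ_1(E/E_\infty)$, and since $J_1\to0$ as $E\downarrow E_\infty$ while $g(1)>0$, this is strictly positive for $k\ge1$ and $E$ close to $E_\infty$. So there \emph{are} exponentially many non-maximal critical points in that range, and a union bound of the form ``$e^{NI(E)}$ points each failing with probability $e^{-Nc(E)}$'' cannot conclude that none fail, because $c(E)\le I(E)$ there. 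The correct (and weaker, but sufficient) statement, and the one the paper actually proves, is that $\mathcal{C}_N-\mathcal{M}_N$ has exponential rate $I(E)-I_1(E)<I(E)$, i.e.\ the saddles are an exponentially small \emph{fraction} of all critical points, so $\mathcal{M}_N=\mathcal{C}_N(1-o(1))$ at exponential scale. Your closing paragraph, which defers to the Kac--Rice index decomposition in~\cite{purefstmom,subagpure}, is the right fallback and essentially is the paper's proof; but as written the intermediate union-bound argument would not go through as a self-contained replacement.

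Two further gaps worth noting. First, controlling $\mathcal{C}_N-\mathcal{M}_N$ requires handling indices $k$ that grow with $N$: the paper combines Theorem~2.5 (fixed $k$) with Theorem~2.15 of~\cite{purefstmom} (all $k\ge K$) together with the fact $E_k\downarrow E_\infty$, and your outline does not address this uniform-in-$k$ issue. Second, the references supply the upper-tail quantity $\mathcal{C}_N([E,\infty))$, while the corollary concerns the window $\mathcal{M}_N([E-\varepsilon,E+\varepsilon])$; the paper passes from one to the other using that $I$ is strictly decreasing on $(E_\infty,E_0)$, a step your proposal omits except in the boundary discussion.
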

\begin{proof}
Let $\mathcal{C}([E,\infty))$ denote the number of all critical
points of $H_{N}\left(\sigma\right)$ on $S_{N-1}$ with $\frac{H_{N}\left(\sigma\right)}{N}\in[E,\infty)$,
and let $\mathcal{M}\left(\left[E,\infty\right)\right)\le\mathcal{C}\left(\left[E,\infty\right)\right)$ denote
the number of local maxima satisfying the same condition. Theorems 2.5 and
2.8 \cite{purefstmom} show that
\[
\lim_{N\to\infty}\frac{1}{N}\log\mathbb{E}\left[\mathcal{C}([E,\infty))\right]=\lim_{N\to\infty}\frac{1}{N}\log\mathbb{E}\left[\mathcal{M}([E,\infty))\right] = g\left(\frac{E}{E_{\infty}}\right)\text{ for all }E\ge E_{\infty},
\]
for $g$ given by \eqref{eq:g} (cf. (2.15)-(2.16) of \cite{purefstmom}; the results of \cite{purefstmom} and \cite{subagpure} are stated for negative energies and local minima, since $H_N \overset{law}{=} -H_N$ the equivalent results for local maxima stated here and below follow). Since $\mathcal{M}([E,\infty))$ is an integer it follows by Markov's inequality that
\[
\lim_{N\to\infty}\frac{1}{N}\log\mathcal{C}([E,\infty))=\lim_{N\to\infty}\frac{1}{N}\log\mathcal{M}([E,\infty))=-\infty\text{\,for }E\ge E_{0},
\]
for $E_{0}$ defined below \eqref{eq: loc max quenched}, where the limits are in probability. Corollary
2 of \cite{subagpure} implies that 
\begin{equation}
\lim_{N\to\infty}\frac{1}{N}\log\mathcal{C}([E,\infty))=I\left(E\right)\text{\,for }E\in\left(E_{\infty},E_{0}\right).\label{eq: subag LB}
\end{equation}
Theorem 2.5 \cite{purefstmom} shows that for any fixed $k \ge 1$ the number of critical points of index $k$ is much smaller than $e^{N I(E)}$, strongly suggesting that \eqref{eq:local max interval} below follows. To also cover the case of diverging $k$ we invoke Theorem 2.15 \cite{purefstmom} and the fact that $E_k$ ($E_k(3)$ in the notation of \cite{purefstmom}) satisfies $\lim_{k\to\infty} E_k = E_\infty$. The latter shows that for any $E>E_\infty$ there is a $K$ such that $E>E_K$. Then using Theorem 2.5 \cite{purefstmom} for critical points of index $1,\ldots,K$ and Theorem 2.15 \cite{purefstmom} for indices larger than $K$ we get
\begin{equation}\label{eq: ub other index}
\lim_{N\to\infty}\frac{1}{N}\log\left(\mathcal{C}([E,\infty))-\mathcal{M}([E,\infty))\right)\le I(E) - I_1(E),
\end{equation}
for $I_{1}$ as in (2.14) of \cite{purefstmom}, which is positive for all $E\in\left(E_{\infty},E_{0}\right)$.
From (\ref{eq: subag LB}) and (\ref{eq: ub other index}) it follows
that in fact
\begin{equation}\label{eq:local max interval}
    \lim_{N\to\infty}\frac{1}{N}\log\mathcal{M}([E,\infty))=I\left(E\right)\text{\,for }E\in\left(E_{\infty},E_{0}\right).
\end{equation}
Since $I\left(E\right)$ is strictly decreasing for $E\in\left(E_{\infty},E_{0}\right)$ the claim \eqref{eq: loc max quenched} follows.
\end{proof}

\section{Deterministic characterization of \TP\ solutions}\label{sec:det}
In this section we derive a characterization of \TP\ solutions that arises determinstically from the condition that $m$ must be a local maximum of $H_{\TAP}$ and satisfy $|m|^2 \in D_\beta$, together with two basic deterministic properties of $H_N$ (namely \eqref{eq: g assumptions} below). We will make no reference to the random behavior of $H_N$. 

To formulate the results  define for
\begin{equation}\label{eq: g assumptions}
\text{any $p$-homogeneous twice differentiable function $g:B_N\to\mathbb{R}$
 for $p\ge3$}
\end{equation}
the $g$-TAP energy 
\begin{equation}\label{eq:HTAPgdef}
\begin{array}{rcl}
\HTAPg\left(m\right) &= &
N h_{\TAP}\left( \HNg(m), |m|^2\right)\\
&\overset{\eqref{eq: hTAP def}}{=} & N \beta  g\left(m\right)+\frac{N}{2}\log\left(1-\left|m\right|^{2}\right)
+
N\On(|m|^2),
\end{array}
\end{equation}
and say that $m$ is a \gTAP\, solution if $\nabla H_{\TAP}^{g}\left(m\right)=0$.
If $m$ is a local maximum of $H_{\TAP}^{g}\left(m\right)$ and $|m|^2 \in D_\beta$ we call
it a relevant \gTAP\, solution. 

The energy $\frac{1}{N}H_N(m)$ almost surely satisfies the conditions of \eqref{eq: g assumptions} (as can be seen from \eqref{eq:hamiltonianassum}), and the subsequent sections will use the results of this section with $g(m)=\frac{1}{N}H_N(m)$. With this choice a (relevant) \gTAP\, solution is a (relevant) TAP solution, and $\HTAPg=H_{\TAP}$.

There is a mapping between \gTAP\, solutions and local maxima of $g$. To see this, note that all terms in the bottom line of \eqref{eq:HTAPgdef} except the term $N \beta g(m)$ are spherically symmetric, so that any non-zero local maximum $m$ of $\HTAPg(m)$ must also be a local maximum of $\HNg$ on any sphere $\{ \sigma: |\sigma|^2=q\}$, $q\in (0,1]$.
Using also that $\HNg$ is $p$-homogenous and letting $\hat{m}$ denote $m/|m|$, we have that $\mhattext$ is a local maximum of $\HNg$ on $S_{N-1}$. Conversely, if $\mhattext$ is a local maximum of $\HNg$ on $S_{N-1}$ then $m$ is local maximum of $\HTAPg$ if it is also a local maximum in the radial direction, that is if
$$q \to h_{\TAP}\left( q^{p/2} \HNg(\hat{m}), q\right)$$
has a local maximum at $q=|m|^2$. For brevity let
\begin{equation}\label{eq: f def}
\begin{array}{rcl}
    f(E,q) & = & h_{\TAP}(q^{p/2}E,q)\\
           & \overset{\eqref{eq: hTAP def},\eqref{eq: Onsager Def}}{=} & \beta q^{p/2}E + \frac{1}{2}\log(1-q) + \frac{\beta^{2}}{2} (\xi(1) - \xi'(q)(1-q) - \xi(q)),
\end{array}
\end{equation}
so that for all $m$
\begin{equation}\label{eq: HTAP to f}
    \HTAPg(m) = f( \HNg(\hat{m}), |m|^2).
\end{equation}
We then have:
\begin{lem}\label{lem: det TAP sol radial sym}
For any $g$ as in \eqref{eq: g assumptions}:
\begin{enumerate}
\item $m\in B_N\setminus\{0\}$ is a relevant \gTAP\, solution iff $\left|m\right|^2 \in D_{\beta}$,
$\mhatdisplay$ is a local maximum of $g$ on $S_{N-1}$ and $|m|^2$ is a local maximum of $q\to f\left(\HNg(\hat{m}),q\right)$.

\item $m=0$ is always a local maximum of $\HTAPg\left(m\right)$ and iff $\beta\le\beta_{d}$ it is a relevant \gTAP\ solution.
\end{enumerate}
\end{lem}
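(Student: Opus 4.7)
\emph{Plan.} The plan is to exploit the separable structure of $\HTAPg$ in polar coordinates, which follows directly from the $p$-homogeneity of $g$. The first step is to observe that, using $g(r\sigma) = r^p g(\sigma)$ together with \eqref{eq:HTAPgdef}, for $r \in (0,1)$ and $\sigma \in S_{N-1}$ one has
$$
\HTAPg(r\sigma) = N\beta r^p g(\sigma) + \frac{N}{2}\log(1-r^2) + N\On(r^2) = N f(g(\sigma), r^2),
$$
with $f$ as in \eqref{eq: f def}. For fixed $r > 0$ this is an affine-increasing function of $g(\sigma)$ (slope $N\beta r^p > 0$), while for fixed $\sigma$ it depends on $r$ only through $q = r^2$. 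This decoupling is what will let me bound angular and radial variations independently in a neighborhood of any nonzero $m$.

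For part~(1) I would then handle both directions. For $(\Rightarrow)$: if $m \ne 0$ is a local maximum of $\HTAPg$, restricting to the sphere $\{m' : |m'| = |m|\}$ and using the positive slope above gives that $\hat{m}$ is a local max of $g$ on $S_{N-1}$; restricting to the ray $\{t\hat{m} : t \in (0,1)\}$ gives that $|m|^2$ is a local max of $q \mapsto f(g(\hat{m}), q)$. For $(\Leftarrow)$: the polar decomposition is a diffeomorphism on $B_N^o \setminus \{0\}$, so any $m'$ near $m$ can be written as $r'\sigma'$ with $r'$ close to $r = |m|$ and $\sigma'$ close to $\sigma = \hat{m}$, and then
$$
\HTAPg(r'\sigma') = N f(g(\sigma'), (r')^2) \le N f(g(\sigma), (r')^2) \le N f(g(\sigma), r^2) = \HTAPg(m),
$$
where the first inequality uses monotonicity of $f$ in its first argument (since $(r')^2 > 0$) combined with local maximality of $\hat{m}$ for $g$ on $S_{N-1}$, and the second uses local maximality of $r^2$ for $q \mapsto f(g(\sigma), q)$. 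The relevance clause $|m|^2 \in D_\beta$ is common to both sides of the equivalence and so passes through trivially.

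For part~(2) I would Taylor expand $\HTAPg$ at $0$. Since $g$ is $p$-homogeneous with $p \ge 3$, $g(m) = |m|^p g(\hat{m}) = O(|m|^p) = o(|m|^2)$. A direct computation from \eqref{eq: Onsager Def} with $\xi(x) = x^p$ shows $\On(q) - \On(0) = O(q^{p-1})$, hence $\On(|m|^2) - \On(0) = o(|m|^2)$ for $p \ge 3$, so only the spherically symmetric $\log$ term contributes at quadratic order:
$$
\HTAPg(m) - \HTAPg(0) = -\frac{N|m|^2}{2} + o(|m|^2),
$$
which makes $m = 0$ a strict local maximum. The relevance assertion then reduces to $0 \in D_\beta \iff \beta \le \beta_d$, which is exactly Lemma~\ref{lem:ourplefkaprofs}(3).

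The main subtle point is the converse in part~(1): the two-step chain of inequalities works only because the polar decomposition is a diffeomorphism away from the origin, so $|m' - m|$ small genuinely forces simultaneous smallness of both $|r' - r|$ and the spherical distance from $\sigma'$ to $\sigma$. This is the crucial place where $r > 0$ (together with $p$-homogeneity) is used; without this separation from the singular point, joint local maximality would not be deducible from coordinate-wise local maximality.
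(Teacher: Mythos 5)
Your proof is correct and follows essentially the same route as the paper: the paper's own argument for part (1) is exactly the polar-decomposition observation you make (all terms of $\HTAPg$ except $N\beta g$ are spherically symmetric, and $p$-homogeneity lets you pass between the sphere $|m'|=|m|$ and $S_{N-1}$), and for part (2) the paper likewise checks that only the $\log$ term contributes a nondegenerate negative-definite Hessian at the origin. You have merely spelled out, via the two-step inequality chain, the converse direction that the paper states without detail.
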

\begin{proof}
$ $
\begin{enumerate}
\item This follows from the considerations in the paragraph before the lemma.\\
\item
The entropy term $\frac{1}{2}\log(1-|m|^2)$ of \eqref{eq: hTAP def} has zero gradient and negative definite Hessian at $m=0$. By \eqref{eq: g assumptions} the term $g$ has both vanishing gradient and vanishing Hessian at $m=0$. Also since
\begin{equation}\label{eq: Onsager concrete}
\On(q) = \frac{\beta^{2}}{2} (\xi(1) - \xi'(q)(1-q) - \xi(q)) \overset{\xi(x)=x^p}{=}\frac{\beta^{2}}{2}\left(1-p\left(1-q\right)q^{p-1}-q^{p}\right),
\end{equation}
and $p\ge3$ so does the term $\On(|m|^2)$. Therefore $m=0$ is always a local maximum of $\HTAPg(m)$. Thus $m=0$ is a relevant \gTAP\ solution iff  $m \in D_\beta$, which by Lemma \ref{lem:ourplefkaprofs} 3) is equivalent to $\beta \le \beta_d$.
\end{enumerate}
\end{proof}

Next we will give a complete analysis of the critical points of $q\to f(E,q)$ for different values of $\beta$ and $E$, thus determining for each $\beta$ and $E$ which values of $|m|^2=q$ (if any) are possible for a relevant \gTAP\ solution arising from a critical point $\mhatdisplay$ with $\HNg(\hat{m})=E$.
This rests on the next lemma. Before we state it, let
\begin{equation}\label{eq: rpmdef}
    r_{\pm}(x) = x \pm \sqrt{x^2-1},
\end{equation}
and note that
\begin{equation}\label{eq: rpmdef as sols}
\begin{array}{l}
r_{+}:[1,\infty)\to[1,\infty) \text{ is the inverse of } z \to \frac{1}{2}(\frac{1}{z}+z)\text{ for }  z\ge1 \text{ and is increasing,}\hspace{10pt}\\
r_{-}:[1,\infty)\to(0,1] \text{ is the inverse of } z \to \frac{1}{2}(\frac{1}{z}+z)\text{ for }  0< z\le1 \text{ and is decreasing.}\\\end{array}
\end{equation}
\begin{lem}\label{lem: critequiv}
It holds that
\begin{equation}\label{eq: f deriv}
    \partial_q f(E,q) = \frac{\sqrt{2}\beta_2(q)}{1-q} \left\{ \frac{E}{E_\infty} - \frac{1}{2}\left(\frac{1}{\sqrt{2}\beta_{2}\left(q\right)}+\sqrt{2}\beta_{2}\left(q\right)\right)\right\}\text{ for all }E,q.
\end{equation}     
The critical point equation 
\begin{equation}\label{eq: crit point eq}
    \partial_q f(E,q) = 0
\end{equation}     
    is equivalent to
\begin{equation}\label{eq:critequiv2}
 \frac{E}{E_{\infty}}=\frac{1}{2}\left(\frac{1}{\sqrt{2}\beta_{2}\left(q\right)}+\sqrt{2}\beta_{2}\left(q\right)\right),
\end{equation}
and also to
\begin{equation}\label{eq: critequiv r}
 E\ge E_{\infty}\text{ and }\begin{cases}
\sqrt{2}\beta_{2}\left(q\right)=r_{-}\left(E/E_\infty\right) & \text{or }\\
\sqrt{2}\beta_{2}\left(q\right)=r_{+}\left(E/E_\infty\right).
\end{cases}
\end{equation}
\end{lem}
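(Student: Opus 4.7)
The plan is to prove all three assertions by a direct computation of $\partial_{q}f$ followed by an elementary analysis of the one-variable map $x\mapsto\tfrac{1}{2}(1/x+x)$ on $(0,\infty)$.

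For \eqref{eq: f deriv}, I would differentiate term-by-term in \eqref{eq: f def}. Using $\frac{d}{dq}[\xi'(q)(1-q)]=\xi''(q)(1-q)-\xi'(q)$, the derivative of the Onsager piece collapses to $-\tfrac{\beta^{2}}{2}\xi''(q)(1-q)$, yielding
\[
\partial_{q}f(E,q)=\tfrac{\beta p}{2}\,q^{p/2-1}E-\tfrac{1}{2(1-q)}-\tfrac{\beta^{2}}{2}\xi''(q)(1-q).
\]
By \eqref{eq: beta2 def}, $\beta_{2}(q)^{2}=\tfrac{\beta^{2}}{2}\xi''(q)(1-q)^{2}$, so the last term above is $-\beta_{2}(q)^{2}/(1-q)$. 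The key bookkeeping identity is
\[
\tfrac{\beta p}{2}\,q^{p/2-1}=\frac{\sqrt{2}\,\beta_{2}(q)}{(1-q)E_{\infty}},
\]
which follows from $\xi''(q)=p(p-1)q^{p-2}$ and $E_{\infty}=2\sqrt{p-1}/\sqrt{p}$ (so that $\sqrt{2}\beta_{2}(q)/(1-q)=\beta\sqrt{p(p-1)}\,q^{(p-2)/2}$ and dividing by $E_{\infty}$ produces exactly $\tfrac{\beta p}{2}q^{p/2-1}$). Factoring out $\sqrt{2}\beta_{2}(q)/(1-q)$ and regrouping gives \eqref{eq: f deriv}.

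For the equivalence of \eqref{eq: crit point eq} and \eqref{eq:critequiv2}, I would note that for $p\ge3$ we have $\xi''(q)=p(p-1)q^{p-2}>0$ on $(0,1)$, hence $\beta_{2}(q)>0$ there, and so the prefactor $\sqrt{2}\beta_{2}(q)/(1-q)$ in \eqref{eq: f deriv} is strictly positive. Therefore $\partial_{q}f(E,q)=0$ is equivalent to the bracketed factor vanishing, which is precisely \eqref{eq:critequiv2}.

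To pass from \eqref{eq:critequiv2} to \eqref{eq: critequiv r}, I would analyse $h(x)=\tfrac{1}{2}(1/x+x)$ on $(0,\infty)$: since $h'(x)=\tfrac{1}{2}(1-1/x^{2})$, $h$ is strictly decreasing on $(0,1]$, strictly increasing on $[1,\infty)$, with minimum value $h(1)=1$. Hence $h(x)=y$ admits a positive solution iff $y\ge1$, in which case multiplying by $2x$ reduces to the quadratic $x^{2}-2yx+1=0$ with the two roots $x=y\pm\sqrt{y^{2}-1}=r_{\pm}(y)$ (cf.\ \eqref{eq: rpmdef}, \eqref{eq: rpmdef as sols}). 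Applying this with $y=E/E_{\infty}$ and $x=\sqrt{2}\beta_{2}(q)$ finishes the proof. No step presents a real obstacle; the only non-obvious piece is the factorization identity $\tfrac{\beta p}{2}q^{p/2-1}=\sqrt{2}\beta_{2}(q)/((1-q)E_{\infty})$, which is exactly what makes the normalization in \eqref{eq: f deriv} clean.
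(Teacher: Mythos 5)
Your proposal is correct and follows essentially the same route as the paper: differentiate \eqref{eq: f def} term by term, factor out $\sqrt{2}\beta_2(q)/(1-q)$ so that the remaining $E$-coefficient collapses to $1/E_\infty$, and then reduce \eqref{eq:critequiv2} to \eqref{eq: critequiv r} via the elementary fact that $x\mapsto\tfrac12(x+1/x)$ has inverse branches $r_{\pm}$ on $[1,\infty)$ (which is exactly what \eqref{eq: rpmdef as sols} records). The only difference is cosmetic — you simplify the $E$-coefficient before factoring rather than after, and you spell out the positivity of the prefactor on $(0,1)$ and the monotonicity of $h(x)=\tfrac12(x+1/x)$ — but the computation and its normalization identity are the same.
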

\begin{proof}
Taking the derivative in $q$ of \eqref{eq: f def} we get
\begin{equation}\label{eq: f first deriv}
     \partial_q f(E,q) =\frac{\beta p}{2} q^{p/2-1}E - \frac{1}{2(1-q)}-\frac{\beta^2}{2}\xi''(q)(1-q).
\end{equation}
Using the definition \eqref{eq: beta2 def} of $\beta_2(q)$ we obtain
\begin{equation}\label{eq: f first deriv div}
     \partial_q f(E,q) = \frac{\sqrt{2}\beta_2(q)}{1-q} \left\{
     \frac{p}{2\sqrt{\xi''(q)}} q^{p/2-1}E- \frac{1}{2}\left(\frac{1}{\sqrt{2}\beta_{2}\left(q\right)}+\sqrt{2}\beta_{2}\left(q\right)\right)
     \right\}.
\end{equation}
We have
\begin{equation}\label{eq: xi double prime}
    \sqrt{\xi(q)} \overset{\xi(x)=x^p}{=} \sqrt{p(p-1)}q^{\frac{p-2}{2}},
\end{equation}
so that
\begin{equation}\label{eq: Einf appears}
\frac{ p}{2 \sqrt{\xi''(q)}} q^{p/2-1}
= \frac{\sqrt{p}}{2\sqrt{p-1}} \overset{\eqref{eq: Einf def}}{=} \frac{1}{E_\infty},
\end{equation} 
implying \eqref{eq: f deriv}.
The equivalence of \eqref{eq: crit point eq} and \eqref{eq:critequiv2} follows,
and the equivalence to \eqref{eq: critequiv r} follows by \eqref{eq: rpmdef as sols}.
\end{proof}
\begin{rem} Note that we do not use anything about the complexity $I$ of the critical points of $H_N(\sigma)$ to obtain \eqref{eq:critequiv2}, but nevertheless a numerical value which can be written as the threshold  $E_\infty$ arising from $I$ appears in \eqref{eq: Einf appears}.
\end{rem}
To count the number and location of critical points of $q \to f(E,q)$ we should thus count solutions of \eqref{eq: critequiv r}. To this end note that from the definition \eqref{eq: beta2 def} of $\beta_2$ and \eqref{eq: xi double prime}
\begin{equation}\label{eq: beta2 concrete form}
    \beta_2(q) = \beta \sqrt{\frac{p(p-1)}{2}} (1-q) q^{\frac{p-2}{2}}.
\end{equation} 
From this one easily checks that $\beta_2(0) = \beta_2(1) = 0$, and (by considering its derivative) that \begin{equation}\label{eq: beta2 shape}
    \beta_2(q)\text{ is strictly increasing on } \left[0,\frac{p-2}{p}\right]\text{, strictly decreasing on  }\left[\frac{p-2}{p},1\right],
\end{equation}
and maximized at $q=\frac{p-2}{p}$.
With this knowledge we note that
\begin{equation}\label{eq: w def}
\sqrt{2} \sup_{q\in[0,1]} \beta_2\left(q\right) =  \sqrt{2}\beta_2\left(\frac{p-2}{p}\right) \overset{\eqref{eq: beta tilde c def}, \eqref{eq: beta2 concrete form}}{=} \frac{\beta}{\tilde{\beta}_c},
\end{equation}
and
\begin{equation}\label{eq: num sols beta2}
\begin{array}{l}
\text{for fixed }a\in\mathbb{R}\text{\ the equation }\sqrt{2}\beta_{2}\left(q\right)=a\text{\,has }\\
\quad\begin{cases}
\text{no solutions } & \text{\,if }a>\betatwosup,\\
\text{exactly one solution, namely }q=\frac{p-2}{p}, & \text{\,if }a=\betatwosup,\\
\text{exactly two solutions, one in }\left(0,\frac{p-2}{p}\right)\text{\,and one in }\left(\frac{p-2}{p},1\right), & \text{\,if }a<\betatwosup.
\end{cases}
\end{array}
\end{equation}

Using mainly the form \eqref{eq: critequiv r} of the critical point equation and \eqref{eq: num sols beta2} we now show that $q\to f(E,q)$ in general has between $0$ and $4$ critical points, of which up to $2$ can be local maxima. When there are several local maxima it turns out that at most one satisfies the condition $q \in D_\beta$, and thus at most one can correspond to a relevant ($g$-)TAP solution. 
Some of the possible cases are illustrated in Figure \ref{fig:plots_of_f}.
The complete result is the following.

\begin{teor}\label{thm: det char f}
In the following statement ``critical point of $f$'' always refers to a critical point of $q \to f(E,q)$ in the interval $[0,1]$, for fixed $E$.
\begin{enumerate}
\item If  $E < E_\infty$ then $f$ is decreasing and has no critical points.
\item If $E>E_\infty$ the following holds:
\begin{enumerate}
\item If
$\betatwosup<r_{-}\left(E/E_\infty\right)$
then $f$ has no critical points.
\item
If $\betatwosup=r_{-}\left(E/E_\infty\right)$
then $f$ has one critical point, namely a saddle point at $q= \frac{p-2}{p}$.
\item If $r_{-}\left(E/E_\infty\right)<\betatwosup<r_{+}\left(E/E_\infty\right)$
then $f$ has two critical points: a local maximum in $D_\beta \cap  (\frac{p-2}{p},1)$ and a local minimum in $(0,\frac{p-2}{p})$.
\item If $\betatwosup = r_{+}\left(E/E_\infty\right)$ then $f$ has three critical points: a local maximum in $D_\beta \cap (\frac{p-2}{p},1)$, a local minimum in $(0,\frac{p-2}{p})$ and a saddle point at $q=\frac{p-2}{p}$.
\item If $\betatwosup>r_{+}\left(E/E_\infty\right)$ then $f$ has four critical points: in $(0,\frac{p-2}{p})$ a local maximum outside $D_\beta$ and a local minimum, and in $(\frac{p-2}{p},1)$ a local maximum inside $D_\beta$ and a local minimum outside $D_\beta$.
\end{enumerate}

\item In the special case $E = E_\infty$ for which $ r_{+}\left(E/E_\infty\right)= r_{-}\left(E/E_\infty\right) =  1$, the function $f$ is decreasing and has (a) no critical points if $ \betatwosup < 1$, (b) a single critical point at $q=\frac{p-2}{p}$ which is a saddle point if $ \betatwosup = 1$ and (c) exactly two critical points, one saddle point in $(\frac{p-2}{p},1)$ and one saddle point in $(0,\frac{p-2}{p})$, if $ \betatwosup > 1$.
\end{enumerate}
\end{teor}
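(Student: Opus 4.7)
The plan is to reduce the analysis to counting and classifying the solutions of $\sqrt{2}\beta_2(q) = r_\pm(E/E_\infty)$ via Lemma \ref{lem: critequiv}. Writing $\Phi(x) = \frac{1}{2}(1/x + x)$ for $x > 0$, which satisfies $\Phi(x) \geq 1$ with equality iff $x=1$, is strictly decreasing on $(0,1)$ and strictly increasing on $(1,\infty)$, the derivative identity \eqref{eq: f deriv} reads
\[
\partial_q f(E,q) = \frac{\sqrt{2}\beta_2(q)}{1-q}\left(\frac{E}{E_\infty} - \Phi\bigl(\sqrt{2}\beta_2(q)\bigr)\right)
\]
for $q\in(0,1)$. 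The prefactor is strictly positive there, so the sign of $\partial_q f$ is governed entirely by the comparison of $E/E_\infty$ with $\Phi(\sqrt{2}\beta_2(q))$.

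The two easier cases come first. For case 1, $E < E_\infty$ yields $E/E_\infty < 1 \leq \Phi$, so $\partial_q f < 0$ on $(0,1)$; the value at $q=0$ is $-1/2$ directly from \eqref{eq: f first deriv} (using $p\geq 3$ so the polynomial terms vanish at zero), giving strict monotone decrease with no critical points. For case 3, $E = E_\infty$ forces $\partial_q f \leq 0$ with equality exactly where $\sqrt{2}\beta_2(q) = 1$; applying \eqref{eq: num sols beta2} with $a = 1$ yields $0$, $1$ at $q=(p-2)/p$, or $2$ solutions according as $\beta/\tilde{\beta}_c < 1$, $=1$, or $>1$, and each is a saddle since $\partial_q f$ does not change sign across it.

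The main work is case 2, where $0 < r_- < 1 < r_+$, and by \eqref{eq: critequiv r} the critical points are exactly the solutions of $\sqrt{2}\beta_2(q) \in \{r_-, r_+\}$. Applying \eqref{eq: num sols beta2} separately with $a=r_-$ and $a=r_+$ and comparing each to $\max \sqrt{2}\beta_2 = \beta/\tilde{\beta}_c$ from \eqref{eq: w def} partitions the situation cleanly into sub-cases (a)--(e). The nature of a critical point $q_*$ is read off from how $\Phi(\sqrt{2}\beta_2(q))$ crosses $E/E_\infty$ there: using the monotonicity of $\beta_2$ on each side of $(p-2)/p$ from \eqref{eq: beta2 shape} together with the monotonicity of $\Phi$, a crossing of $r_-$ by $\beta_2$ increasing (resp.\ decreasing) produces a local minimum (resp.\ maximum), and the roles reverse at crossings of $r_+$. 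Finally, membership in $D_\beta$ is read off from Lemma \ref{lem:ourplefkaprofs}: $\sqrt{2}\beta_2(q) = r_- < 1$ forces $\beta_2(q) < 1/\sqrt{2}$, so a local maximum with $q > (p-2)/p$ lies in $D_\beta$ by part~1; conversely any critical point with $\sqrt{2}\beta_2(q) = r_+ > 1$ violates Plefka's condition and hence, by the contrapositives of parts~1 and~2, lies outside $D_\beta$ regardless of its position relative to $(p-2)/p$.

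I expect the main obstacle to be the careful bookkeeping in case~2: for each sub-case one must correctly identify the interval containing each critical point, read off its type from the monotonicity direction of $\beta_2$ there, and confirm the $D_\beta$ assertions in the precise form stated in (a)--(e). No single computation is deep, but the matching of the case analysis to the precise wording of the theorem is delicate, and one must also verify the boundary limits (endpoints $q=0$ and $q\to 1$, where the displayed factorization of $\partial_q f$ degenerates) by returning to \eqref{eq: f first deriv}.
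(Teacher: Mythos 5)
Your proposal is correct and it takes a genuinely different route to the paper's proof for the local max/min classification. The paper computes the second derivative and derives the factorization $\partial_{q}^{2}f\left(E,q\right)=\frac{p}{4q\left(1-q\right)^{2}}\left(q-\frac{p-2}{p}\right)\left(2\beta_{2}\left(q\right)^{2}-1\right)$ valid at critical points, then reads off the type from the sign of the two factors $\left(q-\frac{p-2}{p}\right)$ and $\left(2\beta_2(q)^2-1\right)$. You instead never compute $\partial_q^2 f$: you read the type directly from the sign changes of $\partial_q f$, using the positivity of the prefactor $\sqrt{2}\beta_2(q)/(1-q)$ and tracking how $q\mapsto\Phi(\sqrt{2}\beta_2(q))$ crosses $E/E_\infty$ via the monotonicity of $\Phi$ combined with the monotonicity of $\beta_2$ on each side of $\frac{p-2}{p}$. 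Both approaches are valid; yours is slightly more elementary in that it avoids the algebraic manipulation needed to produce the second-derivative factorization, while the paper's gives a clean closed formula whose signs are immediate. It is worth noting that even the paper cannot rely on the second-derivative test in cases 2(b) and 2(d) (where the critical point is at $q=\frac{p-2}{p}$ and the factorization vanishes), and falls back there to exactly the first-derivative sign argument you propose — so the paper's proof is a hybrid, while yours is uniform.

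One minor point: your general principle "a crossing of $r_-$ by $\beta_2$ increasing produces a local minimum, etc." applies to transversal crossings in $(0,\frac{p-2}{p})\cup(\frac{p-2}{p},1)$, but cases 2(b) and 2(d) place a critical point at $q=\frac{p-2}{p}$ itself, where $\beta_2$ has a maximum and the crossing is a tangency rather than a transversal crossing. You do give exactly the right argument for this phenomenon in case 3 ($\partial_q f$ touches zero without changing sign, hence a saddle); it should be stated that the same observation applies verbatim at $q=\frac{p-2}{p}$ in cases 2(b), 2(d). With that spelled out, the proposal is complete.
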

   
\begin{figure}[H]
  \centering 
   \includegraphics[width= 0.325\columnwidth]{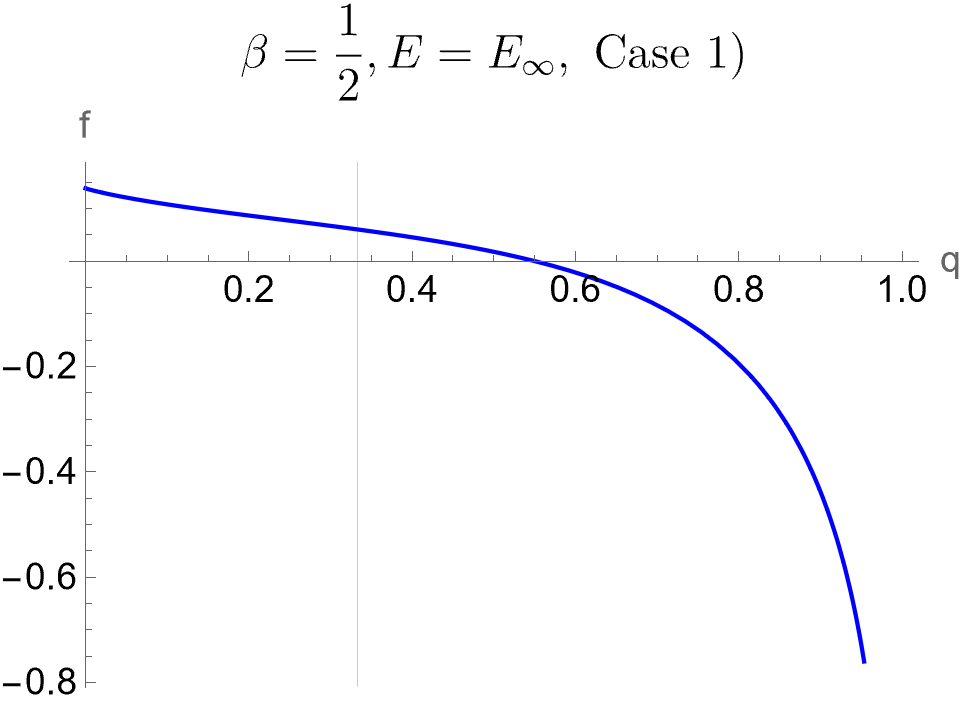}
   \includegraphics[width= 0.325\columnwidth]{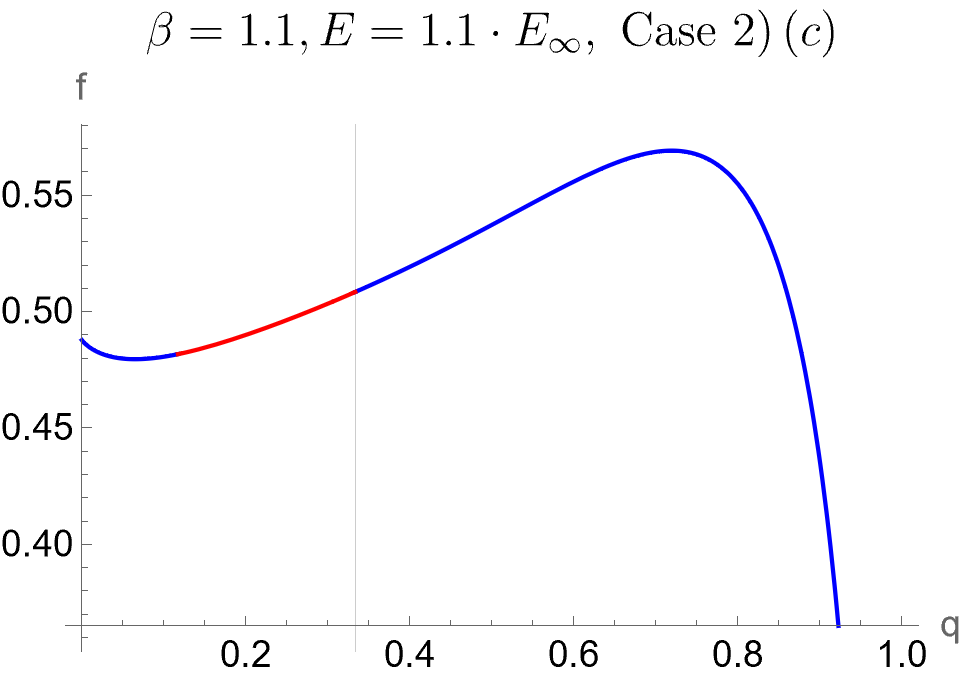}
   \includegraphics[width= 0.325\columnwidth]{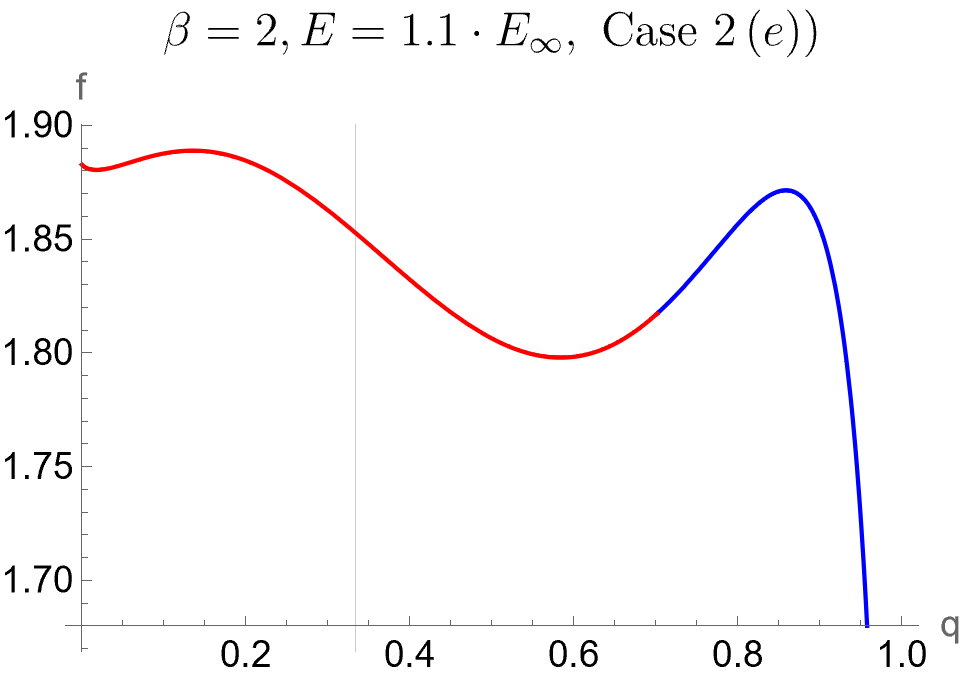}
  \caption{Plots of $f(E,q)$ in $q$ with $p=3$, giving examples for cases 1), 2) (c) and 2) (e). The graph is blue if $q\in D_\beta$ and red otherwise. Horizontal line at $q=\frac{p-2}{p}$. }\label{fig:plots_of_f}
\end{figure}

As seen above, for certain values of $E$ and $\beta$ the function $q\to f(E,q)$
may have critical points that are not local maxima or do not satisfy
the condition $q \in D_\beta$. These physically non-relevant critical points
give rise to physically non-relevant TAP solutions.

\begin{proof}[Proof of Theorem \ref{thm: det char f}]
Recall Lemma \ref{lem: critequiv}. Since the RHS of \eqref{eq:critequiv2} is always at least $1$ it follows that $f$ is decreasing and there are no critical points if $E<E_\infty$, proving 1).

Turning to 2), note that the number of critical points is the number of unique solutions to \eqref{eq: critequiv r}. If $E=E_\infty$ then $r_{-}\left(E/E_\infty\right)=r_{+}\left(E/E_\infty\right)$ and these are actually one equation, and
\begin{equation}\label{eq: rpm diff from one}
r_{-}\left(E/E_\infty\right) < 1 < r_{+}\left(E/E_\infty\right) \text{ if } E>E_\infty,
\end{equation}
so then the equations are distinct. In the latter case it holds thanks to \eqref{eq: num sols beta2} that there are no solutions if $\betatwosup < r_{-}\left(E/E_\infty\right)$, one solution if $\betatwosup = r_{-}\left(E/E_\infty\right) <r_{+}\left(E/E_\infty\right)$, two solutions if $ r_{-}\left(E/E_\infty\right)< \betatwosup <r_{+}\left(E/E_\infty\right)$, three solutions if  $ r_{-}\left(E/E_\infty\right)<  r_{+}\left(E/E_\infty\right) = \betatwosup$ and four solutions if $ r_{-}\left(E/E_\infty\right)<  r_{+}\left(E/E_\infty\right) < \betatwosup$. The fact \eqref{eq: num sols beta2} also gives information on if these critical points belong to $(0, \frac{p-2}{p})$ or $(\frac{p-2}{p},1)$, or equal $\frac{p-2}{p}$. Furthermore Lemma \ref{lem:ourplefkaprofs} 1) 2) and the fact that $r_{-}\left(E/E_\infty\right)<1<r_{+}\left(E/E_\infty\right)$ when $E>E_\infty$ implies that no critical point arising from $\sqrt{2} \beta_2(q)=r_{+}\left(E/E_\infty\right)$ is ever in $D_\beta$, and all critical points in $(\frac{p-2}{p},1)$ that arise from $\sqrt{2} \beta_2(q)=r_{-}\left(E/E_\infty\right)$ lie in $D_\beta$. In this way all claims about the number and location (but not index) of critical points in 2 a)-e) follow.

The claims about number and location of critical points in 3) similarly follows keeping in mind that if $E=E_\infty$ then the two equations in  \eqref{eq: critequiv r} coincide.

All claims about number and positions of the critical points in claims 1)-3) are thus proven. To conclude the proof it only remains to determine if the critical points are local maxima, local minima or saddle points. Differentiating \eqref{eq: f first deriv} and using that 
$$\xi'''\left(q\right)\left(1-q\right)-\xi''\left(q\right)=p\left(p-1\right)q^{p-3}\left(\left(p-2\right)\left(1-q\right)-q\right)$$
one gets
\begin{equation}\label{eq: f second deriv}
    \partial_{q}^{2}f\left(E,q\right)=\beta\frac{p\left(p-2\right)}{4}q^{\frac{p}{2}-2}E-\frac{1}{2\left(1-q\right)^{2}}-\frac{\beta^{2}}{2}p\left(p-1\right)q^{p-3}\left(\left(p-2\right)\left(1-q\right)-q\right).
\end{equation}
We now show that at a solution to $\partial_{q}f\left(E,q\right)=0$ this factors as
\begin{equation}\label{eq: factorization of f sec deriv}
    \partial_{q}^{2}f\left(E,q\right)=\frac{p}{4q\left(1-q\right)^{2}}\left(q-\frac{p-2}{p}\right)\left(2\beta_{2}\left(q\right)^{2}-1\right).
\end{equation}
To see this, note that using \eqref{eq: beta2 concrete form} and \eqref{eq: Einf def} we can make $\beta_{2}\left(q\right)$ appear in the first and last terms of \eqref{eq: f second deriv}, and $E_{\infty}$ appear in the first term, obtaining
$$\partial_{q}^{2}f\left(E,q\right)=\frac{p-2}{2q\left(1-q\right)}\sqrt{2}\beta_{2}\left(q\right)\frac{E}{E_{\infty}}-\frac{1}{2\left(1-q\right)^{2}}-\beta_{2}^{2}\left(q\right)\frac{\left(p-2\right)\left(1-q\right)-q}{q\left(1-q\right)^{2}}.$$
Using that at a solution to $\partial_{q}f\left(E,q\right)=0$ the equality \eqref{eq:critequiv2} holds to remove $\frac{E}{E_\infty}$ from the expression, we get that at such a point
\begin{equation}
    \begin{array}{rcl}
        \partial_{q}^{2}f\left(E,q\right)
        &=&\displaystyle{\frac{\left(p-2\right)\left(1-q\right)-2q + 2\beta_{2}^{2}\left(q\right)\Big(\left(p-2\right)\left(1-q\right)-2( \left(p-2\right)\left(1-q\right)-q) \Big)}{4 q\left(1-q\right)^{2}}}\\
        &=&\displaystyle{\frac{\left( (p-2)(1-q)-2q \right) \left( 2\beta^2(q)-1\right)}{4 q\left(1-q\right)^{2}}},
    \end{array}
\end{equation}
so since $\left(p-2\right)\left(1-q\right)-2q = \left(p-2\right)-pq = -p \left(q-\frac{p-2}{p}\right)$
we obtain that \eqref{eq: factorization of f sec deriv} holds at any critical point $q$.

By \eqref{eq: rpm diff from one} solutions of $\sqrt{2} \beta_2(q) = r_{-}\left(E/E_\infty\right)$ satisfy $2\beta_2(q)^2<1$ when $E>E_\infty$, so that by checking the sign of \eqref{eq: factorization of f sec deriv} any critical point arising from that equation in $(0,\frac{p-2}{p})$ is a local minimum, and any such critical point in $(\frac{p-2}{p},1)$ is a local maximum. Similarly any solutions of $\sqrt{2} \beta_2(q) = r_{+}\left(E/E_\infty\right)$ satisfies $2\beta_2(q)^2>1$ when $E>E_\infty$, so that any critical point arising from that equation in $(0,\frac{p-2}{p})$ is a local maximum, and any such critical point in $(\frac{p-2}{p},1)$ is a local minimum. This concludes the identification of all claimed local maxima and minima in 2).

 It remains to prove that in the remaining cases the critical points are saddle points. 
When $E=E_\infty$ we have from \eqref{eq: f deriv}
\[ \partial_q f(E,q) =-\frac{(\sqrt{2}\beta_2(q)-1)^2}{2(1-q)}.  \]
As this is non-positive and only touches but never crosses $0$ at a finite number of points $f$ is decreasing and all critical points for $E=E_\infty$ are saddle points, concluding the proof of claim 3).

Next in the cases 2) (b) (d) recall that all unclassified critical points are at $q=\frac{p-2}{p}$, so that at these points
\beq
\frac{E}{E_{\infty}}=\frac{1}{2}\left(\frac{1}{\sqrt{2}\beta_2\left(\frac{p-2}{p}\right)}+\sqrt{2}\beta_2\left(\frac{p-2}{p}\right)\right),
\eeq
giving us from \eqref{eq: f deriv} and the identity $\frac{1}{x}+x - (\frac{1}{y}+y) = \frac{1}{y}(\frac{1}{x}-y)(y-x)$ that
\[ \partial_q f(E,q) =\frac{\left(\frac{1}{\sqrt{2}\beta_2\left(\frac{p-2}{p}\right)}-\sqrt{2}\beta_2(q)\right)\left(\sqrt{2}\beta_2(q)-\sqrt{2}\beta_2\left(\frac{p-2}{p}\right)\right)}{2(1-q)}.  \]
The first factor has the same non-zero sign throughout $(\frac{p-2}{p}-\varepsilon,\frac{p-2}{p}) \cup (\frac{p-2}{p},\frac{p-2}{p}+\varepsilon)$ for some small enough $\varepsilon$ (the midpoint $q=\frac{p-2}{p}$ can also be included if $\beta \ne \tilde{\beta}_c$; when $\beta = \tilde{\beta}_c$ the first factor is zero there) while the second only touches zero (not crossing), since $q=\frac{p-2}{p}$ maximizes $\beta_2$. Hence $q=\frac{p-2}{p}$ is a saddle point, concluding the proof of claims 2) (b) (d).
This concludes the proof of 1)-3).

\end{proof}
We will use the following consequences of the theorem.
\begin{cor}\label{cor: conseq}
The following holds for all $E$.
\begin{enumerate}
\item 
There is at most one local maximum of $q \to f(E,q)$ in $D_\beta$.

\item 

All local maxima of $q \to f(E,q)$ that lie in $D_\beta \setminus \{0\}$ also lie in $(\frac{p-2}{p},1)$.

\item 

All local maxima in $(\frac{p-2}{p},1)$ satisfy 
$\sqrt{2}\beta_2(q) = r_{-}\left(E/E_\infty\right)$.

\item When it exists, the unique local maximum in $D_\beta \cap ( \frac{p-2}{p},1)$ is the global maximum of $q \to f(E,q)$ in $D_\beta \cap ( \frac{p-2}{p},1)$.

\end{enumerate}
\end{cor}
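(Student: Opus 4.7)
My plan is to derive all four claims by reading off the detailed case analysis of Theorem \ref{thm: det char f}. Claims 1)--3) will follow almost immediately, while the bulk of the work will go into claim 4).

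For claims 1) and 2), I would inspect the classification in Theorem \ref{thm: det char f}: the only scenarios that produce (interior) local maxima of $q\to f(E,q)$ are cases 2)(c), 2)(d) and 2)(e). Each of these yields exactly one local maximum that lies in $D_\beta$, and in all three that maximum lies in $D_\beta\cap(\frac{p-2}{p},1)$. In case 2)(e) a second local maximum appears in $(0,\frac{p-2}{p})$, but Theorem \ref{thm: det char f} states explicitly that it lies outside $D_\beta$. This disposes of 1) and 2) at once. For claim 3), I would invoke Lemma \ref{lem: critequiv} to rewrite the critical point equation as $\sqrt{2}\beta_2(q)\in\{r_-(E/E_\infty),r_+(E/E_\infty)\}$, and then recall the factorization \eqref{eq: factorization of f sec deriv} of $\partial_q^2 f$ at critical points from the proof of Theorem \ref{thm: det char f}: on $(\frac{p-2}{p},1)$, solutions of the $r_+$ branch satisfy $\partial_q^2 f>0$ (local minima) while solutions of the $r_-$ branch satisfy $\partial_q^2 f<0$ (local maxima). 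So any local maximum in $(\frac{p-2}{p},1)$ must come from the $r_-$ branch.

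The main obstacle is claim 4). I would first describe the set $D_\beta\cap(\frac{p-2}{p},1)$ using Lemma \ref{lem:ourplefkaprofs} 1) and the strict monotonicity \eqref{eq: beta2 shape} of $\beta_2$ on $[\frac{p-2}{p},1]$: it is either the whole interval $(\frac{p-2}{p},1)$ (when $\beta\le\tilde{\beta}_c$) or a half-open interval $[q_*,1)$ with $\sqrt{2}\beta_2(q_*)=1$ (when $\beta>\tilde{\beta}_c$). Next I would use the factorization \eqref{eq: f deriv} of $\partial_q f$ to infer the monotonicity pattern of $f$ on $(\frac{p-2}{p},1)$: since $\beta_2$ is strictly decreasing there, the sign of $\partial_q f$ is positive exactly when $\sqrt{2}\beta_2(q)\in(r_-(E/E_\infty),r_+(E/E_\infty))$ and negative otherwise.

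In cases 2)(c) and 2)(d) this pattern makes $f$ increasing on $(\frac{p-2}{p},q_0)$ and decreasing on $(q_0,1)$, so $q_0$ is the maximum of $f$ on $(\frac{p-2}{p},1)$, and hence on any subset containing it. The delicate case is 2)(e): it forces $\beta/\tilde{\beta}_c>r_+\ge 1$, so in particular $\beta>\tilde{\beta}_c$ and the relevant set is $[q_*,1)$, while the monotonicity pattern of $f$ on $(\frac{p-2}{p},1)$ is now ``decreasing, increasing, decreasing'' with turning points at some $q_+<q_0$ (a local minimum and then the local maximum). The key observation is that $r_-<1<r_+$ combined with the monotonicity of $\beta_2$ on $[\frac{p-2}{p},1]$ forces $q_+<q_*<q_0$, so that $f$ restricted to $[q_*,1)$ is ``increasing then decreasing'' with its unique maximum at $q_0$. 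Pinning down the position of $q_*$ relative to the turning points of $f$ is the step I expect to require the most care.
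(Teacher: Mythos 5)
Your proof is correct and follows essentially the same route as the paper: claims 1)--3) are read off from the case classification of Theorem~\ref{thm: det char f}, and claim 3) uses the second-derivative factorization \eqref{eq: factorization of f sec deriv} from its proof, exactly as the paper intends. The only difference is in claim 4): the paper dispenses with the explicit sign analysis of $\partial_q f$ and instead invokes the simple abstract observation that a differentiable function with exactly one local maximum and no local minima in an interval attains its global maximum there; your careful positioning argument in case 2)(e) (showing $q_+<q_{\min}<q_0$ via the monotonicity of $\beta_2$ and $r_-<1<r_+$) is a hands-on verification of this, and is in fact already encoded in Theorem~\ref{thm: det char f}~2)(e)'s statement that the local minimum in $\left(\frac{p-2}{p},1\right)$ lies outside $D_\beta$, so it could have been cited directly.
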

\begin{proof}
1)-3) follows directly by examining all the possible cases in 1)-3) in the previous theorem. The claim 4) follows since if a differentiable function has only one local maximum and no minima in an interval then this local maximum is the global maximum in the interval.
\end{proof}

Lemma \ref{lem: det TAP sol radial sym}, Theorem \ref{thm: det char f} and Corollary \ref{cor: conseq} strongly constrain which combinations of energy $\HNg(\mhattext)$ at local maximum, norm $|m|^2$ of \gTAP\, solution and \gTAP\, energy $\HTAPg(m)$ are possible for a relevant \gTAP\, solution.
Let
\begin{equation}
E_{\min}=\begin{cases}
\frac{E_{\infty}}{2}\left( \frac{\tilde{\beta}_{c}}{\beta}+\frac{\beta}{\tilde{\beta}_{c}}\right)\ge E_{\infty} & \text{ if } \beta \le \tilde{\beta}_{c},\\
E_{\infty} & \text{ if } \beta \ge \tilde{\beta}_{c}.
\end{cases}\label{eq: Emin def}.
\end{equation}
Note that
\begin{equation}\label{eq: Emin iff}
    \begin{array}{rcl}
        E > E_{\infty} \text{ and }r_{-}(E/E_\infty) > \betatwosup & \iff & E>E_{\min},\\
        E < E_{\infty} \text{ or }\left(E\ge E_{\infty} \text{ and }r_{-}(E/E_\infty)< \betatwosup\right)  & \iff & E<E_{\min}
    \end{array}
\end{equation}
(recall \eqref{eq: rpmdef as sols}).
Thus if $m$ is a relevant TAP solution then by Theorem \ref{thm: det char f}
the energy $E=g(\hat{m})$ satisfies $E>E_{\min}$. Also define
\begin{equation}\label{eq: qmin def}
q_{\min}=\begin{cases}
\frac{p-2}{p} & \text{ if }\beta \le \tilde{\beta}_c,\\
\text{unique solution of }\sqrt{2}\beta_{2}\left(q\right)=1\text{ in }[\frac{p-2}{p},1) & \text{ if }\beta \ge \tilde{\beta}_c.
\end{cases}
\end{equation}
Note that by Lemma \ref{lem:ourplefkaprofs} a) and \eqref{eq: beta2 concrete form}-\eqref{eq: beta2 shape} we have
\begin{equation}\label{eq: Dbeta is interval in dyn low temp}
D_\beta \cap \left[ \frac{p-2}{p},1\right]
 = [q_{\min},1].
\end{equation}
Later we will use that
\begin{equation}\label{eq: qmin inc}
    q_{\min} \text{ is strictly decreasing in } \beta \text{ when } \beta\ge \tilde{\beta}_c,
\end{equation}
(recall \eqref{eq: qmin def}, \eqref{eq: beta2 concrete form}, \eqref{eq: beta2 shape}) and
\begin{equation}\label{eq: qmin at betad}
    q_{\min} = \frac{p-2}{p-1} \text{ when } \beta = \beta_d,
\end{equation}
since when $\beta=\beta_d$ and $\hat{q}=\frac{p-2}{p-1}$ we have
$$\sqrt{2}\beta_2(\hat{q}) \overset{\eqref{eq: beta2 concrete form}}{=}\beta_d \sqrt{p(p-1)} \left(1 - \hat{q}\right)\hat{q}^{\frac{p-2}{2}} \overset{\eqref{eq: beta d def}}{=}1.$$

Define the function
\begin{equation}\label{eq: Eq def}
E_q:[q_{\min},1)\to [E_{\min},\infty) \text{ by } E_q(q) = \frac{E_\infty}{2}\left( \frac{1}{\sqrt{2}\beta_2(q)}+\sqrt{2}\beta_2(q)\right),
\end{equation}
cf. \eqref{eq:critequiv2}.
By \eqref{eq: w def} we have $E_q(q_{\min})=E_{\min}$ and by \eqref{eq: beta2 shape} and since $\sqrt{2}\beta_2(q)\le1$ for $q\ge q_{\min}$ (see \eqref{eq: Emin iff} and \eqref{eq: w def})
\begin{equation}\label{eq: Eq increasing}
    E_q\text{ is strictly increasing}.
\end{equation}
Therefore we can define a function
\begin{equation}\label{eq: qE def}
    q_E:[E_{\min},\infty)\to[q_{\min},1) \text{ by } q_E = E_q^{-1},
\end{equation}
for which
\begin{equation}\label{eq: qE increasing}
    q_E\text{ is strictly increasing}.
\end{equation}
There are several useful ways to characterize $q_E$. Since \eqref{eq:critequiv2} and \eqref{eq: critequiv r} are equivalent we have by \eqref{eq: num sols beta2} that
\begin{equation}\label{eq: qE as sol not conc}
    q_E(E) \text{ is the unique solution to } \sqrt{2}\beta_2(q)=r_{-}(E/E_\infty) \text{ in } \left[\frac{p-2}{p},1\right],
\end{equation}
or equivalently using \eqref{eq: beta2 concrete form} that
\begin{equation}\label{eq: qE as sol conc}
    q_E(E) \text{ is the unique solution to } (1-q) q^{\frac{p-2}{p}} = \frac{r_{-}(E/E_\infty)}{\beta\sqrt{p(p-1)}} \text{ in } \left[\frac{p-2}{p},1\right].
\end{equation}
Alternatively we have the following.
\begin{lem}For all $\beta\ge 0,E\ge E_{\min}$
\begin{equation}\label{eq: qE glob max}
q_{E}\left(E\right)\text{ is the unique critical point and global maximum of }q\to f\left(E,q\right)\text{ in }D_{\beta} \cap \left[\frac{p-2}{p},1\right],
\end{equation}
and
\begin{equation}\label{eq: qE loc max}
    q_E(E)\text{ is a local maximum iff }E>E_{\min}.
\end{equation}
\end{lem}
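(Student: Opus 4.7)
The plan is to deduce both claims from Theorem \ref{thm: det char f} and Corollary \ref{cor: conseq}, combined with the defining relation $\sqrt{2}\beta_{2}(q_{E}(E))=r_{-}(E/E_{\infty})$ from \eqref{eq: qE as sol not conc}. First I would record that $D_{\beta}\cap[\frac{p-2}{p},1]=[q_{\min},1]$ by \eqref{eq: Dbeta is interval in dyn low temp}, and that $q_{E}(E)\in[q_{\min},1)$ whenever $E\ge E_{\min}$, since $q_{E}=E_{q}^{-1}$ is strictly increasing with $q_{E}(E_{\min})=q_{\min}$. By Lemma \ref{lem: critequiv}, $q_{E}(E)$ is then automatically a critical point of $q\mapsto f(E,q)$.

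Next I would establish that $q_{E}(E)$ is the \emph{only} critical point of $f(E,\cdot)$ on $[q_{\min},1)$. By Lemma \ref{lem: critequiv}, every such point satisfies $\sqrt{2}\beta_{2}(q)=r_{-}(E/E_{\infty})$ or $\sqrt{2}\beta_{2}(q)=r_{+}(E/E_{\infty})$; by the strict monotonicity of $\beta_{2}$ on $[\frac{p-2}{p},1]$ recorded in \eqref{eq: beta2 shape}, each equation has at most one root there. The $r_{+}$-equation contributes nothing new in $[q_{\min},1]$: indeed $r_{+}(E/E_{\infty})\ge 1$ while $\sqrt{2}\beta_{2}(q)\le 1$ throughout $[q_{\min},1]$ by \eqref{eq: qmin def} and \eqref{eq: w def}, with equality possible only at $q=q_{\min}$ when $\beta\ge\tilde{\beta}_{c}$; and in that boundary situation $r_{+}=1$ forces $E=E_{\infty}$, so $r_{-}=r_{+}$ and the two equations coincide.

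For the local-maximum dichotomy \eqref{eq: qE loc max} I would split on whether $E>E_{\min}$ or $E=E_{\min}$. The hypothesis $E>E_{\min}$ places the parameters, via the definition \eqref{eq: Emin def}, into one of the cases 2(c)--2(e) of Theorem \ref{thm: det char f}; each of these provides a local maximum of $f(E,\cdot)$ inside $D_{\beta}\cap(\frac{p-2}{p},1)$, which by Corollary \ref{cor: conseq} 3) satisfies the $r_{-}$-equation and therefore coincides with $q_{E}(E)$. In contrast, $E=E_{\min}$ puts us in case 2(b), 3(b), or 3(c), in each of which the unique critical point of $f$ in $[q_{\min},1)$ is a saddle point located at $q_{\min}=q_{E}(E_{\min})$, giving the reverse implication. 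The global maximality statement \eqref{eq: qE glob max} then follows softly: $f(E,q)\to-\infty$ as $q\to 1^{-}$, together with the uniqueness of $q_{E}(E)$ as a critical point in $[q_{\min},1)$, forces $\partial_{q}f(E,\cdot)$ to have constant sign on each of $(q_{\min},q_{E}(E))$ and $(q_{E}(E),1)$, making $q_{E}(E)$ the global maximum on $[q_{\min},1)$, and hence on all of $D_{\beta}\cap[\frac{p-2}{p},1]$ since $q=1$ is excluded by the divergence. The main bookkeeping obstacle I anticipate is the degenerate edge case $E=E_{\infty}$ with $\beta\ge\tilde{\beta}_{c}$, where $r_{-}=r_{+}$, together with the boundary case $E=E_{\min}$ where the extremum slides to $q_{\min}$ and appears only as a saddle in the ambient interval; both are already covered explicitly by the respective cases of Theorem \ref{thm: det char f}.
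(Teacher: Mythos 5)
Your proposal is correct and follows essentially the same route as the paper: both rest on Lemma \ref{lem: critequiv} to identify $q_E(E)$ as a critical point, on Theorem \ref{thm: det char f} (via \eqref{eq: Emin iff}) to settle the local-max/saddle dichotomy, and on the divergence $f(E,q)\to-\infty$ as $q\to1^{-}$ for global maximality. A couple of small differences are worth noting, and both are in your favor. First, to dispose of solutions of the $r_{+}$-equation on $[q_{\min},1]$ the paper invokes ``no $r_{+}$-solution lies in $D_\beta$'', which is literally false in the degenerate case $E=E_\infty$, $\beta\ge\tilde{\beta}_c$ where $r_{+}=r_{-}=1$; you instead compare $r_{+}\ge 1$ against $\sqrt{2}\beta_2\le 1$ on $[q_{\min},1]$ and note the boundary case collapses onto the $r_{-}$-equation, which is airtight. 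Second, in the boundary case $E=E_{\min}$ the paper cites only Theorem \ref{thm: det char f}~3), which covers $\beta\ge\tilde{\beta}_c$ but not $\beta<\tilde{\beta}_c$ (where $E_{\min}>E_\infty$ and one is in case 2(b)); you correctly list cases 2(b), 3(b), 3(c). Finally, where the paper calls on Corollary \ref{cor: conseq}~4) for the global-max claim when $E>E_{\min}$, you unwind the same one-line monotonicity argument directly; that is a presentation choice, not a mathematical difference.
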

\begin{proof}
By \eqref{eq: qE as sol not conc}, the equivalence of \eqref{eq: crit point eq} and \eqref{eq: critequiv r} and the fact that no solution to $\sqrt{2}\beta_2(q)=r_{+}(E/E_\infty)$ can lie in $D_\beta$ it follows that $q_E(E)$ is the unique critical point in the interval. 
By examining all cases in Theorem \ref{thm: det char f} and recalling \eqref{eq: Emin iff} we get \eqref{eq: qE loc max}. By Corollary \ref{cor: conseq} 4) the claim \eqref{eq: qE glob max} thus follows for $E>E_{\min}$.

The special case $E=E_{\min}$ follows since then $q_{\min}=q_E(E_{\min})$ is a critical point of $q\to f(E,q)$ by \eqref{eq:critequiv2}, \eqref{eq: Eq def} and \eqref{eq: qE def}, which by Theorem \ref{thm: det char f} 3) is a saddle point, and is also is the left-endpoint of $D_{\beta} \cap \left[\frac{p-2}{p},1\right]$, and $f(E,q)\to-\infty$ for $q\to1$, so that the saddle point is the maximum.
\end{proof}

We also define
\begin{equation}\label{eq: Uminbeta}
    U_{\min}=f\left(E_{\min},\qE\left(E_{\min}\right)\right),
\end{equation}
and the function
\begin{equation}\label{eq: def UE-1}
\UE\left(E\right)=f\left(E,\qE\left(E\right)\right).
\end{equation}
The fact that $f\left(E,q\right)$ is strictly increasing in $E$ (see \eqref{eq: f def}), $q_{\min}\ge\frac{p-2}{p}$ and \eqref{eq: qE glob max} implies that 
\begin{equation}
\UE\text{\,is strictly increasing}.\label{eq: UE strictly increasing}
\end{equation}
Therefore there is a function
 \begin{equation}
 \EU:[U_{\min},\infty)\to[E_{\min},\infty) \text{ defined by }\EU=\UE^{-1},\label{eq: final def EU}
 \end{equation}
 and
\begin{equation}\label{eq: EU strictly increasing}
     \EU\text{\,is strictly increasing}.
\end{equation}
We have the following.
\begin{lem}\label{lem: TAP sol iff} A vector $m$ is a relevant non-zero \gTAP\ solution of energy $U$ iff $\hat{m}$ is a local maximum of $g$, $U > U_{\min}$, $g(\hat{m}) = E_U(U)$ and $|m|^2=q_E(E_U(U))$.
\end{lem}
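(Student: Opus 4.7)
The plan is to derive both directions of the equivalence by combining Lemma \ref{lem: det TAP sol radial sym}, Theorem \ref{thm: det char f}, Corollary \ref{cor: conseq}, and the characterization of $q_E, E_U$ in \eqref{eq: qE glob max}--\eqref{eq: final def EU}. The core observation is that everything has already been done deterministically: $q_E$ is set up precisely as the unique local maximum of $q\mapsto f(E,q)$ in $D_\beta \cap [\tfrac{p-2}{p},1]$, and $U_E = f(\cdot, q_E(\cdot))$ is set up to be the value of $H_{\TAP}^g/N$ at any such critical point.

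For the forward direction, suppose $m\ne 0$ is a relevant \gTAP\ solution with $\frac{1}{N}\HTAPg(m)=U$. By Lemma \ref{lem: det TAP sol radial sym}~1), $\hat m$ is a local maximum of $g$, $|m|^2\in D_\beta$, and $|m|^2$ is a local maximum of $q\mapsto f(g(\hat m),q)$. Corollary \ref{cor: conseq}~2) forces $|m|^2\in D_\beta\cap(\tfrac{p-2}{p},1)$, and by \eqref{eq: qE glob max}--\eqref{eq: qE loc max} this implies $g(\hat m)>E_{\min}$ and $|m|^2 = q_E(g(\hat m))$. Plugging into \eqref{eq: HTAP to f} and \eqref{eq: def UE-1} gives $U = f(g(\hat m),q_E(g(\hat m))) = U_E(g(\hat m))$. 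Since $U_E$ is strictly increasing with $U_E(E_{\min})=U_{\min}$, this yields $U>U_{\min}$, and inverting via \eqref{eq: final def EU} gives $g(\hat m)=E_U(U)$ and $|m|^2 = q_E(E_U(U))$.

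For the converse, assume $\hat m$ is a local maximum of $g$, $U>U_{\min}$, $g(\hat m)=E_U(U)$, and $|m|^2=q_E(E_U(U))$. Then $g(\hat m)=E_U(U)>E_U(U_{\min})=E_{\min}$ by strict monotonicity of $E_U$, so by \eqref{eq: qE loc max} the point $q_E(g(\hat m))=|m|^2$ is a local maximum of $q\mapsto f(g(\hat m),q)$, and this point lies in $D_\beta\cap(\tfrac{p-2}{p},1)\subset D_\beta$ by \eqref{eq: Dbeta is interval in dyn low temp} and the fact that $q_E$ maps into $[q_{\min},1)\subset[\tfrac{p-2}{p},1)$. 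Lemma \ref{lem: det TAP sol radial sym}~1) then gives that $m$ is a relevant \gTAP\ solution, and $m\ne 0$ since $|m|^2\ge q_{\min}>0$. Finally \eqref{eq: HTAP to f} and \eqref{eq: def UE-1} verify $\frac{1}{N}\HTAPg(m) = f(E_U(U),q_E(E_U(U))) = U_E(E_U(U)) = U$.

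No single step is a real obstacle: the substance is entirely in the earlier analysis, and the present lemma is essentially a bookkeeping statement that packages that analysis in terms of the inverse functions $E_U$ and $q_E$. The only minor care needed is tracking the boundary case $E=E_{\min}$ (where $q_E(E_{\min})=q_{\min}$ is a saddle point, not a local maximum), which is why the hypothesis is $U>U_{\min}$ strictly rather than $U\ge U_{\min}$.
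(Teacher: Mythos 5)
Your proof is correct and takes essentially the same route as the paper's: first reduce to the $E$-level statement (a non-zero vector $m$ is a relevant \gTAP\ solution with $g(\hat m)=E$ iff $\hat m$ is a local maximum of $g$, $E>E_{\min}$, and $|m|^2=q_E(E)$), then change variables via the strictly increasing bijection $U=U_E(E)$. One small remark: in the forward direction, the step "$|m|^2\in D_\beta\cap(\tfrac{p-2}{p},1)$ is a local maximum implies $g(\hat m)>E_{\min}$" needs Theorem \ref{thm: det char f} (or Lemma \ref{lem: E g Emin loc max}) in addition to \eqref{eq: qE glob max}--\eqref{eq: qE loc max}, since the latter two only describe what happens for $E\ge E_{\min}$; you do list Theorem \ref{thm: det char f} in your opening plan, so this is merely a matter of citing it at the right place rather than a gap.
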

\begin{proof}
By Lemma \ref{lem: det TAP sol radial sym} 1), Theorem \ref{thm: det char f},  \eqref{eq: Emin iff} and \eqref{eq: qE glob max}-\eqref{eq: qE loc max} we have that a vector $m$ is a relevant non-zero TAP solution with $g(\hat{m})=E$ iff $\hat{m}$ is a local maximum of $g$, $E>E_{\min}$ and $|m|^2=q_E(E)$. Since $H^g_{\TAP}(m) = U_E(g(\hat{m}),|m|^2)$ we get the claim with the bijective change of variables $U=U_E(E)$ and \eqref{eq: Uminbeta}.
\end{proof}
\begin{rem}
The above lemma implies that (if $g$ is random) there are no relevant g-TAP solutions $m$ of energy $\frac{1}{N}H^g_{\rm{TAP}}(m)\le U_{\min}$ almost surely.
\end{rem}
Theorem \ref{thm: det char f} and \eqref{eq: Emin iff} also imply the next lemma.
\begin{lem}\label{lem: E g Emin loc max}
If $E>E_{\min}$ then $q\to f(E,q)$ has only one critical point in $D_\beta$, which is a local maximum.
If $E<E_{\min}$ then $q\to f(E,q)$ has no critical points in $D_\beta$.
\end{lem}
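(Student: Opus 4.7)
The plan is to derive the lemma by a direct case analysis based on Theorem \ref{thm: det char f}, reinterpreted through \eqref{eq: Emin iff}. The equivalence \eqref{eq: Emin iff} matches the threshold $E = E_{\min}$ with the dichotomy that separates cases 2(a)--(b) (where the equation $\sqrt{2}\beta_{2}(q) = r_{-}(E/E_{\infty})$ has no solutions in $[0,1]$) from cases 2(c)--(e) (where it does); once this correspondence is made, all of the information needed is contained in Theorem \ref{thm: det char f} together with Corollary \ref{cor: conseq} and Lemma \ref{lem:ourplefkaprofs}.

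For $E < E_{\min}$, I would apply \eqref{eq: Emin iff} to conclude we lie in Theorem \ref{thm: det char f} 1) or 2(a). In both of these cases $q \to f(E,q)$ has no critical points on $[0,1]$, so certainly none inside $D_\beta$. The borderline cases 2(b) and 3, where a single saddle point exists, are ruled out by the strict inequality $E < E_{\min}$ (they require equality in the defining relation $r_{-}(E/E_\infty) = \beta/\tilde{\beta}_c$ or $E = E_{\infty}$).

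For $E > E_{\min}$, \eqref{eq: Emin iff} places us in case 2(c), 2(d), or 2(e). Each case of Theorem \ref{thm: det char f} already provides a unique local maximum of $f$ inside $D_{\beta}\cap(\frac{p-2}{p},1)$, and Corollary \ref{cor: conseq} 1) guarantees this is the only local maximum of $f$ in $D_\beta$. To promote this to the stated claim about all critical points in $D_\beta$, I would verify that the other critical points enumerated in the theorem are excluded from $D_\beta$: critical points in $(\frac{p-2}{p},1)$ arising from $\sqrt{2}\beta_{2}(q)=r_{+}(E/E_\infty) > 1$ violate Plefka and hence fail Lemma \ref{lem:ourplefkaprofs} 1); local maxima in $(0,\frac{p-2}{p})$ from the same equation fail Lemma \ref{lem:ourplefkaprofs} 2); and the saddle at $q = \frac{p-2}{p}$ in case 2(d) satisfies $\beta_{2}(\frac{p-2}{p}) = r_{+}(E/E_\infty)/\sqrt{2} > 1/\sqrt{2}$, so again fails Lemma \ref{lem:ourplefkaprofs} 1).

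The main obstacle is the local minimum in $(0,\frac{p-2}{p})$ arising from $\sqrt{2}\beta_{2}(q) = r_{-}(E/E_\infty) < 1$. At this $q$ Plefka's condition $\beta_{2}(q) < 1/\sqrt{2}$ holds automatically, so Lemma \ref{lem:ourplefkaprofs} does not by itself exclude $q$ from $D_\beta$. To settle this last step I would insert the critical-point relation $\sqrt{2}\beta_{2}(q) = r_{-}(E/E_\infty)$, equivalently $\beta^{2}\xi''(q)(1-q)^{2} = r_{-}(E/E_\infty)^{2}$, into the definition of $A(q,\beta)$ and construct an explicit $r \in (0,1)$ at which the quantity inside the supremum is strictly positive, thereby witnessing $q \notin D_\beta$. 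This is the only piece of information about $D_\beta$ not supplied directly by Theorem \ref{thm: det char f} together with Lemma \ref{lem:ourplefkaprofs}, and it is where the remaining work lies.
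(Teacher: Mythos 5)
Your route is the same one the paper's one-line proof gestures at: combine Theorem~\ref{thm: det char f} with the dichotomy encoded in \eqref{eq: Emin iff} (which you have read in the intended way, modulo the fact that in \eqref{eq: Emin iff} the two inequalities involving $r_-$ are printed with the wrong direction). You have also, correctly, isolated the one place where this plan does not close on its own: Theorem~\ref{thm: det char f}, Corollary~\ref{cor: conseq} and Lemma~\ref{lem:ourplefkaprofs} together say nothing about whether the local \emph{minimum} of $q\mapsto f(E,q)$ in $\left(0,\frac{p-2}{p}\right)$ solving $\sqrt{2}\beta_2(q)=r_-(E/E_\infty)$ lies inside or outside $D_\beta$. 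The paper's proof silently skips this point.

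Unfortunately the step you propose to close the gap---producing an $r\in(0,1)$ making the expression inside the supremum in \eqref{eq: Adef} strictly positive at that local minimum---cannot succeed, because the local minimum can in fact lie inside $D_\beta$. Take $p=3$, $\beta=1$ (so $\beta_c<\beta<\tilde{\beta}_c$) and $q=0.2$. Then $\sqrt{2}\beta_2(0.2)=\sqrt{6\cdot 0.2}\,(0.8)\approx 0.876$, so by \eqref{eq:critequiv2} the point $q=0.2$ is a critical point of $q\mapsto f(E,q)$ for $E/E_\infty=\frac{1}{2}\left(1/0.876+0.876\right)\approx 1.009$, i.e.\ $E\approx 1.647$, which lies strictly between $E_{\min}\approx 1.636$ and $E_0\approx 1.657$; by \eqref{eq: factorization of f sec deriv} it is a local minimum. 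But a direct computation gives
\[
A(0.2,1)=\sup_{r\in(0,1)}\left(0.768+1.536\,r-\frac{1}{1-r}\right)\approx -0.175<0,
\]
so $0.2\in D_\beta$. Since the corresponding local maximum (near $q\approx 0.49$) also lies in $D_\beta$, the map $q\mapsto f(E,q)$ has two critical points in $D_\beta$ for this $E>E_{\min}$, contradicting the lemma as stated.

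So the printed statement is stronger than what is provable: what Theorem~\ref{thm: det char f} actually yields is uniqueness of the critical point within $D_\beta\cap\left[\frac{p-2}{p},1\right]$, and that weaker form is all the paper later invokes (the proof of Lemma~\ref{lem:solveopti} only needs the claim on $(q_{\min},1)\subset D_\beta\cap\left[\frac{p-2}{p},1\right]$). Your instinct that the unclassified local minimum in $\left(0,\frac{p-2}{p}\right)$ is the crux was exactly right; the correct resolution is to weaken the lemma's conclusion to $D_\beta\cap\left[\frac{p-2}{p},1\right]$, not to try to exclude that minimum from $D_\beta$.
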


\section{Complexity threshold}\label{sec: comp thres}
In this section we prove Theorem \ref{thm:comp} about the complexity threshold, using the results of the previous section and the complexity of critical points from \eqref{eq: I props}.
\begin{proof}[Proof
of Theorem \ref{thm:comp}]
We first prove \eqref{eq:onlyzerosol no plefka}. This implies also implies \eqref{eq:onlyzerosol} since by Lemma \ref{lem: det TAP sol radial sym} 2) $m=0$ is a relevant TAP solution almost surely when $\beta \le \beta_d$, so in particular it is when $\beta \le \beta_c < \beta_d$. 

By Lemma \ref{lem: det TAP sol radial sym} 1), \eqref{eq: crit point eq}-\eqref{eq:critequiv2} and \eqref{eq: w def} when $\beta \le \tilde{\beta}_c$ any non-zero TAP solution must satisfy
\begin{equation}\label{eq: tap sol must sat}
    H_N(\hat{m})\ge\frac{E_\infty}{2}\left( \frac{\tilde{\beta}_{c}}{\beta}+\frac{\beta}{\tilde{\beta}_{c}} \right).
\end{equation}
Note that when $\beta \le \tilde{\beta}_c$
\begin{equation}\label{eq: betac appears}
\frac{E_{\infty}}{2}\left( \frac{\tilde{\beta}_{c}}{\beta}+\frac{\beta}{\tilde{\beta}_{c}} \right)>E_{0}\iff\frac{1}{2}\left(\frac{\tilde{\beta}_{c}}{\beta}+\frac{\beta}{\tilde{\beta}_{c}}\right)>\frac{E_{0}}{E_{\infty}} \overset{\eqref{eq: rpmdef as sols}}{\iff} \frac{\beta}{\tilde{\beta}_{c}}<r_{-}\left(\frac{E_0}{E_\infty}\right)
\overset{\eqref{eq: rbardef},\eqref{eq: betacdef}}{\iff}\beta<\beta_{c}.
\end{equation}
Thus the claim \eqref{eq:onlyzerosol} follows since \eqref{eq: ground state} implies that the probability of an $\hat{m}$ satisfying \eqref{eq: tap sol must sat} existing goes to zero.
The claims \eqref{eq:Itapbeforebetac} and \eqref{eq: FE very high temp} are simple consequences of \eqref{eq:onlyzerosol} and the definitions \eqref{eq: def TAP comp} of $I_{\rm TAP}$ and  \eqref{eq:varprinc} of $\fTAP(\beta)$.

Conversely if $\beta > \beta_c$ then we have that
$$ r_{-}\left(\frac{E}{E_\infty}\right) < \frac{\beta}{\tilde{\beta}_c} \text{ for } E\in[E_0-\delta,E_0],$$
for any $\delta>0$.
By Theorem \ref{thm: det char f} 2) and \eqref{eq: qE glob max} the function $q\to f(E,q)$ thus has a local maximum $q_E$ with $q_E \in D_\beta \setminus \{0\}$ for all $E\in[E_0-\delta,E_0]$. This means that if $\hat{m}$ is a critical point of $H_N$ with $\frac{1}{N}H_N(\hat{m}) \in [E_0-\delta,E_0]$ then $m = q^{p/2} \hat{m}$ is a relevant TAP solution. Thus using the notations from \eqref{Ndef} and above \eqref{eq: loc max quenched}
$$ \mathcal{N}_N(\mathbb{R}, \mathbb{R}, D_\beta \setminus \{0\}) \ge \mathcal{M}([E_0-\delta,E_0)).$$
Taking logs and dividing by $N$ implies \eqref{eq: exp many TAP sol} by the definitions \eqref{eq: def TAP comp} and \eqref{eq: loc max quenched}, since $I( [E_0-\delta,E_0))>0$ for all $\delta>0$ (recall \eqref{eq: I props}).
\end{proof}
\begin{rem}\label{rem: betac}
Eq. \eqref{eq: betac appears} explains the origin of the threshold $\beta_c$: it is the first $\beta$ where critical points of $H_N$ of energy as low as $N E_0$ give rise to relevant TAP solutions.
\end{rem}

\section{Computation of the TAP rate function}\label{computeI}
In this section we give a more detailed version of Theorem \ref{thm: TAP comp intro} about the TAP rate function.
Define
\begin{equation}\label{eq: Uq def}
    U_q(q) = f(E_q(q),q) = \beta q^{p/2} E_q(q) +  \frac{1}{2}\log(1-q) + \On(q).
\end{equation}
Since $x\to x+1/x$ is strictly decreasing for $x\le1$ and $q\to \sqrt{2} \beta_2(q)$ is strictly decreasing for $q \ge q_{\min}$ (recall \eqref{eq: qmin def}) we have that $E_q$ is strictly increasing for such $q$. Thus since $U_q'(q) = \beta q^{p/2} E_q'(q) + \frac{d}{dq} f(E_q(q),q) = \beta q^{p/2} E_q'(q)>0$ for $q\ge q_{\min}$ we have that
$$ U_q(q) \text{ is strictly increasing for } q\ge q_{\min} .$$
From \eqref{eq: Uq def}, \eqref{eq: def UE-1} and \eqref{eq: qE def} we have the natural relation \begin{equation}\label{eq: Uq natural rel}
    U_q(q)=U_E(E_q(q)).
\end{equation}
Recall that from \eqref{eq: Uminbeta}
\begin{equation}\label{eq: Umax Umin def}
U_{\min}= U_E(E_{\min}) = \begin{cases}
U_E(E_{\min}) & \text{\,if }\beta_{c}\le\beta\le\tilde{\beta}_{c},\\
U_E(E_\infty) & \text{\,if }\beta\ge\tilde{\beta}_{c},
\end{cases}
\text{ and let }
U_{\max}=\begin{cases}
-\infty & \text{\,if }\beta<\beta_{c},\\
U_E(E_0) & \text{\,if }\beta\ge\beta_{c}.
\end{cases}
\end{equation}
Since $E_U=U_E^{-1}$ it follows trivially from these that
\begin{equation}\label{eq: EU of Umin and Umax}
    E_U(U_{min}) = E_{\min} \text{ and } E_U(U_{\max})=E_0.
\end{equation}
In Proposition \ref{prop: umax umin formulas} we give more concrete formulas for $U_{\min}$ and $U_{\max}$. Our full result on the TAP complexity is the following. There are many ways to express the dependence of the $\ITAP$ on $I$; we choose to present a verbose version and a compact version.
\begin{teor}[TAP entropy in terms of critical point entropy]\label{thm:TAP comp}
For all $\beta\ge \beta_c$ we have $U_{\min}\le U_{\max}$, with equality only if $\beta = \tilde{\beta}_c$. Furthermore it holds that
\begin{equation}\label{eq: ITAP verbose}
I_{{\rm TAP}}\left(U,V,q\right)=\begin{cases}
I(E_{\min}) \overset{\text{for } \beta > \tilde{\beta}_{c}}{>}0 & \text{ if }U=U_{\min}\\
& \quad\text{and }q=q_E(E_{\min}), V=q^{p/2} E_{\min},\\
I\left(E\right) \in (0,I(E_{\min})) & \text{ if }U_{\min} < U< U_{\max}\\
 & \quad\text{and }q\text{\,is the unique solution to   }U_{q}\left(q\right)=U\text{ in }[q_{\min},1),\\
 & \quad\text{and } V= q^{p/2} E \text{ and }E=E_q(q) \in (E_{\min},E_0)\\
I(E_0)=0 & \text{if }U=U_{\max}\\
&\quad\text{and }q=q_E(E_0), V=q^{p/2} E_0,\\
0 & \text{\,if }U=\frac{\beta^{2}}{2}\text{ and } \beta\le\beta_{d}\\
&\quad\text{and }V=0, q=0,\\
-\infty & \text{otherwise}.
\end{cases}
\end{equation}
Alternatively it holds for all $\beta\ge 0$ and $U$ that
\begin{equation}\label{eq: TAP comp}
\ITAP\left(\ETAP,E,q\right)= 
\begin{cases}
I\left(E_{U}\left(U\right)\right) & \text{ if } U\ge U_{\min}, q = q_E(E_U(U)), V=q^{p/2} E_U(U),\\
0 & \text{\,if }U=\frac{\beta^{2}}{2}, V=0, q=0\text{ and }\beta\le\beta_{d}\\
-\infty & \text{ otherwise}.
\end{cases}
\end{equation}
\end{teor}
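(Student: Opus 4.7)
The plan is to combine two ingredients: (i) the deterministic bijection in Lemma \ref{lem: TAP sol iff} between relevant non-zero TAP solutions and local maxima of $H_N$ on $S_{N-1}$, and (ii) the quenched complexity for local maxima of $H_N$ from the Corollary following Theorem \ref{thm: comp crit point}. Specializing Lemma \ref{lem: TAP sol iff} to $g = H_N/N$, which is $p$-homogeneous almost surely by \eqref{eq:hamiltonianassum}, a vector $m \ne 0$ is a relevant TAP solution with $H_{\TAP}(m)/N = U$ iff $U \ge U_{\min}$, $\hat m$ is a local maximum of $H_N$ on $S_{N-1}$ with $H_N(\hat m)/N = E_U(U)$, and $|m|^2 = q_E(E_U(U))$. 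By $p$-homogeneity, the Hamiltonian energy $V = H_N(m)/N = q^{p/2} E_U(U)$ is then forced. Hence, for each $U \ge U_{\min}$, the nonzero relevant TAP solutions with $(U,V,q)$ in a small box around prescribed values are in one-to-one correspondence with local maxima of $H_N$ on $S_{N-1}$ with $H_N(\hat m)/N$ close to $E_U(U)$, provided $(V,q)$ satisfy the forced relations $V = q^{p/2}E_U(U)$ and $q = q_E(E_U(U))$; otherwise the count is zero.

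Pushing this correspondence through the critical point complexity $\frac{1}{N}\log \mathcal{M}_N([E-\varepsilon,E+\varepsilon]) \to I(E)$ in probability, together with continuity of $E_U$ (monotone by \eqref{eq: EU strictly increasing}), gives $I_{\TAP}(U,V,q) = I(E_U(U))$ on the parametrized slice and $-\infty$ off it. The only remaining relevant TAP solution is $m=0$, which by Lemma \ref{lem: det TAP sol radial sym} 2) exists precisely when $\beta \le \beta_d$, contributing a single atom at $(U,V,q) = (\beta^2/2, 0, 0)$ with entropy $0$. Combined with the previous paragraph, this yields the compact formula \eqref{eq: TAP comp}.

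To obtain the verbose form \eqref{eq: ITAP verbose} from \eqref{eq: TAP comp}, one reads off $E_U$ at the endpoints: $E_U(U_{\min}) = E_{\min}$ and $E_U(U_{\max}) = E_0$ by \eqref{eq: EU of Umin and Umax}, so $I(E_U(U_{\max})) = I(E_0) = 0$ by \eqref{eq: I props}, and $I(E_U(U_{\min})) = I(E_{\min})$ which is strictly positive as soon as $E_{\min} < E_0$ (and is $I(E_\infty) = g(1) > 0$ when $\beta > \tilde\beta_c$). For intermediate $U$, the parametrization by $q$ is obtained by composing the strictly monotone maps $U \mapsto E_U(U) \mapsto q_E(E_U(U)) = q$, using \eqref{eq: Uq natural rel} to rewrite the entropy in terms of $q$ via $U_q(q) = U_E(E_q(q))$. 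Finally the inequality $U_{\min} \le U_{\max}$ reduces, by \eqref{eq: UE strictly increasing}, to $E_{\min} \le E_0$, which follows by a direct computation from \eqref{eq: Emin def} using the algebraic identities $r_{-}(x)r_{+}(x)=1$ and $r_{-}(x)+r_{+}(x)=2x$ encoded in \eqref{eq: rpmdef}-\eqref{eq: rbardef}, together with the defining formula \eqref{eq: betacdef} for $\beta_c = \bar r \tilde\beta_c$.

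The main obstacle is not any single hard estimate — the conceptual content is entirely packaged into Lemma \ref{lem: TAP sol iff} and the cited critical-point complexity. What requires care is the case-by-case bookkeeping: identifying precisely when $I(E_U(U))$ is positive, zero, or $-\infty$ (governed by whether $E_U(U) \in [E_{\min}, E_0)$, $=E_0$, or outside $[E_\infty, E_0]$), handling the degenerate atom at $m=0$, and verifying that the strict vs.\ equality case on the endpoints matches \eqref{eq: EU of Umin and Umax} and the formula for $E_{\min}$ across the two regimes $\beta \lessgtr \tilde\beta_c$.
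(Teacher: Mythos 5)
Your proposal takes essentially the same route as the paper's proof: both specialize Lemma \ref{lem: TAP sol iff} to $g=H_N/N$, use the resulting bijection between nonzero relevant TAP solutions of prescribed $(U,V,q)$ and local maxima of $H_N$ on $S_{N-1}$ of energy $E_U(U)$, push the count through the quenched complexity $\frac{1}{N}\log\mathcal{M}_N\to I$, treat the $m=0$ atom via Lemma \ref{lem: det TAP sol radial sym} 2), and then derive the verbose form by evaluating $E_U$ at the endpoints via \eqref{eq: EU of Umin and Umax}. The paper is slightly more explicit than you are about the $\varepsilon$-box bookkeeping (it exploits the monotonicity and continuity of $E_U,V_U,q_U$ to show that the count in a shrinking $(U,V,q)$-box equals $\mathcal{M}_N$ over a shrinking $E$-interval, and that any other $(V,q)$ gives an empty count); this is the step you compress into ``Pushing this correspondence through the critical point complexity'', and it is worth spelling out, but the idea is identical. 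One small remark: your reduction of $U_{\min}\le U_{\max}$ to $E_{\min}\le E_0$ via monotonicity of $U_E$ is correct, but carried to completion it shows that equality in fact occurs at $\beta=\beta_c$ (where $E_{\min}=E_0$ by \eqref{eq: betac Emin E0}), not at $\beta=\tilde\beta_c$ as the theorem statement reads; the latter appears to be a typo in the paper, and your approach would expose it.
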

\begin{rem}\label{rem: ITAP thm rems} a) From \eqref{eq: TAP comp} one sees that for each $U$ there is at most one $(V,q)$ such that $\ITAP(U,V,q)\ne-\infty$.\\

b) From the first three cases in \eqref{eq: ITAP verbose} one sees that \TP solutions of minimal energy $U_{\min}$ are the most numerous, and they have complexity $I(E_{\min})$ ($=I(E_\infty)$ if $\beta \ge \tilde{\beta}_c$, while for any $U>U_{\min}$ the complexity is $I(E)$ for some $E>E_{\min}$ and $I(E)<I(E_{\min})$). Also from the third case one sees that the number of \TP\ solutions within $o(N)$ of the maximal TAP energy is subexponential, since their complexity $0$.\\

c) Furthermore combined with \eqref{eq: Emin def} wee see the meaning of the threshold $\tilde{\beta}_c$: for $\beta>\tilde{\beta}_c$ all critical points of $H_N$ on $S_{N-1}$ of any energy in $[E_\infty,E_0]$ give rise to \TP\ solutions, while for $\beta\in (\beta_c,\tilde{\beta}_c)$ we have $E_{\min} > E_\infty$ and only critical points with energies in $[E_{\min},E_0]$ do so.
\end{rem}
\begin{proof}
Let
\begin{equation}\label{eq: qU def}
    q_U(U)=q_E(E_U(U))
\end{equation}
and note that $q_U: [U_{\min}, \infty)\to[q_{\min},1)$ is strictly increasing (see \eqref{eq: qE increasing} and \eqref{eq: EU strictly increasing}), and $q_U=U_q^{-1}$ since by definition and \eqref{eq: Uq natural rel} we have $q_U(U_q(q))=q_E(E_U(U_q(q))=q$. Also let $V_U(U) = q_U(U)^{p/2} E_U(U)$. Since $q_U$ and $E_U$ are increasing so is $V_U$. Now \eqref{eq: TAP comp} follows essentially directly since by Lemma \ref{lem: TAP sol iff} an $m\ne0$ is a \TP\ solution of energy $U$ iff $\hat{m}$ is a critical point of $H_N$ such that $\frac{1}{N}H_N(\hat{m})=E_U(U)$, $\frac{1}{N}H_N(m) = V_U(U)$ and $|m|^2=q_U(U)$.

The detailed argument is the following. Recall that $E_U,V_U,q_U$ are strictly increasing. Furthermore they are continuous and differentiable ($E_q$ is by \eqref{eq: Eq def} and \eqref{eq: beta2 concrete form}, which implies that the rest are via \eqref{eq: qE def}, \eqref{eq: def UE-1}, \eqref{eq: final def EU}). Therefore for all $U\in\mathbb{R},\varepsilon>0,\mathcal{V}\subset\mathbb{R},\mathcal{Q}\subset(0,1]$ if $V_{U}\left(U\right)\notin\mathcal{V}$ or $q_{U}\left(U\right)\notin\mathcal{Q}$
then for small enough $\varepsilon$ we have recalling the definitions \eqref{Ndef} and above \eqref{eq: loc max quenched} we have
\[
\mathcal{N}_{N}\left(\left[U-\varepsilon,U+\varepsilon\right],\mathcal{V},\mathcal{Q}\right)=-\infty.
\]
This implies that
\[
I_{{\rm TAP}}\left(U,V,q\right)=-\infty\text{\,if }V\ne V_{U}\left(U\right)\text{\,or }q \notin \{0,q_{U}\left(U\right)\}.
\]
Furthermore for any $\varepsilon>0$
\[
\begin{array}{l}
\mathcal{N}_{N}\left(\left[U-\varepsilon,U+\varepsilon\right],\left[V_{U}\left(U-\varepsilon\right),V_{U}\left(U+\varepsilon\right)\right],\left[q_{U}\left(U-\varepsilon\right),q_{U}\left(U+\varepsilon\right)\right]\right)\\
=\mathcal{M}_{N}\left(\left[E_{U}\left(U-\varepsilon\right),E_{U}\left(U+\varepsilon\right)\right]\right)
\end{array}
\]
This proves that for any $U$, if $V=V_{U}\left(U\right)$ and $q=q_{U}\left(U\right)$
we have
\[
I_{{\rm TAP}}\left(U,V,q\right)=I\left(E_{U}\left(U\right)\right).
\]
Furthermore for $\varepsilon$ such that $\varepsilon<q_{\min}$ we
have
\[
\begin{array}{l}
\mathcal{N}_{N}\left(\mathcal{U},\mathcal{V},[0,\varepsilon)\right)=\begin{cases}
1 & \text{ if }\frac{\beta^{2}}{2}\in\mathcal{U},0\in\mathcal{V},\beta\le\frac{\beta^{2}}{2}\\
0 & \text{ otherwise}.
\end{cases}\end{array}
\]
This implies that
\[
I_{{\rm TAP}}\left(U,V,0\right)=\begin{cases}
0 & \text{ if }\frac{\beta^{2}}{2}\in\mathcal{U},0\in\mathcal{V},\beta\le\frac{\beta^{2}}{2}\\
-\infty & \text{ otherwise}.
\end{cases}
\]
Thus the proof of \eqref{eq: TAP comp} is complete.

The first and the third case in \eqref{eq: ITAP verbose} follows from the fact that $E_U(U_{\min})=E_{\min}$, whereby $E_{\min}=E_\infty$ if $\beta \ge \tilde{\beta}_c$ (recall \eqref{eq: Emin def}) and $E_U(U_{\max})=E_0$, which are consequences of the definition \eqref{eq: Umax Umin def} and the fact that $E_U = U_E^{-1}$ (recall \eqref{eq: UE strictly increasing}). The second case follows because $E_U$ is strictly increasing (recall \eqref{eq: EU strictly increasing}) and \eqref{eq: EU of Umin and Umax} so that $E_U(U) \in (E_{\min},E_0)$ if $U\in(U_{\min},U_{\max})$, and that $U_q(q)=U \iff q = q_U(U) \iff q = q_E(E_U(U))$.
\end{proof}

Theorem \ref{thm: TAP comp intro} follows from Theorem \ref{thm:TAP comp}, except for the claim that $U_{\max} \le \beta^2/2$.
This missing part is in fact a consequence of Theorem \ref{thm:fstthrm}.

Finally we rederive \eqref{eq:Itapbeforebetac}, this time from Theorem \ref{thm:TAP comp}, in a way that reinforces the point made in Remark \ref{rem: betac}.
\begin{proof}[Alternative proof \eqref{eq:Itapbeforebetac}] We have that
\begin{equation}\label{eq: betac Emin E0}  
\begin{array}{c}
\beta<\beta_{c}\iff E_{\min}>E_{0},\\
\beta=\beta_{c}\iff E_{\min}=E_{0},\\
\beta>\beta_{c}\iff E_{\min}<E_{0},
\end{array}
\end{equation}
since when $\beta \le \tilde{\beta}_c$ first expression in \eqref{eq: betac appears} is equivalent to $E_{\min}>E_0$ by \eqref{eq: Emin def} and \eqref{eq: rbardef}-\eqref{eq: betacdef} and $E_{\min}$ is non-increasing in $\beta$ (recall \eqref{eq: Emin def}). Therefore 
\begin{equation}\label{eq: EU at least E0}
    \text{when }\beta < \beta_c \text{ we have }\EU(\ETAP) > E_0 \mbox{ for all } \ETAP \ge \Uminbeta,
\end{equation}
so that
$$ \sup_{q\in(0,1],V\in{\mathbb{R}}} \ITAP(U,V,q) \overset{\eqref{eq: TAP comp}}{=} I(E_U(U)) \overset{\eqref{eq: EU at least E0}}{>}I(E_0)=0,$$
which implies that the RHS in fact equals $-\infty$, giving \eqref{eq:Itapbeforebetac}.
\end{proof}

\section{Maximal and minimal TAP energy}\label{sec: umax umin}
In this section we give more concrete formulas for $U_{\min}$ and $U_{\max}$ when $\beta \ge \beta_{c}$.
\begin{prop}\label{prop: umax umin formulas}
It holds for $\beta \ge \tilde{\beta}_c$ that

\begin{equation}\label{eq: Umin form}
U_{\min}=\begin{cases}
\frac{\beta^2}{2}+\frac{1}{2}\log\frac{2}{p}+
\frac{p-2}{8p}\left(4+ \frac{\beta^2}{\tilde{\beta}_{c}^2}\frac{p-2}{p-1}\right)
& \text{\,if }\beta_{c}\le\beta\le\tilde{\beta}_{c},\\
\frac{\beta^{2}}{2}+\frac{1}{2}\log\left(1-q\right)+\frac{q\left(3p\left(1-q\right)+3q-4\right)}{\left(1-q\right)^{2}2p\left(1-p\right)} & \text{\,if }\beta\ge\tilde{\beta}_{c}\\
\text{where }q\text{\,is the unique solution to }\\
\left(1-q\right)q^{\frac{p-2}{2}}=\frac{1}{\beta\sqrt{p\left(p-1\right)}}\text{ in }[\frac{p-2}{p},1).
\end{cases},
\end{equation}
and
\begin{equation}\label{eq: Umax form}
\begin{array}{lll}
U_{\max}&=&\frac{\beta^{2}}{2}+\frac{1}{2}\log\left(1-q\right)+ \frac{2}{p}\frac{q}{1-q}\frac{E_{0}}{E_{\infty}}\bar{r}-\frac{1}{2}\frac{p+\frac{q}{1-q}}{p(p-1)}\frac{q}{1-q}\bar{r}^2\\
& &\text{ where }q\text{\,is the unique solution to}\\
& &\left(1-q\right)q^{\frac{p-2}{2}}=\frac{\bar{r}}{\sqrt{p\left(p-1\right)}}\text{ in }(\frac{p-2}{p},1),
\end{array}
\end{equation}
or alternatively
\begin{equation}\label{eq: Umax as sup}
    U_{\max} = \sup_{q\ge\frac{p-2}{p}: \sqrt{2}\beta_2(q)\le 1} h_{\TAP}(q^{p/2}E_0,q)
\end{equation}
\end{prop}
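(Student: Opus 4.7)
\textbf{Proof plan for Proposition \ref{prop: umax umin formulas}.}

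The three formulas reduce to direct evaluation of $f(E,q)$ from \eqref{eq: f def} at specific pairs $(E,q)$. By \eqref{eq: Umax Umin def} combined with $U_E(E) = f(E,q_E(E))$ from \eqref{eq: def UE-1}, I have $U_{\min}=f(E_{\min},q_E(E_{\min}))$ and $U_{\max}=f(E_0,q_E(E_0))$. By \eqref{eq: qE as sol conc} the relevant $q$ in each case solves
\begin{equation*}
\beta(1-q)q^{(p-2)/2}=\frac{c}{\sqrt{p(p-1)}},
\end{equation*}
with $c=\beta/\tilde\beta_c$ and (by \eqref{eq: w def}) $q=(p-2)/p$ when $\beta_c\le\beta\le\tilde\beta_c$ and $E=E_{\min}$; $c=1$ when $\beta\ge\tilde\beta_c$ and $E=E_{\min}=E_\infty$; and $c=\bar r$ (by \eqref{eq: rbardef}) when $E=E_0$.

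The key observation is that this constraint supplies just enough information to eliminate $\beta$ from every $q$-dependent term of $f(E,q)$. Solving it for $q^{(p-2)/2}$ gives $q^{p/2}=cq/\bigl(\beta\sqrt{p(p-1)}(1-q)\bigr)$, which together with $E_\infty=2\sqrt{p-1}/\sqrt p$ produces
\begin{equation*}
\beta q^{p/2}E=\frac{2cq}{p(1-q)}\cdot\frac{E}{E_\infty}.
\end{equation*}
Squaring the constraint yields $\beta^2(1-q)^2q^{p-2}=c^2/(p(p-1))$, from which
\begin{equation*}
\tfrac{\beta^2 p}{2}(1-q)q^{p-1}=\tfrac{c^2 q}{2(p-1)(1-q)},\qquad \tfrac{\beta^2}{2}q^p=\tfrac{c^2 q^2}{2p(p-1)(1-q)^2}.
\end{equation*}
Substituting both into the Onsager form \eqref{eq: Onsager concrete} and plugging into \eqref{eq: f def} gives the master expression
\begin{equation*}
f(E,q)=\tfrac{\beta^2}{2}+\tfrac{1}{2}\log(1-q)+\tfrac{2cq}{p(1-q)}\tfrac{E}{E_\infty}-\tfrac{c^2 q}{2(p-1)(1-q)}-\tfrac{c^2 q^2}{2p(p-1)(1-q)^2}.
\end{equation*}

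For \eqref{eq: Umax form} ($c=\bar r$, $E=E_0$), grouping the two $c^2$-terms over the denominator $2p(p-1)(1-q)^2$ gives exactly the stated factor $\tfrac{1}{2}\tfrac{p+q/(1-q)}{p(p-1)}\tfrac{q}{1-q}\bar r^2$. For the second line of \eqref{eq: Umin form} ($c=1$, $E/E_\infty=1$), combining the three rational terms over the common denominator $2p(p-1)(1-q)^2$ and using $(3p-4)(1-q)-q=3p(1-q)+3q-4$ produces the displayed numerator. For the first line of \eqref{eq: Umin form}, I substitute $q=(p-2)/p$ directly into the master formula; using \eqref{eq: beta tilde c def} to convert the resulting power $((p-2)/p)^{p-2}$ into $p/(4(p-1)\tilde\beta_c^2)$, every $q$-factor collapses to an explicit rational expression in $p$, and collecting terms over $8p(p-1)\tilde\beta_c^2$ yields the displayed $\tfrac{p-2}{8p}(4+\tfrac{\beta^2}{\tilde\beta_c^2}\tfrac{p-2}{p-1})$ correction.

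Finally, for \eqref{eq: Umax as sup}: Lemma \ref{lem:ourplefkaprofs} 1) identifies the constraint set with $D_\beta\cap[\tfrac{p-2}{p},1]$, and since $\beta\ge\beta_c$ gives $E_0\ge E_{\min}$ by \eqref{eq: betac Emin E0}, \eqref{eq: qE glob max} identifies $q_E(E_0)$ as the unique global maximum of $q\mapsto f(E_0,q)=h_{\TAP}(q^{p/2}E_0,q)$ on that set, with value $U_E(E_0)=U_{\max}$. The work is essentially algebraic bookkeeping; the only delicate step is the orderly collection of the multiple fractions in $q/(1-q)$ into the compact closed forms of the proposition (and in the $U_{\max}$ case, recognizing that the linear factor $\tfrac{2cq}{p(1-q)}\tfrac{E}{E_\infty}$ of the master formula already matches the displayed cross-term $\tfrac{2}{p}\tfrac{q}{1-q}\tfrac{E_0}{E_\infty}\bar r$ without needing the identity $\bar r+1/\bar r=2E_0/E_\infty$).
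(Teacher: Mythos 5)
Your proof is correct and follows essentially the same route as the paper: you rederive (rather than cite) the identity $f(E,q)=\tfrac{\beta^2}{2}+\tfrac12\log(1-q)+\tfrac{E}{E_\infty}\tfrac{2}{p}\tfrac{q}{1-q}w-\tfrac12(p(1-q)+q)\tfrac{1}{p(p-1)}\tfrac{q}{(1-q)^2}w^2$ with $w=\sqrt2\beta_2(q)$ (which is exactly Lemma~\ref{lem: f in terms of w q}), note that the constraint fixes $w=c$ with $c=\beta/\tilde\beta_c,\,1,\,\bar r$ in the three cases, and then evaluate. Your ``master expression'' is identical to the paper's lemma once its last term is split into its $p(1-q)$ and $q$ parts.

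One caveat worth flagging: in the second case of \eqref{eq: Umin form} your algebra correctly gives the combined rational term $\dfrac{q\bigl(3p(1-q)+3q-4\bigr)}{2p(p-1)(1-q)^2}$, i.e.\ denominator $2p(p-1)$, whereas the displayed formula has $2p(1-p)$. You say this ``produces the displayed numerator,'' which is literally true but elides that the displayed denominator carries the opposite sign. Your version is the right one: direct numerical check at $p=3$, $\beta=\tilde\beta_c$ (so $q=1/3$, $1-q=2/3$, $w=1$) gives $f(E_\infty,q_{\min})\approx0.547$, matching your sign and not the paper's, and the paper's own proof via Lemma~\ref{lem: f in terms of w q} also yields your sign. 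So the proposition as printed has a sign typo ($1-p$ should be $p-1$), and you should state explicitly that your bookkeeping contradicts the displayed denominator rather than implying agreement.
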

The following identity will be useful in this section and the next.
\begin{lem}\label{lem: f in terms of w q}
For any $E,q$ we have
\begin{equation}
f\left(E,q\right)=\frac{\beta^{2}}{2}+\frac{1}{2}\log\left(1-q\right)+\frac{E}{E_{\infty}}\frac{2}{p}\frac{q}{1-q}w-\frac{1}{2}\left(p\left(1-q\right)+q\right)\frac{1}{p\left(p-1\right)}\frac{q}{\left(1-q\right)^{2}}w^{2},\label{eq: f in terms of w q}
\end{equation}
where $w=\sqrt{2}\beta_2(q)$.
\end{lem}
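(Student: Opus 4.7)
The plan is a direct substitution: rewrite each of the three non-logarithmic pieces of $f(E,q)$ in terms of $w$ and $q$, using the explicit form of $\xi$ and $\beta_2$ for the pure $p$-spin model.

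First I would record the key algebraic identity. Since $\xi(x)=x^p$ we have $\xi''(q)=p(p-1)q^{p-2}$, so from \eqref{eq: beta2 def}
\begin{equation*}
w=\sqrt{2}\beta_2(q)=\beta\sqrt{p(p-1)}\,(1-q)\,q^{(p-2)/2},
\end{equation*}
which inverts to give the single identity I will use repeatedly:
\begin{equation*}
\beta^2 q^{p-2}=\frac{w^2}{p(p-1)(1-q)^2}.
\end{equation*}

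Next I would handle the linear-in-$E$ term. Writing $\beta q^{p/2}=q\cdot\beta q^{(p-2)/2}$ and solving the definition of $w$ for $\beta q^{(p-2)/2}$ gives $\beta q^{p/2}=\frac{qw}{\sqrt{p(p-1)}(1-q)}$. Using $E_\infty=2\sqrt{p-1}/\sqrt{p}$ from \eqref{eq: Einf def} one checks the simple arithmetic $E_\infty/\sqrt{p(p-1)}=2/p$, so
\begin{equation*}
\beta q^{p/2} E=\frac{E}{E_\infty}\cdot\frac{2}{p}\cdot\frac{q}{1-q}\,w,
\end{equation*}
which is the second term in \eqref{eq: f in terms of w q}.

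For the Onsager correction, I would expand $\xi(1)-\xi'(q)(1-q)-\xi(q)=1-pq^{p-1}(1-q)-q^p$. Multiplying the two $q$-dependent summands by $\beta^2/2$ and applying the boxed identity gives
\begin{equation*}
\tfrac{\beta^2}{2}\,pq^{p-1}(1-q)=\frac{q\,w^2}{2(p-1)(1-q)},\qquad
\tfrac{\beta^2}{2}\,q^p=\frac{q^2\,w^2}{2p(p-1)(1-q)^2},
\end{equation*}
and factoring out $\frac{q w^2}{2p(p-1)(1-q)^2}$ from their sum yields the coefficient $p(1-q)+q$. Hence
\begin{equation*}
\mathrm{On}(q)=\frac{\beta^2}{2}-\frac{1}{2}\bigl(p(1-q)+q\bigr)\frac{1}{p(p-1)}\frac{q}{(1-q)^2}w^2.
\end{equation*}
Adding the $\frac{1}{2}\log(1-q)$ term from \eqref{eq: f def} and combining with the computation above produces exactly \eqref{eq: f in terms of w q}.

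There is no real obstacle: the lemma is a bookkeeping identity specific to $\xi(x)=x^p$, and the only slightly nontrivial step is checking the numerical coincidence $E_\infty/\sqrt{p(p-1)}=2/p$ that makes the coefficient $2/p$ appear in the $E$-linear term.
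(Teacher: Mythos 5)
Your proof is correct and follows essentially the same route as the paper: write $w=\sqrt{2}\beta_2(q)=\beta\sqrt{p(p-1)}(1-q)q^{(p-2)/2}$ explicitly, substitute into the $\beta E q^{p/2}$ term using $E_\infty/\sqrt{p(p-1)}=2/p$, and rewrite the two $q$-dependent pieces of the Onsager correction in terms of $w$. The only cosmetic difference is that the paper first factors $p(1-q)q^{p-1}+q^p=q^{p-1}\bigl(p(1-q)+q\bigr)$ and then substitutes once, whereas you substitute into each summand separately and re-factor afterward; both are the same bookkeeping computation.
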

\begin{proof}
By \eqref{eq: beta2 concrete form} we have
\begin{equation}\label{eq: q power from w}
    \beta q^{p/2} = \frac{w}{\sqrt{p\left(p-1\right)}}\frac{q}{1-q}\text{ for all }q,\beta.
\end{equation}
Thus for the first term of $f\left(E,q\right)$ in \eqref{eq: f def} we have that
\[
\beta Eq^{p/2}
\overset{E_{\infty}=2\sqrt{\frac{p-1}{p}}}{=} \frac{E}{E_\infty} \frac{2}{p} \frac{q}{1-q}w
\]
for all $q$. By \eqref{eq: Onsager concrete} the last term of $f\left(E,q\right)$ equals
\[
\begin{array}{lcl}
\frac{\beta^{2}}{2}\left(1-p\left(1-q\right)q^{p-1}-q^{p}\right)& = & \frac{\beta^{2}}{2}-\frac{\beta^{2}}{2}q^{p-1}\left(p\left(1-q\right)+q\right)\\
 & \overset{\eqref{eq: beta2 concrete form}}{=} & \frac{\beta^{2}}{2}-\frac{1}{2}\left(\sqrt{2}\beta_{2}\left(q\right)\right)^{2}\frac{1}{p\left(p-1\right)}\frac{q}{\left(1-q\right)^{2}}\left(p\left(1-q\right)+q\right),
\end{array}
\]
for all $q$. Thus \eqref{eq: f in terms of w q} follows.
\end{proof}
\begin{proof}[Proof of Proposition \ref{prop: umax umin formulas}]
Recall from \eqref{eq: Umax Umin def} and \eqref{eq: Uminbeta}-\eqref{eq: def UE-1} that $U_{\min} = f(E_{\min},q_{\min})$ and $U_{\max} = f(E_0,q_E(E_0))$. The latter with \eqref{eq: qE glob max} implies \eqref{eq: Umax as sup}.

From \eqref{eq: qE as sol conc} we have that $q_{\min}=q_E(E_{\min})$ satisfies $w=\sqrt{2}\beta_2(q_{\min})=\frac{\beta}{\tilde{\beta}_c}$ for $\beta\le \tilde{\beta}_c$ and $w=\sqrt{2}\beta_2(q_{\min})=1$ for $\beta \ge \tilde{\beta}_c$.

When $\beta \in [\beta_c,\tilde{\beta}_c]$ we have $E_{\min}=\frac{E_\infty}{2}\left(\frac{\tilde{\beta}_{c}}{\beta}+\frac{\beta}{\tilde{\beta}_{c}}\right)$ by \eqref{eq: Emin def}.
Thus in this case using Lemma \ref{lem: f in terms of w q} we get that
$$U_{\min} = \frac{\beta^2}{2}+\frac{1}{2}\log\frac{2}{p}+\left(\frac{\beta}{\tilde{\beta}_{c}}+\frac{\tilde{\beta}_{c}}{\beta}\right)\frac{1}{p}\frac{q}{1-q}\frac{\beta}{\tilde{\beta}_{c}}-\frac{1}{2}\left(p\left(1-q\right)+q\right)\frac{1}{p\left(p-1\right)}\frac{q}{\left(1-q\right)^{2}}\frac{\beta^2}{\tilde{\beta}_{c}^2},$$
where $q=q_{\min}$. Since by \eqref{eq: qmin def} we have $q_{\min}=\frac{p-2}{p}$ this simplifies to the first line of the LHS of \eqref{eq: Umin form}.

When $\beta \ge \tilde{\beta}_c$ we have $E_{\min} = E_\infty$. By \eqref{eq: qmin def} and \eqref{eq: beta2 concrete form} we get that $q=q_{\min}=q_E(E_{\infty})$ is the unique solution to the equation in the bottom line of the LHS of \eqref{eq: Umin form}. Thus by Lemma \ref{lem: f in terms of w q} we get that
$$ U_{\min} = \frac{\beta^{2}}{2}+\frac{1}{2}\log\left(1-q\right)+\frac{2}{p}\frac{q}{1-q}-\frac{1}{2}\left(p\left(1-q\right)+q\right)\frac{1}{p\left(p-1\right)}\frac{q}{\left(1-q\right)^{2}},$$
where $q=q_{\min}$, which simplifies to the second line of \eqref{eq: Umin form}.

Moving to $U_{\max}$, we have that $q_E(E_0)$ solves the equation in \eqref{eq: Umax form} by \eqref{eq: qE as sol conc}. By Lemma \ref{lem: f in terms of w q} we get that

\[
U_{\max} = \frac{\beta^2}{2} + \frac{1}{2}\log(1-q) +
\frac{E_{0}}{E_{\infty}}\frac{2}{p}\frac{q}{1-q}\bar{r}-\frac{1}{2}\left(p\left(1-q\right)+q\right)\frac{1}{p\left(p-1\right)}\frac{q}{\left(1-q\right)^{2}}\bar{r}^{2},
\]
which simplifies to the first line on the LHS of \eqref{eq: Umax form}.
\end{proof}

\section{Solution of the optimization problem}\label{optimize}
This section is devoted to the analysis of the TAP free energy
\beq \fTAP(\beta) = \sup_{\ETAP \in\mathbb{R},V\in\mathbb{R},q\in D_\beta} \{\ETAP + \ITAP(\ETAP,V,q)\} \eeq
and the proofs of Theorems \ref{thm:sndthrm}-\ref{thm:fstthrm}.   
First we rewrite the optimization over $\ETAP$ as an optimization over $E$ and $q$ as follows:
\begin{lem}\label{lem:reformu}
For any $\beta\geq 0$ and $U\ge U_{\min},q\ne0$
\begin{equation}\label{eq: TOT TAP FE from f and I}
U+I_{{\rm TAP}}\left(U,V,q\right)=\begin{cases}
f\left(E,q\right)+I\left(E\right) & \text{\,if }q=q_{E}\left(E\right),V=q^{p/2}E\text{\,for }E=E_U(U),\\
-\infty & \text{ otherwise}.
\end{cases}
\end{equation}
where $f(E,q)$ is defined in \eqref{eq: f def}, $q_E$ in \eqref{eq: qE glob max} and $E_U$ in \eqref{eq: final def EU}. Also
\begin{equation}\label{eq: reformu}
\begin{array}{ccc}
    \displaystyle{\sup_{U\in\mathbb{R},V\in\mathbb{R},q \in D_\beta\setminus\{0\}}}\left\{ U+\ITAP\left(U,V,q\right)\right\} 
    &=&\displaystyle{\sup_{E\in [\Emin ,E_0]}}\left\{ f\left(E,q_E(E)\right)+I\left(E\right)\right\}\\
    &=&\displaystyle{\sup_{E\in [\Emin ,E_0],q\in\left[q_{\min},1\right)}}\left\{ f\left(E,q\right)+I\left(E\right)\right\}.
\end{array}    
\end{equation}
\end{lem}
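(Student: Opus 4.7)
The plan is to deduce both parts from Theorem~\ref{thm:TAP comp} by tracking the bijection $U = U_E(E)$ between $U \ge U_{\min}$ and $E \ge E_{\min}$ together with the variational characterization \eqref{eq: qE glob max} of $q_E(E)$. First I would establish \eqref{eq: TOT TAP FE from f and I} by direct application of the compact formula \eqref{eq: TAP comp}. The hypothesis $q\ne 0$ rules out the paramagnetic branch (the second case in \eqref{eq: TAP comp}), so either $I_{\TAP}(U,V,q) = -\infty$, or else $U \ge U_{\min}$, $q = q_E(E_U(U))$, $V = q^{p/2} E_U(U)$, and $I_{\TAP}(U,V,q) = I(E_U(U))$. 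Writing $E = E_U(U)$, so that $q = q_E(E)$ and $U = U_E(E) = f(E, q_E(E)) = f(E,q)$ by \eqref{eq: def UE-1}, yields $U + I_{\TAP}(U,V,q) = f(E,q) + I(E)$, which is exactly \eqref{eq: TOT TAP FE from f and I}.

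For the first equality in \eqref{eq: reformu}, the identity just proved reduces the LHS supremum to one over $U \ge U_{\min}$, since any other value of $U$ combined with $q\ne 0$ gives $-\infty$, and because $V$ and $q$ are pinned by $U$:
\[
\sup_{U\ge U_{\min}} \bigl\{f(E_U(U), q_E(E_U(U))) + I(E_U(U))\bigr\}.
\]
By \eqref{eq: EU strictly increasing} and \eqref{eq: EU of Umin and Umax}, $E_U$ is a strictly increasing bijection from $[U_{\min}, \infty)$ onto $[E_{\min}, \infty)$, so the substitution $E = E_U(U)$ turns this into $\sup_{E \ge E_{\min}} \{f(E, q_E(E)) + I(E)\}$. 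By \eqref{eq: I props} the summand is $-\infty$ for $E > E_0$, so the supremum can be restricted to $E \in [E_{\min}, E_0]$ without changing its value.

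For the second equality in \eqref{eq: reformu}, I would invoke \eqref{eq: qE glob max}: $q_E(E)$ is the global maximizer of $q \mapsto f(E,q)$ over $D_\beta \cap [(p-2)/p, 1]$, which by \eqref{eq: Dbeta is interval in dyn low temp} equals $[q_{\min}, 1]$. Since $f(E,q) \to -\infty$ as $q \to 1$, the supremum over $[q_{\min},1)$ agrees with the supremum over $[q_{\min},1]$, so $\sup_{q \in [q_{\min},1)} f(E,q) = f(E, q_E(E))$ for every $E \in [E_{\min},E_0]$. Swapping the order of the two suprema then produces the claim. No step is substantial in this proof: the entire argument is bookkeeping on top of Theorem~\ref{thm:TAP comp} and \eqref{eq: qE glob max}, and the only mild subtlety is to verify at each step that the excluded configurations (the branch $q = 0$, values $U < U_{\min}$, and energies $E \notin [E_{\min},E_0]$) indeed contribute $-\infty$ and so do not affect any of the suprema.
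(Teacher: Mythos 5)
Your argument is correct and follows the paper's own proof essentially verbatim: both derive \eqref{eq: TOT TAP FE from f and I} from \eqref{eq: TAP comp} together with $U = U_E(E_U(U))$ and $U_E(E) = f(E,q_E(E))$, and both obtain \eqref{eq: reformu} by the bijection $E = E_U(U)$, the fact that $I(E)=-\infty$ for $E>E_0$, and the global-maximum property \eqref{eq: qE glob max} combined with \eqref{eq: Dbeta is interval in dyn low temp}. The extra care you take to check that the excluded configurations contribute $-\infty$ is a sound, if more verbose, rendering of the same bookkeeping.
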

\begin{proof}
For any $U\ge \Uminbeta$ we have $U=U_{E}\left(E_{U}\left(U\right)\right)$ since
 $E_U$ is the inverse of $U_E$, and \eqref{eq: TOT TAP FE from f and I} then follows from the definition \eqref{eq: def UE-1} of $U_E$ and \eqref{eq: TAP comp}. From \eqref{eq: TOT TAP FE from f and I} the first equality of \eqref{eq: reformu} follows since the range of $E_U$ is $[E_{\min},\infty)$ and $I(E) = -\infty$ for $E>E_0$. The second inequality follows from \eqref{eq: qE glob max} and \eqref{eq: Dbeta is interval in dyn low temp}), since the range of $q_E$ is $[q_{\min},1)$.
\end{proof}
Recall
\begin{equation}\label{eq: IAnn def}
I_{\Ann}(E) = g\left(\frac{E}{E_\infty}\right),
\end{equation}
for $g$ from \eqref{eq:g} denote the annealed rate function of local maxima. In what follows we will will compute
\begin{equation}\label{eq: annealed}
\sup_{E\in [\Emin,\infty)}\left\{ f\left(E,q_E(E)\right)+I_{\Ann}\left(E\right)\right\} \overset{\eqref{eq: qE glob max}}{=}
\sup_{E\in [\Emin,\infty),q\in[q_{\min},1)}\left\{ f\left(E,q\right)+I_{\Ann}\left(E\right)\right\} 
\end{equation}
Using that $I=I_{\Ann}$ on $[E_\infty,E_0]$ and $I= -\infty$ on $(E_0,\infty)$ we will be able to derive from this the value of \eqref{eq: reformu}. Since $q_E$ maximizes $q\to f(E,q)$ (see \eqref{eq: qE glob max}) the useful identity
\begin{equation}\label{eq: useful ident}
\frac{d}{dE}\left\{ f\left(E,q_{E}\left(E\right)\right)+I_{\Ann}\left(E\right)\right\} = \partial_{E}f\left(E,q_{E}\left(E\right)\right)+I_{\Ann}'\left(E\right)\text{ for }E\ge E_{\min}
\end{equation}
holds.

To compute \eqref{eq: annealed} we will consider the critical point equations
\begin{equation}\label{eq: crit point eq ann TAP}
    \begin{array}{ccc}
         \partial_E \left( f(E,q) + I_{\Ann}(E) \right) & = & 0, \\
         \partial_q \left( f(E,q) + I_{\Ann}(E) \right) &= & 0.
    \end{array}
\end{equation}
The second of these equations is nothing but the equation $\partial_q f(E,q)=0$ whose solutions are studied in Section \ref{sec:det}. Indeed Lemma \ref{lem: critequiv} implies that
\begin{equation}\label{eq: E tilde eq}
(\tilde{E},\tilde{q})\text{ satisfies }\partial_{q}f\left(E,q\right)=0\text{\,iff }\frac{\tilde{E}}{E_\infty}=\frac{1}{2}\left(\sqrt{2}\beta_{2}\left(\tilde{q}\right)+\frac{1}{\sqrt{2}\beta_{2}\left(\tilde{q}\right)}\right).
\end{equation}
The following identity will be useful to study the first of equation in \eqref{eq: crit point eq ann TAP}.
\begin{lem}\label{lem: IAnn deriv ident}
If $E=\frac{E_{\infty}}{2}\left(v+\frac{1}{v}\right)$ for $v\in\left(0,1\right)$
then
\begin{equation}\label{eq: IAnn deriv ident}
I_{\Ann}'\left(E\right)=\frac{v}{\sqrt{p\left(p-1\right)}}-\sqrt{\frac{p-1}{p}}\frac{1}{v}.
\end{equation}
\end{lem}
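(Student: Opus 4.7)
The plan is a direct calculation: differentiate $I_{\Ann}$ using the definition $I_{\Ann}(E) = g(E/E_\infty)$ from \eqref{eq:IAnn def intro}, then substitute the change of variables $\eta = (v+1/v)/2$.

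First I would note that by the chain rule $I_{\Ann}'(E) = g'(E/E_\infty)/E_\infty$, where
\begin{equation*}
g'(\eta) = -4\,\frac{p-1}{p}\,\eta + \Omega'(\eta)
\end{equation*}
by \eqref{eq:g}. The key intermediate computation is $\Omega'(\eta) = 2\eta - 2\sqrt{\eta^2-1}$, which I would obtain by differentiating \eqref{eq: omega} term by term: the derivative of $\log(\eta+\sqrt{\eta^2-1})$ contributes $1/\sqrt{\eta^2-1}$, and this combines with $-\sqrt{\eta^2-1} - \eta^2/\sqrt{\eta^2-1} = -(\eta^2-1)/\sqrt{\eta^2-1} - 2\sqrt{\eta^2-1}/2 \cdot\ldots$; a careful bookkeeping shows everything collapses to $2\eta - 2\sqrt{\eta^2-1}$. (This is the one place where a small algebraic simplification is non-obvious; it is the reason $\Omega$ arises as a nice log-potential.)

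Next I would exploit the change of variables. Under $\eta = \tfrac12(v+1/v)$ with $v\in(0,1)$, one has $\eta^2 - 1 = \tfrac14(1/v - v)^2$, so $\sqrt{\eta^2-1} = \tfrac12(1/v - v)$, and hence the algebraic identities
\begin{equation*}
\eta - \sqrt{\eta^2-1} = v, \qquad \eta + \sqrt{\eta^2-1} = 1/v.
\end{equation*}
Substituting these into $\Omega'(\eta) = 2(\eta - \sqrt{\eta^2-1})$ gives $\Omega'(\eta) = 2v$, and therefore
\begin{equation*}
g'(\eta) = -\,\frac{2(p-1)}{p}\Bigl(v + \frac{1}{v}\Bigr) + 2v = \frac{2v}{p} - \frac{2(p-1)}{p v}.
\end{equation*}
Finally, dividing by $E_\infty = 2\sqrt{(p-1)/p}$ and simplifying yields the claimed identity \eqref{eq: IAnn deriv ident}.

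No step should be a real obstacle; the mildly non-trivial one is confirming the cancellation in $\Omega'(\eta) = 2\eta - 2\sqrt{\eta^2-1}$, and the rest is a tidy substitution making essential use of the fact that $v$ and $1/v$ are the two roots of $z \mapsto \tfrac12(z + 1/z) = \eta$.
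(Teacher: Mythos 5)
Your proposal is correct and follows essentially the same route as the paper: use $I_{\Ann}'(E) = g'(E/E_\infty)/E_\infty$, differentiate $g$, compute $\Omega'(\eta)=2(\eta-\sqrt{\eta^2-1})$ (the paper also remarks this is the Stieltjes transform of the semicircle law), and substitute $\eta-\sqrt{\eta^2-1}=v$. The fragmentary algebra you sketch for $\Omega'$ is a little garbled, but the stated final identity is right and the remaining simplification checks out.
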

\begin{proof}
By \eqref{eq: IAnn def} we have 
$I_{\Ann}'\left(E\right) = \frac{1}{E_\infty} g'\left(\frac{E}{E_\infty}\right).$
Note that
$$ \Omega'(\eta) = 2(\eta - \sqrt{\eta^2-1}),$$
(either by direct computation from the RHS of \eqref{eq: omega}, or since $\Omega'(\eta)$ it is the Stieltjes transform of the semi-circle law, as can be seen from taking the derivative of the integral in \eqref{eq: omega}) so that from \eqref{eq:g} we have
$$ g'(\eta) = -4\frac{p-1}{p} \eta + 2\left(\eta - \sqrt{\eta^2-1}\right).$$
Note that
\begin{equation}\label{eq: eta v ident}
\text{if }\eta=\frac{1}{2}\left(v+\frac{1}{v}\right)\text{ for some }v\in\left(0,1\right)\text{\,then }\sqrt{\eta^{2}-1}=\frac{1}{2}\left( \frac{1}{v}-v \right)
\end{equation}
so that
$$ \frac{g'(\eta)}{E_\infty} = \frac{1}{E_\infty} \left( -2\frac{p-1}{p}\left( v+ \frac{1}{v}\right) + 2v \right) = \frac{1}{E_\infty} \left( \frac{2}{p}v-2\frac{p-1}{p}\frac{1}{v} \right)
\overset{\eqref{eq: Einf def}}{=}
\frac{v}{\sqrt{p\left(p-1\right)}}-\sqrt{\frac{p-1}{p}}\frac{1}{v}.$$
\end{proof}
The identity implies the following about solutions to the second equation of \eqref{eq: crit point eq ann TAP}.
\begin{lem}\label{lem: E deriv at q crit point}
If $\left(E,q\right)$ satisfies $\partial_{q}f\left(E,q\right)=0$
then at $\left(E,q\right)$ we have

\begin{equation}\label{eq: E deriv at q crit point}
\partial_{E}\left(
f\left(E,q\right)+I_{\Ann}\left(E\right)
\right)
= \frac{1}{\sqrt{p}}\left( \frac{w}{\sqrt{p-1}}\frac{1}{1-q}-\frac{\sqrt{p-1}}{w} \right),
\end{equation}
where $w=\sqrt{2}\beta_2(q)$.
\end{lem}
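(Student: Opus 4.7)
The plan is a short direct computation, combining the definition of $f$ with the already-established Lemmas \ref{lem: critequiv} and \ref{lem: IAnn deriv ident}. The first step is to differentiate \eqref{eq: f def} in $E$, which gives simply $\partial_{E}f(E,q) = \beta q^{p/2}$. I would then rewrite this in terms of $w = \sqrt{2}\beta_{2}(q)$ via the identity \eqref{eq: q power from w} (established inside the proof of Lemma \ref{lem: f in terms of w q}), obtaining
\[
\partial_{E}f(E,q) \;=\; \frac{w}{\sqrt{p(p-1)}}\cdot\frac{q}{1-q}.
\]

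Next I would feed in the hypothesis $\partial_{q}f(E,q)=0$. By Lemma \ref{lem: critequiv} (specifically \eqref{eq:critequiv2}) this is equivalent to $\tfrac{E}{E_{\infty}} = \tfrac{1}{2}(w + \tfrac{1}{w})$, so that $E$ lies in the image of the map appearing in Lemma \ref{lem: IAnn deriv ident}. Taking $v=w$ there (the legitimate choice on the $r_{-}$-branch $w\in(0,1]$ of \eqref{eq: critequiv r}, which is the branch relevant for the subsequent analysis of local maxima in $D_\beta$) yields
\[
I_{\Ann}'(E) \;=\; \frac{w}{\sqrt{p(p-1)}} \;-\; \sqrt{\frac{p-1}{p}}\cdot\frac{1}{w}.
\]

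Finally, summing the two contributions and using the elementary identity $\tfrac{q}{1-q}+1 = \tfrac{1}{1-q}$, the $w$-terms collapse:
\[
\partial_{E}\bigl(f(E,q)+I_{\Ann}(E)\bigr) \;=\; \frac{w}{\sqrt{p(p-1)}}\cdot\frac{1}{1-q} \;-\; \sqrt{\frac{p-1}{p}}\cdot\frac{1}{w},
\]
from which \eqref{eq: E deriv at q crit point} follows immediately after factoring out $\tfrac{1}{\sqrt{p}}$. There is no real obstacle; the argument is linear in both ingredients, and the only care required is to apply Lemma \ref{lem: IAnn deriv ident} on the correct branch of $w$ (if one instead lands on the $r_{+}$-branch, one would set $v=1/w$ and obtain a different, though analogous, formula — but that branch does not correspond to relevant TAP solutions and plays no further role).
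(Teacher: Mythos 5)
Your proof is correct and follows essentially the same computation as the paper: differentiate $f$ in $E$ to get $\beta q^{p/2}$, apply Lemma \ref{lem: IAnn deriv ident} after invoking $\partial_q f=0$ via \eqref{eq:critequiv2}, rewrite $\beta q^{p/2}$ via \eqref{eq: q power from w}, and collapse. Your parenthetical about the choice of branch is in fact a genuine subtlety the paper elides: Lemma \ref{lem: IAnn deriv ident} requires $v\in(0,1)$, so taking $v=w$ presupposes $w=\sqrt{2}\beta_2(q)\le 1$, i.e.\ the $r_-$-branch of \eqref{eq: critequiv r}; on the $r_+$-branch one must take $v=1/w$, and the resulting expression for $\partial_E(f+I_{\Ann})$ differs from \eqref{eq: E deriv at q crit point} by $(\tfrac{1}{w}-w)\sqrt{p/(p-1)}\ne 0$. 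The identity as stated is therefore valid only for $r_-$-critical points (which are the ones in $D_\beta$ and the ones at which the lemma is actually applied downstream, e.g.\ via $q_E$ and \eqref{eq: useful ident}), so your resolution is the right one.
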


\begin{proof}
By the definition \eqref{eq: f def} of $f$ we have
$$\partial_{E}\left(f\left(E,q\right)+I_{\Ann}\left(E\right)\right)=\beta q^{p/2}+I_{\Ann}'(E)$$ for all $E,q$. Thus by the previous lemma and \eqref{eq: E tilde eq}, if
 $(E,q)$ satisfies $\partial_q f(E,q)=0$ then
$$\partial_{E}\left(f\left(E,q\right)+I_{\Ann}\left(E\right)\right) = \beta q^{p/2}+\frac{w}{\sqrt{p\left(p-1\right)}}-\sqrt{\frac{p-1}{p}}\frac{1}{w},$$
where $w=\sqrt{2}\beta_2(q)$. Rewriting the first term using \eqref{eq: q power from w} and simplifying \eqref{eq: E deriv at q crit point} follows.
\end{proof}

We now solve the critical point equations \eqref{eq: crit point eq ann TAP}.
\begin{lem}\label{lem: crit eq sol}
If $\beta > \beta_d$ the critical point equations \eqref{eq: crit point eq ann TAP} have two solutions $(\tilde{E},\tilde{q})$ whereby $\tilde{q}$ is any of the two solutions of
\beq \label{eq: qtildedef}(1-q)q^{p-2}=\frac{1}{p\beta^2},
\eeq
or of the equivalent equation
\beq \label{eq: qtildedef equiv}
2 \beta_2(q)^2 = (p-1)(1-q),
\eeq
and
\beq \label{eq: Etildedef}\tilde{E} = \frac{E_\infty}{2}\left(\frac{1}{\sqrt{2}\beta_2(\tilde{q})} +\sqrt{2} \beta_2(\tilde{q})\right).
\eeq
Exactly one of the solutions lies in $[E_\infty,\infty)\times [ q_{\min},1)$, and comes from the unique solution to \eqref{eq: qtildedef} in $[ \frac{p-2}{p-1}, 1]$. Both $\tilde{q}$ and $\tilde{E}$ are strictly increasing as functions of $\beta$.

If $\beta < \beta_d$ then at any $(\tilde{E},\tilde{q}) \in [E_\infty,\infty)\times[0,1]$ such that \eqref{eq: Etildedef} holds satsifies
\begin{equation}\label{eq: negative}
    \frac{d}{dE} \left( f(E,q) + I_{\Ann}(E) \right) |_{(E,q)=(\tilde{E},\tilde{q})}< 0,
\end{equation}
and so the critical point equations \eqref{eq: crit point eq ann TAP} have no solution in $[E_\infty,\infty)\times[0,1].$
\end{lem}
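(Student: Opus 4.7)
The plan is to reduce the system \eqref{eq: crit point eq ann TAP} to a single equation for $\tilde q$ by solving the $q$-equation first via Lemma \ref{lem: critequiv} and then substituting into the $E$-equation via Lemma \ref{lem: E deriv at q crit point}. First I note that the second equation of \eqref{eq: crit point eq ann TAP} is just $\partial_q f(E,q)=0$, which by Lemma \ref{lem: critequiv} is equivalent to \eqref{eq: Etildedef}. Then, under that constraint, Lemma \ref{lem: E deriv at q crit point} gives
\[
\partial_{E}\bigl(f+I_{\Ann}\bigr)\Big|_{(\tilde E,\tilde q)} = \frac{1}{\sqrt{p}}\left(\frac{w}{\sqrt{p-1}\,(1-\tilde q)}-\frac{\sqrt{p-1}}{w}\right), \qquad w=\sqrt 2\beta_2(\tilde q),
\]
which vanishes iff $w^{2}=(p-1)(1-\tilde q)$, i.e.\ \eqref{eq: qtildedef equiv}. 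Substituting the explicit form \eqref{eq: beta2 concrete form} of $\beta_2$ then converts this into \eqref{eq: qtildedef}.

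Next I would analyse the scalar equation $(1-q)q^{p-2}=\frac{1}{p\beta^2}$. The function $q\mapsto (1-q)q^{p-2}$ is non-negative on $[0,1]$, vanishing at the endpoints, and attains its unique maximum at $q=\frac{p-2}{p-1}$ with maximal value $\frac{(p-2)^{p-2}}{(p-1)^{p-1}}=\frac{1}{p\beta_d^2}$ (recall \eqref{eq: beta d def}). Hence for $\beta>\beta_d$ the equation has exactly two solutions, one in $\bigl(0,\tfrac{p-2}{p-1}\bigr)$ and one in $\bigl(\tfrac{p-2}{p-1},1\bigr)$; for $\beta<\beta_d$ it has no solutions. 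To pick out which solution lies in $[E_\infty,\infty)\times[q_{\min},1)$, observe that at the smaller root we have $(p-1)(1-\tilde q)>1$, so $\sqrt 2\beta_2(\tilde q)>1$, whereas on $[q_{\min},1)$ Lemma \ref{lem:ourplefkaprofs} and \eqref{eq: Dbeta is interval in dyn low temp} force $\sqrt 2\beta_2(\tilde q)\le 1$; combined with \eqref{eq: qmin at betad} and \eqref{eq: qmin inc} (which give $q_{\min}<\tfrac{p-2}{p-1}$ for $\beta>\beta_d$) this shows that only the larger root qualifies, and it automatically satisfies $\tilde E\ge E_\infty$.

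For monotonicity in $\beta$: on $\bigl(\tfrac{p-2}{p-1},1\bigr)$ the function $q\mapsto(1-q)q^{p-2}$ is strictly decreasing, so as $\beta\uparrow$ the right-hand side $\tfrac{1}{p\beta^2}$ decreases and the larger root $\tilde q$ strictly increases. Then $w=\sqrt{(p-1)(1-\tilde q)}$ is strictly decreasing in $\beta$ with $w\le 1$, and since $w\mapsto\tfrac12(1/w+w)$ is strictly decreasing on $(0,1]$, the expression $\tilde E = \tfrac{E_\infty}{2}(1/w+w)$ is strictly increasing in $\beta$.

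Finally, for $\beta<\beta_d$ and any $(\tilde E,\tilde q)\in[E_\infty,\infty)\times[0,1]$ satisfying \eqref{eq: Etildedef}, Lemma \ref{lem: E deriv at q crit point} applies and gives \eqref{eq: negative} iff $w^2<(p-1)(1-\tilde q)$, i.e.\ iff $(1-\tilde q)\tilde q^{p-2}<\tfrac{1}{p\beta^2}$; but since $(1-q)q^{p-2}\le\tfrac{1}{p\beta_d^2}<\tfrac{1}{p\beta^2}$ for all $q\in[0,1]$, the inequality is strict everywhere, proving \eqref{eq: negative} and the non-existence of solutions of \eqref{eq: crit point eq ann TAP} in $[E_\infty,\infty)\times[0,1]$. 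The main technical obstacle is the verification in step 5 that the smaller root is excluded, since one has to combine the sharp inequality $w>1$ at the smaller root with the characterisation of $q_{\min}$; but this is essentially bookkeeping once the key algebraic identity $w^2=(p-1)(1-q)$ is in hand.
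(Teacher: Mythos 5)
Your overall structure mirrors the paper's proof: reduce the system to the scalar equation $(1-q)q^{p-2}=\tfrac{1}{p\beta^2}$ via Lemmas \ref{lem: critequiv} and \ref{lem: E deriv at q crit point}, count roots as $\beta$ crosses $\beta_d$, and read off monotonicity from the shape of $q\mapsto(1-q)q^{p-2}$. The derivation of the critical-point equations, the root count, the monotonicity of $\tilde q$ and $\tilde E$, the observation $\tilde E\ge E_\infty$, and the sign analysis for $\beta<\beta_d$ are all correct.

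However, the step identifying which root lies in $[q_{\min},1)$ contains an error. You assert that $q_{\min}<\frac{p-2}{p-1}$ for $\beta>\beta_d$, citing \eqref{eq: qmin at betad} and \eqref{eq: qmin inc}. But $q_{\min}$ is in fact strictly \emph{increasing} in $\beta$ on $[\tilde\beta_c,\infty)$: it is the unique $q\in[\tfrac{p-2}{p},1)$ with $(1-q)q^{(p-2)/2}=\tfrac{1}{\beta\sqrt{p(p-1)}}$, and the left side is decreasing on that interval while the right side decreases in $\beta$. Consistently, $q_{\min}=\tfrac{p-2}{p}$ at $\tilde\beta_c$ and $q_{\min}=\tfrac{p-2}{p-1}$ at $\beta_d>\tilde\beta_c$. (The displayed text of \eqref{eq: qmin inc} reads ``decreasing'', but that is a typo; note the label and the paper's remark immediately after the lemma, which states $q_{\min}>\tfrac{p-2}{p-1}$ for $\beta>\beta_d$.) So $q_{\min}>\tfrac{p-2}{p-1}$, and the inclusion of the larger root in $[q_{\min},1)$ is not automatic from its being greater than $\tfrac{p-2}{p-1}$.

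The fix is the mirror image of the argument you already use to exclude the smaller root: at the larger root $w^2=(p-1)(1-\tilde q)<1$, so $\sqrt{2}\beta_2(\tilde q)<1=\sqrt{2}\beta_2(q_{\min})$, and since $\beta_2$ is strictly decreasing on $[\tfrac{p-2}{p},1)$ (cf.\ \eqref{eq: beta2 shape}) with both $\tilde q$ and $q_{\min}$ in $(\tfrac{p-2}{p},1)$, it follows that $\tilde q>q_{\min}$. With this replacement your proof is complete. For comparison, the paper handles this point by checking the inequality $\sqrt{p(p-1)}(1-q)q^{(p-2)/2}\le p\beta^2(1-q)q^{p-2}$ for $q\ge\tfrac{p-2}{p-1}$, $\beta>\beta_d$ and then unwinding definitions; your sign-of-$w$ argument is cleaner and more symmetric with the exclusion of the smaller root.
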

\begin{rem}
Recall from \eqref{eq: qmin at betad} that $q_{\min} > \frac{p-2}{p-1}$ when $\beta > \beta_d$, so then $[q_{\min},1) \subset [ \frac{p-2}{p-1}, 1]$.
\end{rem}
\begin{proof}
By \eqref{eq: E tilde eq} and the previous lemma a $(\tilde{E},\tilde{q})$ which satisfies \eqref{eq: Etildedef} also satisfies
$$
\partial_{E}\left(
f\left(E,q\right)+I_{\Ann}\left(E\right)
\right) |_{(E,q)=(\tilde{E},\tilde{q})}
= \frac{(p-1)(1-\tilde{q})}{w(1-\tilde{q})\sqrt{p(p-1)}} \left( \frac{w^2}{(p-1)(1-\tilde{q})}-1 \right).
$$
Now by \eqref{eq: beta2 concrete form} we have
$$ \frac{w^2}{(p-1)(1-q)} = p\beta^2 (1-q) q^{\frac{p-2}{2}},$$
and
\begin{equation}\label{eq: max over q}
q\to p \beta^{2}\left(1-q\right)q^{p-2}\text{\,is maximized at }\frac{p-2}{p-1}\text{ where it takes the value }\left(\frac{\beta}{\beta_{d}}\right)^{2},
\end{equation}
(recall \eqref{eq: beta d def}) and strictly decreasing thereafter. This implies \eqref{eq: negative}. It also implies that indeed \eqref{eq: qtildedef}, \eqref{eq: qtildedef equiv} has two solutions if $\beta>\beta_d$, one in $(0,\frac{p-2}{p-1})$ and one in $(\frac{p-2}{p-1},1)$, that each of these correspond to a solution of \eqref{eq: crit point eq ann TAP}, and that there are no other solutions of \eqref{eq: crit point eq ann TAP}. Since $p\beta^2(1-q)q^{p-2}$ is increasing in $\beta$ this also shows that $\tilde{q}$ is increasing in $\beta$, and since $\sqrt{2} \beta_2(q)$ is increasing in $\beta$ (recall \eqref{eq: beta2 concrete form}) we get from \eqref{eq: Etildedef} that $\tilde{E}$ is also increasing in $\beta$.
It only remains to show the larger solution to \eqref{eq: qtildedef}, \eqref{eq: qtildedef equiv} in fact lies in $[q_{\min},1)$. It can be checked that $\sqrt{p(p-1)} (1-q) q^{\frac{p-2}{2}} \le p\beta^2(1-q)q^{p-2}$ for $q\ge\frac{p-2}{p-1},\beta > \beta_d$, and from this \eqref{eq: beta2 concrete form}, \eqref{eq: qmin def} and \eqref{eq: qtildedef} that claim follows.
\end{proof}
The following identity will be useful to compute $f(E,q)+I(E)$.

\begin{lem}
If $\eta=\frac{1}{2}\left(v+\frac{1}{v}\right)$ for some $v\in\left(0,1\right)$ then
\begin{equation}\label{eq: Iann ident}
I_{\Ann}\left(E_{\infty}\eta\right) = g\left(\eta\right)
= \frac{1}{2}\left(v+\frac{1}{v}\right)\left(v-\frac{p-1}{p}\left(v+\frac{1}{v}\right)\right)-\log\frac{v}{\sqrt{p-1}}.
\end{equation}
\end{lem}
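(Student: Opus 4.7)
The plan is to prove the identity by direct substitution, exploiting the rational parameterization $\eta=\tfrac12(v+1/v)$ to simplify every occurrence of $\sqrt{\eta^2-1}$ and of $\eta+\sqrt{\eta^2-1}$ appearing in the definitions of $\Omega$ and $g$. First, I would note that the first equality $I_{\Ann}(E_\infty\eta)=g(\eta)$ is immediate from the definition \eqref{eq: IAnn def}, so only the second equality requires work.

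For the second equality, recall from \eqref{eq: eta v ident} that the parameterization gives $\sqrt{\eta^2-1}=\tfrac12(1/v-v)$ (using $v\in(0,1)$ to fix the sign). Consequently
\begin{equation*}
\eta+\sqrt{\eta^2-1}=\tfrac12(v+1/v)+\tfrac12(1/v-v)=\tfrac{1}{v},\qquad
\eta\sqrt{\eta^2-1}=\tfrac14(v+1/v)(1/v-v)=\tfrac14(1/v^{2}-v^{2}),
\end{equation*}
while $\eta^{2}=\tfrac14(v+1/v)^{2}$. Plugging these into \eqref{eq: omega} a short computation collapses the $1/v^{2}$ terms and yields
\begin{equation*}
\Omega(\eta)=\tfrac14(v+1/v)^{2}-\tfrac12-\tfrac14(1/v^{2}-v^{2})+\log(1/v)=\tfrac{v^{2}}{2}-\log v.
\end{equation*}

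Substituting this into the formula \eqref{eq:g} for $g$ (the case $\eta\ge1$ applies since $\eta=\tfrac12(v+1/v)\ge1$) gives
\begin{equation*}
g(\eta)=\tfrac12+\tfrac12\log(p-1)-\tfrac{p-1}{2p}(v+1/v)^{2}+\tfrac{v^{2}}{2}-\log v.
\end{equation*}
Finally I would observe that expanding the right-hand side of \eqref{eq: Iann ident} produces exactly the same expression: $\tfrac12(v+1/v)\cdot v=\tfrac{v^{2}+1}{2}=\tfrac{v^{2}}{2}+\tfrac12$, and $-\log\tfrac{v}{\sqrt{p-1}}=-\log v+\tfrac12\log(p-1)$, while the $-\tfrac{p-1}{2p}(v+1/v)^{2}$ term matches directly. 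This completes the identification.

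There is no real obstacle here; the only thing to keep track of is the sign choice $\sqrt{\eta^{2}-1}=\tfrac12(1/v-v)$ (positive because $v<1$), which is already justified by \eqref{eq: eta v ident}. Everything else is algebraic bookkeeping.
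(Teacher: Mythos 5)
Your proof is correct and takes essentially the same approach as the paper: substitute the parameterization into \eqref{eq: omega} using $\sqrt{\eta^2-1}=\tfrac12(1/v-v)$ to obtain $\Omega(\eta)=\tfrac{v^2}{2}-\log v$, then plug into \eqref{eq:g} and match terms. Your version simply spells out the algebra that the paper compresses.
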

\begin{proof}
We have
\[
\Omega\left(\eta\right)\overset{\eqref{eq: omega},\eqref{eq: eta v ident}}{=}\frac{1}{2}v^{2}-\log v,
\]
so that by \eqref{eq:g} and \eqref{eq: eta v ident}
\begin{equation}\label{eq: Iann ident in proof}
I_{\text{Ann}}\left(E_{\infty}\eta\right)  =  g\left(\eta\right)
  =  \frac{1}{2}-\frac{p-1}{p}\frac{1}{2}\left(v+\frac{1}{v}\right)^{2}+\frac{1}{2}v^{2}-\log\frac{v}{\sqrt{p-1}},
\end{equation}
giving \eqref{eq: Iann ident}.
\end{proof}
Next we compute the value of $f(E,q)+I(E)$ when $\beta > \beta_d$ at the unique solution from Lemma \ref{lem: crit eq sol} with $q\ge q_{\min}$.
\begin{lem}\label{lem: val at max}
Assume $\beta>\beta_{d}$. The unique solution $(\tilde{E},\tilde{q})$ of \eqref{eq: crit point eq ann TAP} with $\tilde{q}\ge q_{\min}$ from Lemma \ref{lem: crit eq sol} satisfies
\begin{equation}\label{eq: val at max}
 f\left(\tilde{E},\tilde{q}\right)+I_{\Ann}\left(\tilde{E}\right) =\frac{\beta^{2}}{2},
\end{equation}
and
\begin{equation}\label{eq: entropy at max}
I_{\Ann}(\tilde{E})=-\frac{\tilde{q}}{2}-\frac{1}{2p}\frac{\tilde{q}^{2}}{1-\tilde{q}}-\frac{1}{2}\log(1-\tilde{q}).
\end{equation}
\end{lem}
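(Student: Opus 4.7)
The plan is to evaluate both $f(\tilde E, \tilde q)$ and $I_{\Ann}(\tilde E)$ in closed form using the two characterizations of the critical point $(\tilde E, \tilde q)$ provided by Lemma \ref{lem: crit eq sol}, namely
\beq
w^{2} = (p-1)(1-\tilde q) \quad \text{and} \quad \frac{\tilde E}{E_{\infty}} = \frac{1}{2}\Bigl(w + \frac{1}{w}\Bigr),
\eeq
with $w = \sqrt{2}\beta_{2}(\tilde q)$, and then to show that the result simplifies to $\beta^{2}/2$. The idea is that both Lemma \ref{lem: f in terms of w q} and the identity \eqref{eq: Iann ident} express their respective quantities in terms of $w$, so substituting the critical point relations should reduce the whole computation to an algebraic identity.

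First I would apply Lemma \ref{lem: f in terms of w q} at $(\tilde E, \tilde q)$. The bilinear-in-$w$ term $\frac{\tilde E}{E_{\infty}}\frac{2}{p}\frac{\tilde q}{1-\tilde q}w$ becomes $\frac{(1+w^{2})\tilde q}{p(1-\tilde q)}$ after using the second critical point equation, and the quadratic term becomes $\frac{(p(1-\tilde q)+\tilde q)\tilde q}{2p(1-\tilde q)}$ after substituting $w^{2} = (p-1)(1-\tilde q)$. Combining these two over the common denominator $2p(1-\tilde q)$ I expect the bracket to collapse to $\frac{1+(p-1)(1-\tilde q)}{2}$, yielding
\beq
f(\tilde E,\tilde q) = \frac{\beta^{2}}{2} + \frac{1}{2}\log(1-\tilde q) + \frac{\tilde q\bigl(1+(p-1)(1-\tilde q)\bigr)}{2p(1-\tilde q)}.
\eeq

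Next I would compute $I_{\Ann}(\tilde E)$ from \eqref{eq: Iann ident} with $v=w$, expanding the quadratic expression and using $w^{2} = (p-1)(1-\tilde q)$ to rewrite each of the three nontrivial terms: $\frac{w^{2}}{2p} = \frac{(p-1)(1-\tilde q)}{2p}$, $\frac{p-1}{2pw^{2}} = \frac{1}{2p(1-\tilde q)}$, and $-\log w + \frac{1}{2}\log(p-1) = -\frac{1}{2}\log(1-\tilde q)$. This gives
\beq
I_{\Ann}(\tilde E) = \frac{(p-1)(1-\tilde q)}{2p} - \frac{p-2}{2p} - \frac{1}{2p(1-\tilde q)} - \frac{1}{2}\log(1-\tilde q).
\eeq
Adding the two expressions, the $\log$ terms cancel; clearing denominators $2p(1-\tilde q)$ in the remaining fraction leaves a numerator
\beq
\tilde q + (p-1)\tilde q(1-\tilde q) + (p-1)(1-\tilde q)^{2} - (p-2)(1-\tilde q) - 1,
\eeq
which factoring out $(p-1)(1-\tilde q)[\tilde q + (1-\tilde q)] = (p-1)(1-\tilde q)$ reduces to $\tilde q + (1-\tilde q) - 1 = 0$, proving \eqref{eq: val at max}.

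For \eqref{eq: entropy at max}, I would take the expression for $I_{\Ann}(\tilde E)$ derived above and verify directly that it equals $-\tilde q/2 - \tilde q^{2}/(2p(1-\tilde q)) - \tfrac{1}{2}\log(1-\tilde q)$. After multiplying both expressions (without the $\log$ term) by $2p(1-\tilde q)$ and substituting $u = 1-\tilde q$, both sides become $(p-1)u^{2} - (p-2)u - 1$, which is a routine check. The only place where real care is required is in the first step, keeping track of signs and making sure the simplifications in $f(\tilde E, \tilde q)$ are consistent — this is the main bookkeeping obstacle — but no deeper idea is needed beyond systematically applying Lemma \ref{lem: f in terms of w q}, \eqref{eq: Iann ident}, and the critical point relations from Lemma \ref{lem: crit eq sol}.
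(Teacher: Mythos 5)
Your proposal is correct and follows essentially the same route as the paper: evaluate $f(\tilde E,\tilde q)$ via Lemma \ref{lem: f in terms of w q} and $I_{\Ann}(\tilde E)$ via \eqref{eq: Iann ident} at $v=w$, substitute the critical-point identities $w^2=(p-1)(1-\tilde q)$ and $\tilde E/E_\infty=\frac{1}{2}(w+1/w)$ from Lemma \ref{lem: crit eq sol}, and verify the algebraic cancellation. The paper keeps the simplification in the variable $w$ somewhat longer while you convert to $\tilde q$ early, and your intermediate remark that the quadratic term of Lemma \ref{lem: f in terms of w q} ``becomes $\frac{(p(1-\tilde q)+\tilde q)\tilde q}{2p(1-\tilde q)}$'' omits its leading minus sign, but your displayed formula for $f(\tilde E,\tilde q)$ is nevertheless correct.
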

\begin{proof}
We evaluate \eqref{eq: f in terms of w q} and \eqref{eq: Iann ident} at $(\tilde{E},\tilde{q})$.
By \eqref{eq: qtildedef equiv} we have
\begin{equation}\label{eq: q from w}
\frac{1}{1-\tilde{q}}=\frac{p-1}{w^{2}}
\end{equation}
as well as
\begin{equation}\label{eq: pqw ident}
p\left(1-\tilde{q}\right)+\tilde{q}=\left(p-1\right)\left(1-\tilde{q}\right)+1=w^{2}+1.
\end{equation}
With these identities the last term of (\ref{eq: f in terms of w q}) becomes $\frac{1}{2}\frac{p-1}{p}\left(w+\frac{1}{w}\right)\frac{\tilde{q}}{w}$.
Using also $\frac{\tilde{E}}{E_{\infty}}=\frac{1}{2}\left(w+\frac{1}{w}\right)$
the third term of (\ref{eq: f in terms of w q}) becomes $\frac{p-1}{p}\left(w+\frac{1}{w}\right)\frac{\tilde{q}}{w}$,
so we get that 
\[
f\left(\tilde{E},\tilde{q}\right)=\frac{1}{2}\beta^{2}+\frac{1}{2}\log\left(1-\tilde{q}\right)+\frac{1}{2}\frac{p-1}{p}\left(w+\frac{1}{w}\right)\frac{\tilde{q}}{w}.
\]
Also setting $v=w$ the last term of \eqref{eq: Iann ident} becomes $-\frac{1}{2}\log(1-\tilde{q})$ and the last factor of the first term becomes $-\frac{wq}{p(1-q)}$. Thus
\begin{equation}\label{eq: Iann ident 2}
I_{\Ann}\left(\tilde{E}\right)= -\frac{1}{2}\left(w+\frac{1}{w}\right)\frac{w\tilde{q}}{p(1-\tilde{q})}-\frac{1}{2}\log\left(1-\tilde{q}\right).
\end{equation}
Adding these we get
\[
f\left(\tilde{E},\tilde{q}\right)+I_{\Ann}\left(\tilde{E}\right)=\frac{\beta^{2}}{2}+\frac{1}{2}\left(w+\frac{1}{w}\right)\frac{\tilde{q}}{p}\left(\frac{p-1}{w}-\frac{w}{p(1-\tilde{q})}\right),
\]
and the last factor of the last term on the RHS is
\[
\frac{p-1}{w}-\frac{w}{p(1-\tilde{q})}
=
\frac{1}{w}\left(p-1-\frac{w^2}{p(1-\tilde{q})}\right) \overset{\eqref{eq: q from w}}{=} 0,
\]
proving \eqref{eq: val at max}.
Also from \eqref{eq: Iann ident 2} we get
\[
I_{\Ann}\left(\tilde{E}\right)=-\frac{1}{2}\left(w^2+1\right)\frac{\tilde{q}}{p(1-\tilde{q})}-\frac{1}{2}\log(1-\tilde{q}),
\]
which with \eqref{eq: pqw ident} implies \eqref{eq: entropy at max}.
 \end{proof}
Next we confirm that when $\beta>\beta_d$ then $(\tilde{E},\tilde{q})$ is in fact the unique point that achieves the supremum in \eqref{eq: annealed}. Note that 
\begin{equation}\label{eq: Emin is Einf over betad}
    E_{\min}=E_\infty\text{ when }\beta>\beta_d,
\end{equation}
(recall \eqref{eq: Emin def} and \eqref{eq: order of betas}).
\begin{lem}[Annealed TAP variational formula equals annealed free energy for $\beta>\beta_d$]\label{lem:solveopti}
If $\beta>\beta_{d}$ then
\begin{equation}\label{eq: sup annealed}
\sup_{E\in [E_{\infty},\infty),q\in[q_{\min},1)}\left\{ f\left(E,q\right)+I_{\Ann}\left(E\right)\right\} =\frac{\beta^{2}}{2},
\end{equation}
and the supremum is achieved at a unique point $(\tilde{E},\tilde{q})$. This point is the unique solution from Lemma \ref{lem: crit eq sol} that lies in $(E_{\infty},\infty)\times(q_{\min},1)$.
\end{lem}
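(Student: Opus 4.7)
The plan is to reduce the bivariate supremum to a one-variable optimization in $E$ and then exploit the unique critical point identified in Lemma \ref{lem: crit eq sol}, together with a sign check of the derivative at the left endpoint. Since $\beta > \beta_d > \tilde{\beta}_c$, we have $E_{\min} = E_\infty$ by \eqref{eq: Emin def}, so the relevant $E$-domain is $[E_\infty,\infty)$. For each such $E$, properties \eqref{eq: qE glob max} and \eqref{eq: Dbeta is interval in dyn low temp} guarantee that $\sup_{q\in[q_{\min},1)} f(E,q)$ is uniquely attained at $q=q_E(E)$, reducing the problem to maximizing
\[
\phi(E) := f(E,q_E(E)) + I_{\Ann}(E), \qquad E\in [E_\infty,\infty).
\]

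Combining \eqref{eq: useful ident} with Lemma \ref{lem: E deriv at q crit point}, and rewriting $w=\sqrt{2}\beta_2(q_E(E))$ via \eqref{eq: beta2 concrete form}, I expect to factor $\phi'(E)$ so that its sign matches that of $\beta^2 p(1-q)q^{p-2}-1$ evaluated at $q=q_E(E)$. By \eqref{eq: max over q}, the function $q\mapsto \beta^2 p(1-q)q^{p-2}$ is strictly decreasing on $[\tfrac{p-2}{p-1},1)$ from $(\beta/\beta_d)^2>1$ down to $0$. Together with the monotonicity \eqref{eq: qE increasing} of $q_E$, this forces $\phi'$ to have at most one zero in $(E_\infty,\infty)$, namely the $\tilde E$ of Lemma \ref{lem: crit eq sol} determined by \eqref{eq: qtildedef}. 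Moreover $q_E(E)\to 1$ as $E\to\infty$ makes $\beta^2 p(1-q_E(E))q_E(E)^{p-2}\to 0$, so $\phi'<0$ near $\infty$.

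The crucial remaining step is to show $\phi'(E_\infty)>0$, which I would reduce to the inequality $q_{\min}>\tfrac{p-2}{p-1}$ whenever $\beta>\beta_d$. Indeed, $q_E(E_\infty)=q_{\min}$ and $\sqrt{2}\beta_2(q_{\min})=1$ give $\beta^2 p (1-q_{\min})q_{\min}^{p-2} = 1/[(p-1)(1-q_{\min})]$, which exceeds $1$ precisely when $q_{\min}>\tfrac{p-2}{p-1}$. This last inequality I would extract by combining \eqref{eq: qmin at betad} with the observation that $\sqrt{2}\beta_2(\tfrac{p-2}{p-1})$ strictly exceeds $1$ for $\beta>\beta_d$, forcing the solution $q_{\min}$ of $\sqrt{2}\beta_2(q)=1$ on the decreasing branch of $\beta_2$ to lie strictly to the right of $\tfrac{p-2}{p-1}$. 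Once the two sign facts at the boundary and at infinity are in hand, continuity and uniqueness of the zero of $\phi'$ force $\phi$ to be strictly increasing on $[E_\infty,\tilde E]$ and strictly decreasing on $[\tilde E,\infty)$, making $\tilde E$ the unique maximizer. Pulled back to the bivariate problem the unique maximizer is $(\tilde E, q_E(\tilde E))=(\tilde E,\tilde q)$, and Lemma \ref{lem: val at max} yields the value $\phi(\tilde E) = \beta^2/2$. I expect the main obstacle to be precisely the delicate positioning $q_{\min}>\tfrac{p-2}{p-1}$, which is where the threshold $\beta_d$ genuinely enters.
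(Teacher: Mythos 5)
Your proof is correct and uses the same essential ingredients as the paper (the reduction via $q_E$, Lemmas \ref{lem: crit eq sol}, \ref{lem: E deriv at q crit point}, \ref{lem: val at max}, and the derivative sign at $E=E_\infty$), though you reorganize it as an explicit one-variable sign analysis of $\phi'(E)$ showing $\phi$ is strictly increasing then decreasing, whereas the paper argues qualitatively that the supremum is interior and then invokes uniqueness of the interior critical point. Your careful derivation of $q_{\min}>\tfrac{p-2}{p-1}$ from the fact that $\sqrt{2}\beta_2(\tfrac{p-2}{p-1})>1$ when $\beta>\beta_d$, combined with the decreasing branch of $\beta_2$, is exactly the point the paper handles more tersely (citing \eqref{eq: qmin at betad}), and it correctly identifies where the threshold $\beta_d$ enters.
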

\begin{proof}
We must check that the supremum is achieved in $(E_{\infty},\infty)\times(q_{\min},1)$. The claims then follows by Lemmas \ref{lem: crit eq sol} and \ref{lem: val at max}.

Note that $f(E,q) + I_{\Ann}(E) \to -\infty$ as $q\to1$ or $E\to \infty$ (since $I_{\Ann}(E)$ goes to $-\infty$ quadratically and $f(E,q)$ to $\infty$ only linearly as $E\to \infty$). This implies that the supremum of \eqref{eq: sup annealed} must be achieved at a point in $[E_{\infty}, \infty) \times [q_{\min},1)$, since $f(E,q) + I_{\Ann}(E)$ is continuous on this set.

Considering the border $E=E_\infty$ note that using \eqref{eq: IAnn deriv ident} with $v=1$ it holds for all $q\ge q_{\min}\ge (p-2)/(p-1)$ (recall \eqref{eq: qmin at betad}) that
\beq \label{eq: deriv at Einf}
\frac{\partial}{\partial E}\left(f(E,q)+I_{\Ann}(E)\right)|_{E=E_{\infty}}  =  \beta q^{\frac{p}{2}}-\frac{p-2}{\sqrt{p(p-1)}} >  \beta_{d}\left(\frac{p-2}{p-1}\right)^{\frac{p}{2}}-\frac{p-2}{\sqrt{p(p-1)}} \overset{\eqref{eq: beta d def}}{=}0,
\eeq
showing that
the supremum of \eqref{eq: sup annealed} is achieved at a point in $(E_\infty,\infty)\times[q_{\min},1)$.
Finally, by Lemma \ref{lem: E g Emin loc max}, for all $E>E_\infty$ the function $q\to f(E,q)$ has only one critical point in  $(q_{\min},1)$ which is a local maximum, which implies that the supremum is in fact achieved at a point in $(E_\infty,\infty)\times(q_{\min},1)$.
Thus the maximizer must satisfy \eqref{eq: crit point eq ann TAP}, so it is the unique solution $(\tilde{E},\tilde{q})$ from Lemma \ref{lem: crit eq sol}. By Lemma \ref{lem: val at max} the equality follows.
\end{proof}

Next we turn our attention to the value of the RHS of \eqref{eq: annealed} when $\beta < \beta_d$.
\begin{lem}\label{lem: dyn HT maximizer}
For $\beta\in [0, \beta_d)$ it holds that
$f(E,\qE(E)) + I_{\Ann}(E)$ is decreasing in $E$ on $[E_{\min},\infty)$.
\end{lem}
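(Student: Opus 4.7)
The plan is to differentiate the function $E \mapsto f(E,q_E(E)) + I_{\Ann}(E)$ and show that its derivative is strictly negative throughout $[E_{\min}, \infty)$ when $\beta < \beta_d$. The envelope-type identity \eqref{eq: useful ident} already records that for $E \ge E_{\min}$,
$$\frac{d}{dE}\bigl(f(E,q_E(E)) + I_{\Ann}(E)\bigr) = \partial_E f(E,q_E(E)) + I_{\Ann}'(E),$$
so the $\partial_q f \cdot q_E'$ term has been killed by the first-order condition. Since $(E, q_E(E))$ satisfies $\partial_q f(E,q) = 0$ by the very construction of $q_E$ (see \eqref{eq: qE glob max}), Lemma \ref{lem: E deriv at q crit point} applies directly and yields
$$\frac{d}{dE}\bigl(f(E,q_E(E)) + I_{\Ann}(E)\bigr) = \frac{1}{\sqrt{p}}\left( \frac{w}{\sqrt{p-1}(1 - q_E(E))} - \frac{\sqrt{p-1}}{w}\right),$$
with $w = \sqrt{2}\beta_2(q_E(E))$.

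The next step is to observe that the bracket on the right is negative exactly when $w^2 < (p-1)(1 - q_E(E))$, and to rewrite this inequality using the explicit form \eqref{eq: beta2 concrete form} of $\beta_2$: a direct substitution gives
$$\frac{w^2}{(p-1)(1 - q_E(E))} = p\beta^2 (1 - q_E(E))\, q_E(E)^{p-2}.$$
Here I would invoke \eqref{eq: max over q}, which states that $q \mapsto p\beta^2(1-q)q^{p-2}$ attains its global maximum $(\beta/\beta_d)^2$ at $q = (p-2)/(p-1)$. For $\beta < \beta_d$ this maximum is strictly less than $1$, hence the bracket is strictly negative for every value of $q_E(E)$, and thus the derivative is strictly negative on all of $[E_{\min}, \infty)$. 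Strict monotonicity then follows by the mean value theorem.

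There is no real obstacle here, as the lemma follows from stitching together pieces already proved in Sections \ref{sec:det} and at the top of Section \ref{optimize}; the one delicate point, which I would verify at the outset, is that the application of \eqref{eq: useful ident} and Lemma \ref{lem: E deriv at q crit point} remains legitimate at the left endpoint $E = E_{\min}$. In the regime $\beta \le \tilde{\beta}_c$ the point $q_E(E_{\min}) = (p-2)/p$ is only a saddle of $q \mapsto f(E_{\min}, q)$ rather than a strict local maximum, but the only facts used about $q_E(E)$ are the first-order condition $\partial_q f(E, q_E(E)) = 0$ (which holds regardless of the index of the critical point) and the closed-form expression from Lemma \ref{lem: E deriv at q crit point}, so the computation goes through unchanged.
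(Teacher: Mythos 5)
Your proof is correct and follows essentially the same route as the paper's: the paper's proof is a one-line application of \eqref{eq: useful ident} followed by \eqref{eq: negative}, and your derivation simply unpacks \eqref{eq: negative} (which is established inside the proof of Lemma \ref{lem: crit eq sol}) by re-running the Lemma \ref{lem: E deriv at q crit point} computation and the bound \eqref{eq: max over q}. The final remark about $E=E_{\min}$ is a fair point to flag, though the cleanest resolution is simply continuity of $f(E,q_E(E))+I_{\Ann}(E)$ at $E_{\min}$ together with the negative derivative on the open interval; asserting that the envelope identity "goes through unchanged" at $E_{\min}$ glosses over the fact that $q_E$ may fail to be differentiable there (e.g. when $\beta<\tilde\beta_c$, where $E_q'(q_{\min})=0$).
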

\begin{proof}
For $E\ge E_{\min}$ note that 
\[
\frac{d}{dE}\left\{ f\left(E,q_{E}\left(E\right)\right)+I_{\Ann}\left(E\right)\right\} \overset{\eqref{eq: useful ident}}{=} \partial_{E}f\left(E,q_{E}\left(E\right)\right)+I_{\Ann}'\left(E\right) \overset{\eqref{eq: negative}}{<} 0.
\]
\end{proof}

The next lemma shows that for  $\beta < \beta_d$ (that is, in static and dynamic high temperature) all non-zero \TP\ solution have a TAP energy lower than the TAP energy of $m=0$.

\begin{lem}[Annealed TAP variational formula for $\beta\le \beta_d$]\label{lem:solveopti2}
For $\beta\in [0, \beta_d]$
\beq\label{eq:solvopti2}
f(E_{\min}, q_{\min}) + I_{\Ann}(E_{\min})
\le \frac{\beta^2}{2},
\eeq
with equality only if $\beta=\beta_d$.
\end{lem}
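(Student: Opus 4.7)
The plan is to introduce the defect $\phi(\beta) := f(E_{\min}, q_{\min}) + I_{\Ann}(E_{\min}) - \tfrac{\beta^{2}}{2}$ and show that $\phi(\beta_d) = 0$ together with $\phi$ strictly increasing on $(0, \beta_d]$; these two facts yield $\phi(\beta) < 0$ for $\beta \in (0, \beta_d)$ and equality at $\beta_d$, which is the stated bound.

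For the identity $\phi(\beta_d) = 0$ I apply Lemma \ref{lem: val at max} in the limit $\beta \downarrow \beta_d$. The critical point $(\tilde{E}(\beta), \tilde{q}(\beta))$ of Lemma \ref{lem: crit eq sol} is determined by \eqref{eq: qtildedef} and \eqref{eq: Etildedef}, and by \eqref{eq: max over q} the equation \eqref{eq: qtildedef} acquires its double root at $q = (p-2)/(p-1)$ precisely when $\beta = \beta_d$; consequently $\tilde{q}(\beta) \to (p-2)/(p-1)$ and then $\tilde{E}(\beta) \to E_\infty$ as $\beta \downarrow \beta_d$. Since \eqref{eq: Emin is Einf over betad} and \eqref{eq: qmin at betad} give $(E_{\min}, q_{\min})|_{\beta_d} = (E_\infty, (p-2)/(p-1))$ as well, continuity of $f$ and $I_{\Ann}$ transfers \eqref{eq: val at max} to the desired $\phi(\beta_d) = 0$.

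For the monotonicity I split at $\tilde{\beta}_c$. In the low regime $\beta \in (0, \tilde{\beta}_c]$, $q_{\min} = (p-2)/p$ is constant and the single parameter $w := \sqrt{2}\beta_2(q_{\min}) = \beta/\tilde{\beta}_c$ governs $\beta$, with $E_{\min}/E_\infty = (1/w + w)/2$; plugging into Lemma \ref{lem: f in terms of w q} and the identity \eqref{eq: Iann ident} (with $v = w$) collapses $\phi$ to
\[
\phi = \tfrac{1}{2}\log\bigl(2(p-1)/p\bigr) - \log w + \frac{pw^2}{8(p-1)} - \frac{p-1}{2pw^2},
\]
so that $w^3\, d\phi/dw = \tfrac{pw^4}{4(p-1)} - w^2 + \tfrac{p-1}{p}$ factors as the perfect square $\tfrac{p}{4(p-1)}\bigl(w^2 - 2(p-1)/p\bigr)^2$, strictly positive on $(0, 1]$ since $2(p-1)/p > 1$ for $p \ge 3$. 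In the high regime $\beta \in [\tilde{\beta}_c, \beta_d]$, $E_{\min} = E_\infty$ is constant and $q_{\min}$ is the unique critical point of $q \to f(E_\infty, q)$ in $[(p-2)/p, 1)$ (by \eqref{eq: f deriv}), so $\partial_q f(E_\infty, q_{\min}) = 0$ and the envelope theorem gives $\phi'(\beta) = \partial_\beta f(E_\infty, q_{\min}) - \beta$; combining the explicit $\partial_\beta f(E, q) = q^{p/2} E + \beta\bigl(1 - p(1-q) q^{p-1} - q^p\bigr)$ with the defining relation $\beta(1-q_{\min})q_{\min}^{(p-2)/2} = 1/\sqrt{p(p-1)}$ for $q_{\min}$ reduces this to
\[
\phi'(\beta) = \frac{q_{\min}^{p/2}}{\sqrt{p(p-1)}}\,\Bigl(p - 2 - \frac{q_{\min}}{1 - q_{\min}}\Bigr),
\]
which is strictly positive for $\beta < \beta_d$ because $q_{\min}$ is strictly increasing in $\beta$ on $[\tilde{\beta}_c, \beta_d]$ (the map $q \to \sqrt{2}\beta_2(q)$ being strictly decreasing on $[(p-2)/p, 1)$ by \eqref{eq: beta2 shape} forces $q_{\min}$ to rise with $\beta$) with $q_{\min}|_{\beta_d} = (p-2)/(p-1)$ by \eqref{eq: qmin at betad}.

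The main technical obstacle is the algebraic collapse in the low regime: a priori $w^3\,d\phi/dw$ is only a generic quadratic in $u = w^2$, but its discriminant miraculously vanishes, producing the perfect-square factorisation that powers the nonnegativity. This miracle is not apparent from the individual formulas for $f$ and $I_{\Ann}$ and in effect encodes the tightness of the identity $f + I_{\Ann} \equiv \beta^2/2$ already established at the interior critical point in Lemma \ref{lem: val at max}.
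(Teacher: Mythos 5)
Your proof is correct and, in the high-temperature half of the regime, takes a genuinely different and in fact cleaner route than the paper. The two arguments coincide (essentially verbatim) on $\beta\in(0,\tilde{\beta}_c]$: the paper writes $\phi$ as $\frac{1}{4x}-\frac{x}{4}+\frac{1}{2}\log x$ with $x=2(p-1)/(pw^2)>1$ and invokes the inequality $\frac{1}{4z}-\frac{z}{4}+\frac{1}{2}\log z<0$ for $z>1$; that inequality is itself proved by observing that the $z$-derivative is $-\frac{1}{4}(1/z-1)^2$, which is exactly the perfect square you find in the $w$-variable after the chain rule, so the ``miracle'' you note is just the tightness already visible in the paper's form of the bound. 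On $[\tilde{\beta}_c,\beta_d]$ the approaches diverge: the paper substitutes $r=q_{\min}/(1-q_{\min})$ and argues by a Taylor-type expansion of an explicit function $g(r)$ around $r=p-2$ (value $0$, derivative $0$, second derivative negative), whereas you differentiate $\phi$ in $\beta$ and use the envelope identity $\partial_q f(E_\infty,q_{\min})=0$ to collapse $\phi'(\beta)$ to the single sign-definite factor $p-2-q_{\min}/(1-q_{\min})$. Your route has a concrete advantage here: since $q_{\min}$ is \emph{increasing} in $\beta$ on $[\tilde{\beta}_c,\beta_d]$ (the label of \eqref{eq: qmin inc} reflects this; the word ``decreasing'' in the surrounding text is a slip), one has $q_{\min}\le\frac{p-2}{p-1}$, i.e.\ $r\le p-2$, on the whole interval. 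The paper's closing sentence asserts the relevant range is $r\ge p-2$ and that $g''<0$ there; but $g''(r)=\frac{1}{2(1+r)^2}-\frac{1}{p(p-1)}$ actually changes sign at an interior point of $[\frac{p-2}{2},p-2]$, so on the correct range the concavity argument alone does not close, and one would need an extra convexity/endpoint step. Your monotonicity-in-$\beta$ argument sidesteps that entirely: $\phi'(\beta)>0$ strictly for $\beta<\beta_d$, together with the anchoring $\phi(\beta_d)=0$ obtained by continuity from Lemma~\ref{lem: val at max} as $\beta\downarrow\beta_d$ (valid because $(\tilde E,\tilde q)\to(E_\infty,\frac{p-2}{p-1})=(E_{\min},q_{\min})|_{\beta_d}$), gives the lemma with the stated equality case. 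One small thing you should state explicitly is the continuity of $\phi$ across $\beta=\tilde{\beta}_c$: it holds because both regimes give $E_{\min}=E_\infty$ and $q_{\min}=\frac{p-2}{p}$ there, but the monotonicity conclusion requires gluing the two intervals.
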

\begin{proof}
We evaluate the LHS of \eqref{eq:solvopti2} considering the cases $\beta\in[0,\tilde{\beta}_{c}]$
and $\beta\in[\tilde{\beta}_{c},\beta_{d}]$ separately. In the first case $\beta\in[0,\tilde{\beta}_{c}]$
we have $q_{\min}=\frac{p-2}{p}$ and $\frac{E_{\min}}{E_\infty}=\frac{1}{2}\left(w+\frac{1}{w}\right)$ where $w=\frac{\beta}{\tilde{\beta}_c}\le1$ (recall \eqref{eq: w def}, \eqref{eq: Emin def}, \eqref{eq: qmin def}).
Plugging these value for $q$ into \eqref{eq: f in terms of w q} we
get
\[
f\left(E_{\min},q_{\min}\right)=\frac{1}{2}\beta^{2}+\frac{1}{2}\log\frac{2}{p}+\frac{1}{2}\left(w+\frac{1}{w}\right)w\frac{p-2}{\textbf{}p}-\frac{1}{2}w^{2}\frac{1}{4}\frac{3p-2}{p}\frac{p-2}{p-1}
\]
We also have from \eqref{eq: Iann ident}
\[
I_{\text{Ann}}\left(E_{\min}\right)=\frac{1}{2}\left(w+\frac{1}{w}\right)\left(w-\frac{p-1}{p}\left(w+\frac{1}{w}\right)\right)-\log\frac{w}{\sqrt{p-1}},
\]
so that letting $x=\frac{2\left(p-1\right)}{pw^{2}}$ we get
\[
\begin{array}{l}
f\left(E_{\min},q_{\min}\right)+I_{\text{Ann}}\left(E_{\min}\right)\\
=\frac{\beta^{2}}{2}+\frac{1}{2}\left(w+\frac{1}{w}\right)w\frac{p-2}{p}+\frac{1}{2}\log x-\frac{1}{2}\frac{3p-2}{p}\frac{1}{4}w^{2}\frac{p-2}{p-1}+\frac{1}{2}\left(w+\frac{1}{w}\right)\left(w-\frac{p-1}{p}\left(w+\frac{1}{w}\right)\right)\\
=\frac{\beta^{2}}{2}+\frac{1}{2}\frac{p-1}{p}\left(w+\frac{1}{w}\right)\left(w-\frac{1}{w}\right)+\frac{1}{2}\log x-\frac{1}{2}\frac{3p-2}{p}\frac{1}{4}w^{2}\frac{p-2}{p-1}\\
=\frac{\beta^{2}}{2}+\frac{1}{2}\left(\frac{p-1}{p}-\frac{3p-2}{p}\frac{1}{4}\frac{p-2}{p-1}\right)w^{2}-\frac{x}{4}+\frac{1}{2}\log x\\
=\frac{\beta^{2}}{2}+\frac{1}{4x}-\frac{x}{4}+\frac{1}{2}\log x.
\end{array}
\]
Note that $\frac{1}{4z}-\frac{z}{4}+\frac{1}{2}\log z<0$ for all $z>1$. Furthermore since $w\le1$ we have $x\ge2\frac{p-1}{p}>1$. This proves \eqref{eq:solvopti2} in this case.

Next we consider the case $\beta\in[\tilde{\beta}_{c},\beta_{d}]$.
In this case $E_{\min}=E_{\infty}$ (recall \eqref{eq: w def} and \eqref{eq: Emin def}), and from \eqref{eq: Iann ident}
\begin{equation}\label{eq: Iann at Einf}
I_{\text{Ann}}\left(E_{\infty}\right)=-\frac{p-2}{p}+\frac{1}{2}\log\left(p-1\right).
\end{equation}
Also from \eqref{eq: f in terms of w q}, recalling that $\sqrt{2}\beta(q_{\min})=1$ by \eqref{eq: qmin def}, we get that with $r=\frac{q_{\min}}{1-q_{\min}}$ or equivalently $1-q_{\min}=\frac{1}{1+r}$ that
\[
f\left(E_{\min},q_{\min}\right)=\frac{\beta^{2}}{2}-\frac{1}{2}\log\left(1+r\right)+\frac{2}{p}r-\frac{1}{2}r\frac{1}{p-1}-\frac{1}{2}r^2\frac{1}{p\left(p-1\right)}.
\]
Now \eqref{eq: qmin at betad} implies that $q_{\min}\ge\frac{p-2}{p-1}$ for $\beta\le\beta_d$ with equality only if $\beta=\beta_d$. Since $q_{\min}\ge\frac{p-2}{p-1}$ corresponds to $r\ge p-2$ with equality only if $r=p-2$ the claim of the lemma in the case $\beta \in [\tilde{\beta}_c,\beta_d]$ follows once we have shown the bound
$$
-\frac{1}{2}\log\left(1+r\right)+\frac{2}{p}r-\frac{1}{2}r\frac{1}{p-1}-\frac{1}{2}r^2\frac{1}{p\left(p-1\right)} - \frac{p-2}{p} + \frac{1}{2}\log(p-1) \le 0 \text{ for } r\ge p-2,$$
(cf. \eqref{eq: Iann at Einf}) with equality only if $r=p-2$. It is easy to verify that the LHS is zero when $r=p-2$. Also the derivative of the LHS is zero when $r=p-2$, and the second derivative is negative for $r\ge p-2$. This gives the bound and finishes the proof of the lemma.
\end{proof}

We proceed to derive Theorems \ref{thm:sndthrm}, \ref{thm:dynLTstatHT} and \ref{thm:fstthrm}. 
We do so by reconstructing the maximizer of $f(E,q)+I(E)$ from the maximizer of $f(E,q)+I_{\Ann}(E)$ by considering when the maximizer of the latter hits $E=E_0$.

\begin{proof}[Proof of Theorem \ref{thm:sndthrm}]
If $\beta<\beta_d$ then by \eqref{eq:varprinc} and \eqref{eq: TAP comp}
\beq \fTAP(\beta) =\max\left(\frac{\beta^2}{2}, \sup_{\ETAP \in \R,V\in \mathbb{R},q\in D_\beta\setminus\{0\}} \{ \ETAP+\ITAP(\ETAP,V,q)\}\right),\eeq
where the contribution $\frac{\beta^2}{2}$ is due to $q=0,V=0$ (i.e. $m=0$). By Lemmas \ref{lem:reformu}, \ref{lem: dyn HT maximizer} and \ref{lem:solveopti2} we see that the other quantity in the max is strictly smaller. Furthermore the other quantity is an upper bound for $U_{\max}$, proving \eqref{eq: FE very high temp 2}. This also shows that \eqref{eq: def E TAP star} is well-defined shows all the claims in \eqref{eq: qstar Ustar entropy high temp}, by the second case in \eqref{eq: TAP comp}.\end{proof}

To prove Theorem \ref{thm:dynLTstatHT} we need the following characterization of $\beta_s$.
\begin{lem}
When $\beta>\beta_d$ the energy $\tilde{E}$ from Lemma \ref{lem: crit eq sol} satisfies
\begin{equation}\label{eq: E0 betas equiv}
\begin{array}{c}
\beta<\beta_{s}\iff\tilde{E}<E_{0},\\
\beta=\beta_{s}\iff\tilde{E}=E_{0},\\
\beta>\beta_{s}\iff\tilde{E}>E_{0}
\end{array}
\end{equation}
\end{lem}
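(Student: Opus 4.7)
The plan is to reduce \eqref{eq: E0 betas equiv} to a single computation by exploiting the monotonicity statement in Lemma \ref{lem: crit eq sol}. Since that lemma asserts $\tilde{E}$ is strictly increasing as a function of $\beta$, all three equivalences will follow at once from verifying the single equality $\tilde{E} = E_0$ precisely when $\beta = \beta_s$.

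The first step is to recast $\tilde{E} = E_0$ as a condition on $w := \sqrt{2}\beta_2(\tilde{q})$. By \eqref{eq: Etildedef}, $\tilde{E} = \frac{E_\infty}{2}(w + 1/w)$. Since $\beta > \beta_d$ we have $\tilde{q} \ge q_{\min}$, so by \eqref{eq: qmin def}, \eqref{eq: w def}, \eqref{eq: beta2 shape} we get $w \le 1$. On $(0,1]$ the map $x \mapsto \frac{1}{2}(x + 1/x)$ is the inverse of $r_{-}$ (see \eqref{eq: rpmdef as sols}), so $\tilde{E} = E_0$ is equivalent to $w = r_{-}(E_0/E_\infty) = \bar{r}$ by \eqref{eq: rbardef}.

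Next I would use the equivalent form \eqref{eq: qtildedef equiv} of the critical-point equation, namely $2\beta_2(\tilde{q})^2 = (p-1)(1-\tilde{q})$, to translate $w = \bar{r}$ into a condition on $\tilde{q}$: this gives $\bar{r}^2 = (p-1)(1-\tilde{q})$, hence $\tilde{q} = (p-1-\bar{r}^2)/(p-1)$. Plugging this value of $\tilde{q}$ into the original defining equation \eqref{eq: qtildedef}, $(1-\tilde{q})\tilde{q}^{p-2} = 1/(p\beta^2)$, yields
\[
\frac{\bar{r}^2}{p-1}\left(\frac{p-1-\bar{r}^2}{p-1}\right)^{p-2} = \frac{1}{p\beta^2},
\]
i.e., $\beta^2 = (p-1)^{p-1}/(p\,\bar{r}^2(p-1-\bar{r}^2)^{p-2})$, which is exactly $\beta_s^2$ by \eqref{betasdef}. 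This shows $\tilde{E}|_{\beta=\beta_s} = E_0$, and combined with the strict monotonicity of $\tilde{E}$ in $\beta$ finishes the proof.

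The argument is essentially a direct substitution, so there is no real obstacle; the only thing to be careful about is choosing the correct branch $w \le 1$ when inverting $x \mapsto \frac{1}{2}(x+1/x)$ (equivalently, confirming that $\tilde{q}$ lies above $q_{\min}$), which is exactly the content already recorded in Lemma \ref{lem: crit eq sol} for $\beta > \beta_d$.
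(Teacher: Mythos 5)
Your proof is correct and follows essentially the same route as the paper's: both exploit the monotonicity stated in Lemma \ref{lem: crit eq sol} (you use strict monotonicity of $\tilde{E}$ in $\beta$, the paper uses that of $\tilde{q}$, which is equivalent here) and then reduce to verifying that at $\beta=\beta_s$ the critical point equation forces $\tilde{q}=1-\bar{r}^2/(p-1)$, equivalently $\tilde{E}=E_0$. The algebraic check via \eqref{eq: qtildedef equiv} and substitution into \eqref{eq: qtildedef} matches the paper's verification that $p\beta_s^2(1-\hat q)\hat q^{p-2}=1$.
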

\begin{proof}
Letting $w=\sqrt{2}\beta_2(\tilde{q})$ note that
\[
\begin{array}{ll}
\tilde{E}<E_{0} \overset{\eqref{eq: Etildedef}}{\iff} \frac{1}{2}\left(\frac{1}{w}+w\right)<\frac{E_{0}}{E_\infty}
\overset{\eqref{eq: rpmdef}, \eqref{eq: rpmdef as sols}}{\iff} w > r_{-}\left(\frac{E_{0}}{E_\infty}\right)\\
\overset{\eqref{eq: rbardef}, \eqref{eq: qtildedef equiv}}{\iff} \sqrt{1-\tilde{q}}\sqrt{p-1} > \bar{r} 
\iff
\tilde{q}<1-\frac{1}{p-1}\bar{r}^{2}.
\end{array}.
\]
Now recalling that $\tilde{q}$ is increasing in $\beta$ by Lemma \ref{lem: crit eq sol} the claim follows by noting that at $\beta=\beta_s$ we have $\tilde{q}=\hat{q}=1-\frac{1}{p-1}\bar{r}^2$, which can be verified by observing that 
$$ p \beta_s^2(1-\hat{q}) \hat{q}^{p-2} \overset{\eqref{betasdef}}{=}1.$$
\end{proof}

\begin{proof}[Proof of Theorem \ref{thm:dynLTstatHT}]
Let $\beta \in (\beta_d,\beta_s)$. Then by Lemma \ref{lem:ourplefkaprofs} 1) we have $0\not\in D_\beta$ so that from \eqref{eq:varprinc}, \eqref{eq: reformu} and \eqref{eq: Emin is Einf over betad}
\beq \label{eq: sup over E q}
\fTAP(\beta) =\sup_{E \in [E_{\infty},E_0],q\in[q_{min},1)} \left\{ f(E,q) + I(E) \right\}.
\eeq
By Lemma \ref{lem:solveopti} the supremum is $f(\tilde{E},\tilde{q})=\beta^2/2$, attained uniquely at $(\tilde{E},\tilde{q})$ with $\tilde{E}>E_{\infty}$, as long as the maximizer $\tilde{E}$ from Lemma \ref{lem:solveopti} satisfies $\tilde{E}\in(E_\infty,E_0]$, since then the global maximizer is within the region where $I_{\Ann}= I$. This with the previous lemma proves \eqref{eq: FE dyn high temp}. By \eqref{eq: qstar dyn low temp} we get that $\tilde{q}$ is the unique solution to \eqref{eq: qstar dyn low temp}. Since the supremum in \eqref{eq: sup over E q} is uniquely attained, we get from \eqref{eq: TOT TAP FE from f and I} and the fact that $E_U$ is a bijection from $[U_{\min},\infty)$ to $[E_{\min},\infty)$ that the supremum in \eqref{eq: def E TAP star} is uniquely attained, so that $U_*,V_*,q_*$ are well-defined and $U_*=f(\tilde{E},q_*),V_*=q_{*}^{p/2}\tilde{E},q_*=\tilde{q}$, which implies \eqref{eq: Estar dyn LT stat HT} and \eqref{eq: Ustar dyn LT stat HT} (recall \eqref{eq: Etildedef}). That $\tilde{E} \in (E_{\infty}, E_0)$ together with \eqref{eq: Umax Umin def} implies \eqref{eq:EstarnotGS}. Lastly the identity \eqref{eq: entropy dyn low temp} follows from \eqref{eq: entropy at max} and the inequality since $\tilde{E} \in (E_{\infty},E_0)$  and \eqref{eq: I props}.
\end{proof}

\begin{proof}[Proof of Theorem \ref{thm:fstthrm}]
Let $\beta >\beta_s$. By Lemma \ref{lem:ourplefkaprofs} 1) we have $0\not\in D_\beta$ so that from \eqref{eq:varprinc} and \eqref{eq: reformu}
\beq\label{eq: fTAP sup over E}
\fTAP(\beta) =\sup_{E \in [E_{\infty},E_0]} \left\{ f(E,q_E(E)) + I(E) \right\}
\eeq
Note that
\begin{equation}\label{eq: max over E}
    E \to f(E,q_E(E)) + I(E)
\end{equation} 
has exactly one critical point $E \ge E_{\infty}$ by \eqref{lem: crit eq sol}, \eqref{eq: useful ident} and \eqref{eq: qE glob max}. Also
$$ \frac{d}{d E}\left(f(E,\qE(E))+I(E)\right)|_{E=E_{\infty}} \overset{\eqref{eq: qE glob max}}{=} {\partial_E}f(E_{\infty},q_{\min}) + I'(E_{\infty}) \overset{\eqref{eq: deriv at Einf}}{>}0. 
$$
This shows that the unique maximizer $(\tilde{E},\tilde{q})$ of $f(E,q_E(E))+ I_{\Ann}(E)$ from Lemma \ref{lem:solveopti2} satisfies
$\tilde{E}>E_0$ when $\beta>\beta_s$ (recall \eqref{eq: E0 betas equiv}) the maximizer of \eqref{eq: max over E} in $[E_{\infty},E_0]$ must be $E=E_0$, and
$$\fTAP(\beta) = f(E_0, q_E(E_0)) + I(E_0) = f(E_0, q_E(E_0)) < f(\tilde{E},q_E(\tilde{E}))=\frac{\beta^2}{2}.$$
This proves \eqref{eq: FE low temp}.
By \eqref{eq: qE as sol conc} we have that $q_*=q_E(E_0)$ is the unique solution to \eqref{eq: qstar low temp}. By \eqref{eq: TOT TAP FE from f and I} and the fact the supremum in \eqref{eq: fTAP sup over E} is uniquely attained, so that \eqref{eq: def E TAP star} is well-defined and \eqref{eq: Vstar Ustar ITAPstar low temp} (a), (b) and \eqref{eq: FE low temp} follow. Since $f(E_0, q_E(E_0))=h_{\TAP}(q^{p/2}E_0,q)$ and using \eqref{eq: Umax as sup}-\eqref{eq: Umax form} we also obtain \eqref{eq: fTAP static low temp}.
\end{proof}
\printbibliography

@article{purefstmom,
  title = {Random {{Matrices}} and {{Complexity}} of {{Spin Glasses}}},
  author = {Auffinger, Antonio and Arous, G{\'e}rard Ben and {\v C}ern{\'y}, Ji{\v r}{\'i}},
  year = {2013},
  month = feb,
  volume = {66},
  pages = {165--201},
  issn = {00103640},
  doi = {10.1002/cpa.21422},
  url = {http://doi.wiley.com/10.1002/cpa.21422},
  urldate = {2020-05-08},
  abstract = {We give an asymptotic evaluation of the complexity of spherical p-spin spin glass models via random matrix theory. This study enables us to obtain detailed information about the bottom of the energy landscape, including the absolute minimum (the ground state), and the other local minima, and describe an interesting layered structure of the low critical values for the Hamiltonians of these models. We also show that our approach allows us to compute the related TAPcomplexity and extend the results known in the physics literature. As an independent tool, we prove a large deviation principle for the kth-largest eigenvalue of the Gaussian orthogonal ensemble, extending the results of Ben Arous, Dembo, and Guionnet. \textcopyright{} 2012 Wiley Periodicals, Inc.},
  file = {C\:\\Users\\david\\Zotero\\storage\\MQ5HZPYA\\Auffinger et al. - 2013 - Random Matrices and Complexity of Spin Glasses.pdf},
  journal = {Communications on Pure and Applied Mathematics},
  language = {en},
  number = {2}
}

@article{DavidNicola,
  title = {The {{TAP}}\textendash{{Plefka}} Variational Principle for the Spherical {{SK}} Model},
  author = {Belius, David and Kistler, Nicola},
  year = {2019},
  volume = {367},
  pages = {991--1017},
  publisher = {{Springer}},
  journal = {Communications in Mathematical Physics},
  number = {3}
}

@article{origin1,
  title = {The Sphericalp-Spin Interaction Spin Glass Model: The Statics},
  shorttitle = {The Sphericalp-Spin Interaction Spin Glass Model},
  author = {Crisanti, A. and Sommers, H. -J.},
  year = {1992},
  month = oct,
  volume = {87},
  pages = {341--354},
  issn = {0722-3277, 1434-6036},
  doi = {10.1007/BF01309287},
  url = {http://link.springer.com/10.1007/BF01309287},
  urldate = {2021-06-17},
  file = {C\:\\Users\\david\\Zotero\\storage\\WXMTR3IS\\Crisanti och Sommers - 1992 - The sphericalp-spin interaction spin glass model .pdf},
  journal = {Zeitschrift f\"ur Physik B Condensed Matter},
  language = {en},
  number = {3}
}

@article{origin2,
  title = {The Sphericalp-Spin Interaction Spin-Glass Model: {{The}} Dynamics},
  shorttitle = {The Sphericalp-Spin Interaction Spin-Glass Model},
  author = {Crisanti, A. and Horner, H. and Sommers, H. -J.},
  year = {1993},
  month = jun,
  volume = {92},
  pages = {257--271},
  issn = {0722-3277, 1434-6036},
  doi = {10.1007/BF01312184},
  url = {http://link.springer.com/10.1007/BF01312184},
  urldate = {2020-07-05},
  abstract = {The relaxational dynamics for local spin autocorrelations of the spherical p-spin interaction spin-glass model is studied in the mean field limit. In the high temperature and high external field regime, the dynamics is ergodic and similar to the behaviour in known liquid-glass transition models. In the static limit, we recover the replica symmetric solution for the long time correlation. This phase becomes unstable on a critical line in the (T, h) plane, where critical slowing down is observed with a cross-over to power law decay of the correlation function o( t - \textasciitilde{} , with an exponent u varying along the critical line. For low temperatures and low fields, ergodicity in phase space is broken. For small fields the transition is discontinuous, and approaching this transition from above, two tong time scales are seen to emerge. This dynamical transition lies at a somewhat higher temperature than the one obtained within replica theory. For larger fields the transition becomes continuous at some tricritical point. The low temperature phase with broken ergodicity is studied within a modified equilibrium theory and alternatively for adiabatic cooling across the transition line. This latter scheme yields rather detailed insight into the formation and structure of the ergodic components.},
  file = {C\:\\Users\\david\\Dropbox\\matematik\\forskning\\2014 spin glasses\\TAP approach\\papers\\physics papers\\Crisanti, Horner, Sommers - The spherical p-spin interaction spin-glass model.pdf},
  journal = {Zeitschrift f\"ur Physik B Condensed Matter},
  language = {en},
  number = {2}
}

@article{CrisantiLeuzzi,
  title = {Spherical 2 + p Spin-Glass Model: {{An}} Analytically Solvable Model with a Glass-to-Glass Transition},
  shorttitle = {Spherical 2 + p Spin-Glass Model},
  author = {Crisanti, A. and Leuzzi, L.},
  year = {2006},
  month = jan,
  volume = {73},
  pages = {014412},
  issn = {1098-0121, 1550-235X},
  doi = {10.1103/PhysRevB.73.014412},
  url = {https://link.aps.org/doi/10.1103/PhysRevB.73.014412},
  urldate = {2020-07-05},
  file = {C\:\\Users\\david\\Dropbox\\matematik\\forskning\\2014 spin glasses\\TAP approach\\papers\\physics papers\\Crisanti Leuzzi - Spherical 2+p sin glass models, an analytically solvable model with a glass-to-glass trans.pdf},
  journal = {Physical Review B},
  language = {en},
  number = {1}
}

@article{subagpure,
  title = {The Complexity of Spherical {{ p }}-Spin Models\textemdash{{A}} Second Moment Approach},
  author = {Subag, Eliran and others},
  year = {2017},
  volume = {45},
  pages = {3385--3450},
  publisher = {{Institute of Mathematical Statistics}},
  journal = {The Annals of Probability},
  number = {5}
}

@article{TAP1,
  title = {Solution of '{{Solvable}} Model of a Spin Glass'},
  author = {Thouless, D. J. and Anderson, P. W. and Palmer, R. G.},
  year = {1977},
  month = mar,
  volume = {35},
  pages = {593--601},
  issn = {0031-8086},
  doi = {10.1080/14786437708235992},
  url = {http://www.tandfonline.com/doi/abs/10.1080/14786437708235992},
  urldate = {2020-07-05},
  file = {C\:\\Users\\david\\Dropbox\\matematik\\forskning\\2014 spin glasses\\TAP approach\\papers\\physics papers\\TAP - Solution of Solvable model of a spin glass.pdf},
  journal = {Philosophical Magazine},
  language = {en},
  number = {3}
}

@misc{TAPHTUB,
  doi = {10.48550/ARXIV.2204.00681},
  
  url = {https://arxiv.org/abs/2204.00681},
  
  author = {Belius, David},
  
  keywords = {Probability (math.PR), Mathematical Physics (math-ph), FOS: Mathematics, FOS: Mathematics, FOS: Physical sciences, FOS: Physical sciences, 60K40, 82B44, 82D30},
  
  title = {High temperature TAP upper bound for the free energy of mean field spin glasses},
  
  publisher = {arXiv},
  
  year = {2022},
  
  copyright = {arXiv.org perpetual, non-exclusive license}
}

@book{mezard1987spin,
	title = {Spin glass theory and beyond: {An} introduction to the replica method and its applications},
	volume = {9},
	publisher = {World Scientific Publishing Company},
	author = {Mézard, Marc and Parisi, Giorgio and Virasoro, Miguel},
	year = {1987},
}

@article{brayMetastableStatesSpin1980,
	title = {Metastable states in spin glasses},
	volume = {13},
	number = {19},
	journal = {Journal of Physics C: Solid State Physics},
	author = {Bray, Alan J and Moore, Michael A},
	year = {1980},
	note = {Publisher: IOP Publishing},
	pages = {L469}
}

@article{dedominicisWeightedAveragesOrder1983,
	title = {Weighted averages and order parameters for the infinite range {Ising} spin glass},
	volume = {16},
	number = {9},
	journal = {Journal of Physics A: Mathematical and General},
	author = {De Dominicis, Cyrano and Young, A Peter},
	year = {1983},
	note = {Publisher: IOP Publishing},
	pages = {2063},
}

@article{grossSimplestSpinGlass1984,
	title = {The simplest spin glass},
	volume = {240},
	issn = {05503213},
	url = {https://linkinghub.elsevier.com/retrieve/pii/0550321384902372},
	doi = {10.1016/0550-3213(84)90237-2},
	language = {en},
	number = {4},
	urldate = {2021-06-17},
	journal = {Nuclear Physics B},
	author = {Gross, D.J. and Mezard, M.},
	month = nov,
	year = {1984},
	pages = {431--452}
}

@article{cugliandoloAnalyticalSolutionOffequilibrium1993,
	title = {Analytical solution of the off-equilibrium dynamics of a long-range spin-glass model},
	volume = {71},
	issn = {0031-9007},
	url = {https://link.aps.org/doi/10.1103/PhysRevLett.71.173},
	doi = {10.1103/PhysRevLett.71.173},
	language = {en},
	number = {1},
	urldate = {2020-07-05},
	journal = {Physical Review Letters},
	author = {Cugliandolo, L. F. and Kurchan, J.},
	month = jul,
	year = {1993},
	pages = {173--176}
}

@article{crisantiThoulessAndersonPalmerApproachSpherical1995,
	title = {Thouless-{Anderson}-{Palmer} {Approach} to the {Spherical} p-{Spin} {Spin} {Glass} {Model}},
	volume = {5},
	issn = {1155-4304, 1286-4862},
	url = {http://www.edpsciences.org/10.1051/jp1:1995164},
	doi = {10.1051/jp1:1995164},
	language = {en},
	number = {7},
	urldate = {2020-07-05},
	journal = {Journal de Physique I},
	author = {Crisanti, A. and Sommers, H.-J.},
	month = jul,
	year = {1995},
	pages = {805--813}
}

@article{cavagnaFormalEquivalenceTAP2003,
	title = {On the formal equivalence of the {TAP} and thermodynamic methods in the {SK} model},
	volume = {36},
	number = {5},
	journal = {Journal of Physics A: Mathematical and General},
	author = {Cavagna, Andrea and Giardina, Irene and Parisi, Giorgio and Mézard, Marc},
	year = {2003},
	note = {Publisher: IOP Publishing},
	pages = {1175}
}

@book{talagrand2010mean,
	title = {Mean field models for spin glasses: {Volume} {I}: {Basic} examples},
	volume = {54},
	publisher = {Springer Science \& Business Media},
	author = {Talagrand, Michel},
	year = {2010},
}

@article{chatterjeeSpinGlassesStein2010,
	title = {Spin glasses and {Stein}’s method},
	volume = {148},
	issn = {0178-8051, 1432-2064},
	url = {http://link.springer.com/10.1007/s00440-009-0240-8},
	doi = {10.1007/s00440-009-0240-8},
	language = {en},
	number = {3-4},
	urldate = {2021-06-17},
	journal = {Probability Theory and Related Fields},
	author = {Chatterjee, Sourav},
	month = nov,
	year = {2010},
	pages = {567--600}
}

@article{auffingerThoulessAndersonPalmer2019,
	title = {Thouless–{Anderson}–{Palmer} equations for generic \$p\$-spin glasses},
	volume = {47},
	issn = {0091-1798},
	url = {https://projecteuclid.org/journals/annals-of-probability/volume-47/issue-4/ThoulessAndersonPalmer-equations-for-generic-p-spin-glasses/10.1214/18-AOP1307.full},
	doi = {10.1214/18-AOP1307},
	number = {4},
	urldate = {2021-06-17},
	journal = {The Annals of Probability},
	author = {Auffinger, Antonio and Jagannath, Aukosh},
	month = jul,
	year = {2019}
}

@article{auffingerSpinDistributionsGeneric2019,
	title = {On {Spin} {Distributions} for {Generic} p-{Spin} {Models}},
	volume = {174},
	issn = {0022-4715, 1572-9613},
	url = {http://link.springer.com/10.1007/s10955-018-2188-5},
	doi = {10.1007/s10955-018-2188-5},
	language = {en},
	number = {2},
	urldate = {2021-06-17},
	journal = {Journal of Statistical Physics},
	author = {Auffinger, Antonio and Jagannath, Aukosh},
	month = jan,
	year = {2019},
	pages = {316--332}
}

@article{chenTAPFreeEnergy2018,
	title = {On the {TAP} {Free} {Energy} in the {Mixed} p-{Spin} {Models}},
	volume = {362},
	issn = {0010-3616, 1432-0916},
	url = {http://link.springer.com/10.1007/s00220-018-3143-7},
	doi = {10.1007/s00220-018-3143-7},
	language = {en},
	number = {1},
	urldate = {2021-06-17},
	journal = {Communications in Mathematical Physics},
	author = {Chen, Wei-Kuo and Panchenko, Dmitry},
	month = aug,
	year = {2018},
	pages = {219--252}
}

@article{chenGeneralizedTAPFree2021,
	title = {The {Generalized} {TAP} {Free} {Energy} {II}},
	volume = {381},
	issn = {0010-3616, 1432-0916},
	url = {http://link.springer.com/10.1007/s00220-020-03887-x},
	doi = {10.1007/s00220-020-03887-x},
	language = {en},
	number = {1},
	urldate = {2021-06-17},
	journal = {Communications in Mathematical Physics},
	author = {Chen, Wei-Kuo and Panchenko, Dmitry and Subag, Eliran},
	month = jan,
	year = {2021},
	pages = {257--291}
}

@article{chenGeneralizedTAPFree2022,
	title = {Generalized {TAP} {Free} {Energy}},
	issn = {0010-3640, 1097-0312},
	url = {https://onlinelibrary.wiley.com/doi/10.1002/cpa.22040},
	doi = {10.1002/cpa.22040},
	language = {en},
	urldate = {2022-03-22},
	journal = {Communications on Pure and Applied Mathematics},
	author = {Chen, Wei‐Kuo and Panchenko, Dmitry and Subag, Eliran},
	month = jan,
	year = {2022},
	pages = {cpa.22040}
}

@article{zhoufanTAPFreeEnergy2021,
	title = {{TAP} free energy, spin glasses and variational inference},
	volume = {49},
	url = {https://doi.org/10.1214/20-AOP1443},
	doi = {10.1214/20-AOP1443},
	number = {1},
	journal = {The Annals of Probability},
	author = {{Zhou Fan} and {Song Mei} and {Andrea Montanari}},
	month = jan,
	year = {2021},
	pages = {1--45}
}

@article{bolthausenIterativeConstructionSolutions2014,
	title = {An {Iterative} {Construction} of {Solutions} of the {TAP} {Equations} for the {Sherrington}–{Kirkpatrick} {Model}},
	volume = {325},
	issn = {0010-3616, 1432-0916},
	url = {http://link.springer.com/10.1007/s00220-013-1862-3},
	doi = {10.1007/s00220-013-1862-3},
	language = {en},
	number = {1},
	urldate = {2021-06-11},
	journal = {Communications in Mathematical Physics},
	author = {Bolthausen, Erwin},
	month = jan,
	year = {2014},
	pages = {333--366}
}

@article{subagGeometryGibbsMeasure2017,
	title = {The geometry of the {Gibbs} measure of pure spherical spin glasses},
	volume = {210},
	issn = {0020-9910, 1432-1297},
	url = {http://link.springer.com/10.1007/s00222-017-0726-4},
	doi = {10.1007/s00222-017-0726-4},
	language = {en},
	number = {1},
	urldate = {2022-03-22},
	journal = {Inventiones mathematicae},
	author = {Subag, Eliran},
	month = oct,
	year = {2017},
	pages = {135--209}
}

@article{subagFreeEnergyLandscapes2020,
	title = {Free energy landscapes in spherical spin glasses},
	url = {http://arxiv.org/abs/1804.10576},
	abstract = {We introduce and analyze free energy landscapes defined by associating to any point inside the sphere a free energy calculated on a thin spherical band around it, using many orthogonal replicas. This allows us to reinterpret, rigorously prove and extend for general spherical models the main ideas of the Thouless-Anderson-Palmer (TAP) approach originally introduced in the 70s for the Sherrington-Kirkpatrick model. In particular, we establish a TAP representation for the free energy, valid for any overlap value which can be sampled as many times as we wish in an appropriate sense. We call such overlaps multi-samplable. The correction to the Hamiltonian in the TAP representation arises in our analysis as the free energy of a certain model on an overlap dependent band. For the largest multi-samplable overlap it coincides with the Onsager reaction term from physics. For smaller multi-samplable overlaps the formula we obtain is new. We also derive the corresponding TAP equation for critical points. We prove all the above without appealing to the celebrated Parisi formula or the ultrametricity property. We prove that any overlap value in the support of the Parisi measure is multi-samplable. For generic models, we further show that the set of multi-samplable overlaps coincides with a certain set that arises in the characterization for the Parisi measure by Talagrand. The ultrametric tree of pure states can be embedded in the interior of the sphere in a natural way. For this embedding, we show that the points on the tree uniformly maximize the free energies we define. From this we conclude that the Hamiltonian at each point on the tree is approximately maximal over the sphere of same radius, and that points on the tree approximately solve the TAP equations for critical points.},
	urldate = {2020-07-13},
	journal = {arXiv:1804.10576 [math]},
	author = {Subag, Eliran},
	month = jun,
	year = {2020},
	note = {arXiv: 1804.10576},
	keywords = {Mathematics - Probability},
	annote = {Comment: significant changes and improvements from previous version}
}

@article{arousShatteringMetastabilitySpin2021,
	title = {Shattering {Versus} {Metastability} in {Spin} {Glasses}},
	url = {http://arxiv.org/abs/2104.08299},
	abstract = {Our goal in this work is to better understand the relationship between replica symmetry breaking, shattering, and metastability. To this end, we study the static and dynamic behaviour of spherical pure \$p\$-spin glasses above the replica symmetry breaking temperature \$T\_\{s\}\$. In this regime, we find that there are at least two distinct temperatures related to non-trivial behaviour. First we prove that there is a regime of temperatures in which the spherical \$p\$-spin model exhibits a shattering phase. Our results holds in a regime above but near \$T\_s\$. We then find that metastable states exist up to an even higher temperature \$T\_\{BBM\}\$ as predicted by Barrat--Burioni--M{\textbackslash}'ezard which is expected to be higher than the phase boundary for the shattering phase \$T\_d {\textless}T\_\{BBM\}\$. We develop this work by first developing a Thouless--Anderson--Palmer decomposition which builds on the work of Subag. We then present a series of questions and conjectures regarding the sharp phase boundaries for shattering and slow mixing.},
	urldate = {2021-07-02},
	journal = {arXiv:2104.08299 [math-ph]},
	author = {Arous, Gérard Ben and Jagannath, Aukosh},
	month = apr,
	year = {2021},
	note = {arXiv: 2104.08299},
	keywords = {Mathematical Physics, Mathematics - Probability}
}

@article{benarousGeometryTemperatureChaos2020,
	title = {Geometry and {Temperature} {Chaos} in {Mixed} {Spherical} {Spin} {Glasses} at {Low} {Temperature}: {The} {Perturbative} {Regime}},
	volume = {73},
	issn = {0010-3640, 1097-0312},
	shorttitle = {Geometry and {Temperature} {Chaos} in {Mixed} {Spherical} {Spin} {Glasses} at {Low} {Temperature}},
	url = {https://onlinelibrary.wiley.com/doi/abs/10.1002/cpa.21875},
	doi = {10.1002/cpa.21875},
	language = {en},
	number = {8},
	urldate = {2021-06-17},
	journal = {Communications on Pure and Applied Mathematics},
	author = {Ben Arous, Gérard and Subag, Eliran and Zeitouni, Ofer},
	month = aug,
	year = {2020},
	pages = {1732--1828},
}

@article{subagFreeEnergySpherical2021,
	title = {The free energy of spherical pure {\textless}pre{\textgreater}\$p\${\textless}/pre{\textgreater}-spin models -- computation from the {TAP} approach},
	url = {http://arxiv.org/abs/2101.04352},
	abstract = {We compute the free energy at all temperatures for the spherical pure \$p\$-spin models from the generalized Thouless-Anderson-Palmer representation. This is the first example of a mixed \$p\$-spin model for which the free energy is computed in the whole replica symmetry breaking phase, without appealing to the famous Parisi formula.},
	urldate = {2021-06-17},
	journal = {arXiv:2101.04352 [cond-mat]},
	author = {Subag, Eliran},
	month = apr,
	year = {2021},
	note = {arXiv: 2101.04352},
	keywords = {Mathematics - Probability, Condensed Matter - Statistical Mechanics}
}

@article{brenneckeNoteReplicaSymmetric2021,
	title = {A {Note} on the {Replica} {Symmetric} {Formula} for the {SK} {Model}},
	url = {http://arxiv.org/abs/2109.07354},
	abstract = {We provide a simple extension of Bolthausen's Morita type proof {\textbackslash}cite\{Bolt2\} of the replica symmetric formula for the Sherrington-Kirkpatrick (SK) model and prove the replica symmetry for all \$({\textbackslash}beta,h)\$ that satisfy \${\textbackslash}beta{\textasciicircum}2 E{\textbackslash}, {\textbackslash}text\{sech\}{\textasciicircum}2({\textbackslash}beta{\textbackslash}sqrt\{q\}Z+h) {\textbackslash}leq 1\$, where \$q = E{\textbackslash}tanh{\textasciicircum}2({\textbackslash}beta{\textbackslash}sqrt\{q\}Z+h)\$. Compared to {\textbackslash}cite\{Bolt2\}, the key of the argument is to apply the conditional second moment method to a suitably reduced partition function.},
	urldate = {2021-09-22},
	journal = {arXiv:2109.07354 [math-ph]},
	author = {Brennecke, Christian and Yau, Horng-Tzer},
	month = sep,
	year = {2021},
	note = {arXiv: 2109.07354},
	keywords = {Mathematical Physics, Mathematics - Probability},
	annote = {Comment: 15 pages}
}

@article{kurchanBarriersMetastableStates1993,
	title = {Barriers and metastable states as saddle points in the replica approach},
	volume = {3},
	issn = {1155-4304, 1286-4862},
	url = {http://www.edpsciences.org/10.1051/jp1:1993217},
	doi = {10.1051/jp1:1993217},
	language = {en},
	number = {8},
	urldate = {2020-07-08},
	journal = {Journal de Physique I},
	author = {Kurchan, J. and Parisi, G. and Virasoro, M. A.},
	month = aug,
	year = {1993},
	pages = {1819--1838},
	file = {Kurchan, Parisi, Virasoro - Barriers and metastable states as saddle points in the replica (Parisi TAP spherical).pdf:C\:\\Users\\david\\Dropbox\\matematik\\forskning\\2014 spin glasses\\TAP approach\\papers\\physics papers\\Kurchan, Parisi, Virasoro - Barriers and metastable states as saddle points in the replica (Parisi TAP spherical).pdf:application/pdf},
}

@article{barratDynamicsMetastableStates1996,
	title = {Dynamics within metastable states in a mean-field spin glass},
	volume = {29},
	issn = {0305-4470, 1361-6447},
	url = {https://iopscience.iop.org/article/10.1088/0305-4470/29/5/001},
	doi = {10.1088/0305-4470/29/5/001},
	number = {5},
	urldate = {2022-07-03},
	journal = {Journal of Physics A: Mathematical and General},
	author = {Barrat, A and Burioni, R and Mézard, M},
	month = mar,
	year = {1996},
	pages = {L81--L87},
	file = {Inskickad version:C\:\\Users\\david\\Zotero\\storage\\28VK3NX9\\Barrat m. fl. - 1996 - Dynamics within metastable states in a mean-field .pdf:application/pdf},
}

\end{document}